\documentclass[onecolumn]{IEEEtran}

\usepackage{macro}

\begin{document}

\title{The Capacity of Causal Adversarial Channels}

\author{\IEEEauthorblockN{%
Yihan Zhang, 
Sidharth Jaggi, 
Michael Langberg, 
Anand D.~Sarwate
}%
\thanks{Y.~Zhang 
is with the 
Institute of Science and Technology Austria, 
Vienna, Austria, 
\texttt{zephyr.z798@gmail.com}. 
S.~Jaggi 
is with the 
University of Bristol, 
Bristol, UK, 
\texttt{sid.jaggi@bristol.ac.uk}. 
Michael Langberg
is with the 
University at Buffalo, Buffalo, NY, USA, 
\texttt{mikel@buffalo.edu}. 
Anand D.~Sarwate 
is with  
Rutgers, The State University of New Jersey, 
Piscataway, NJ, USA, 
\texttt{anand.sarwate@rutgers.edu}.
}%
\thanks{The work of ADS and ML was supported in part by the US National Science Foundation under awards CCF-1909468 and CCF-1909451.}
}

\maketitle

\begin{abstract}
We characterize the capacity for the discrete-time arbitrarily varying channel with discrete inputs, outputs, and states when
(a) the encoder and decoder do not share common randomness, 
(b) the input and state are subject to cost constraints, 
(c) the transition matrix of the channel is deterministic given the state, 
and 
(d) at each time step the adversary can only observe the current and past channel inputs when choosing the state at that time.
The achievable strategy involves stochastic encoding together with list decoding and a disambiguation step. The converse uses a two-phase ``babble-and-push'' strategy where the adversary chooses the state randomly in the first phase, list decodes the output, and then chooses state inputs to symmetrize the channel in the second phase. These results generalize prior work on specific channels models (additive, erasure) to general discrete alphabets and models.
%\mikel{ML: Do we want to say something regarding the 1-dimensional nature of the state space?}
\end{abstract}

\maketitle

\tableofcontents

\newpage

\section{Introduction}
\label{sec:intro}

% !TEX root = causal_general.tex

In introductory courses on information theory and coding theory students encounter two basic models for communication channels. The Shannon-theoretic model~\cite{shannon_mathematical_1948} for memoryless channels treats the effect of the channel as random, where each input symbol is transformed to an output symbol through the same conditional distribution at each time step. Two canonical examples are the binary symmetric channel (BSC) and binary erasure channel (BEC). With high probability, for sufficiently large $n$, a BSC flips close to $pn$ bits for a codeword of blocklength $n$ and the probability of error is \emph{average-case}, measured over the randomness in the channel. By contrast, in the basic coding theory model, errors and erasures are modeled as \emph{worst-case}: for a blocklength $n$ the goal is to design a code which can correct any pattern of $pn$ errors or erasures. 

One way to understand the difference between these models is to frame them both in the context of  arbitrarily varying channels (AVCs)~\cite{blackwell-avc-1960} under constraints~\cite{csiszar-narayan-it1988,csiszar-narayan-it1988-2}. In the AVC there are three participants: Alice (the transmitter/encoder), Bob (the receiver/decoder), and James (an adversarial jammer).
When communicating over an AVC, Alice encodes her message into a codeword $\vx$ of blocklength $n$ and James can choose an equal-length vector of channel states $\vs$. The output $\vy$ is formed by applying a channel law $W_{\bfy|\bfx,\bfs}(y | x, s)$ letter-by-letter to $(\vx,\vs)$. The difference between the two classical communication models can be captured by modeling the information James has about the transmitted codeword. The Shannon-theoretic model is similar to an \emph{oblivious adversary} who must choose $\vs$ without any knowledge of $\vx$. The coding-theoretic model is similar to a \emph{omniscient adversary} in which James can choose $\vs$ as a function of the entire codeword $\vx$.

Once we frame the difference between average and worst case models in terms of the AVC, a variety of ``intermediate case'' models become natural by changing what James can know about the transmitted codeword. In this paper we consider one such model: the \emph{causal (or online) adversary} in which James chooses the channel state $\vs(t)$ at time $t$ based on knowledge of the current and past inputs $(\vx(1), \vx(2), \ldots, \vx(t))$. The online adversary is a special case of the \emph{delayed adversary}~\cite{langberg_binary_2009,dey_coding_2010}. 
%In these models James observes a delayed version of the transmitted codeword, i.e., at time $t$ he observes inputs up to time $t - \Delta$: our model corresponds to delay $\Delta = 0$.   For longer delay $\Delta = \delta n$ the capacity is the same as the oblivious adversary for additive channels~\cite{DeyJLS:10isit}.

%\subsection*{Related work}

%Due to space constraints, we cannot provide a full treatment of the AVC literature but instead focus on those works most relevant to the current paper.  
%The online adversary is a special case of the \emph{delayed adversary}~\cite{langberg_binary_2009,dey_coding_2010}. In these models James can observe a delayed version of the transmitted codeword, so at time $t$ he can observe inputs up to time $t - \Delta$: our model corresponds to delay $\Delta = 0$.  While for longer delay $\Delta = \delta n$ the capacity is the same as the oblivious adversary~\cite{DeyJLS:10isit}, there is a difference between $\Delta = 0$ and $\Delta = 1$. 
Much of the prior work on causal adversaries deals with specific channel models. Capacity results for special cases of causal adversaries with ``large alphabets''~\cite{dey-delayed-it2013},  the erasure setting~\cite{bassily-smith-soda2014,chen-et-al,chen2019capacity}, the bit-flip/symbol-error setting~\cite{dey-causal-it2013,chen2019capacity}, and the quadratically-constrained scenario~\cite{li2018quadratically} are known. Other related channel models explored include settings with a memoryless jammer~\cite{mazumdar2014capacity}, and bit-flip and erasure models in which the channel is not state-deterministic but James can observe the channel output~\cite{VinayakRL:21snooping}.

In this work we focus on AVCs with finite input, state, and output alphabets which are \emph{state-deterministic}, meaning the the channel output $\vy(t)$ at each time $t$ is a deterministic function of $\vx(t)$ and $\vs(t)$. In such models, James can compute the channel output.  
%The goal of the present work
Our goal is to establish capacity results for general state-deterministic AVC models with cost constraints. %As we will see, this requires a more complex analysis for both the achievability and converse. %\ads{expand on what is more complicated here in an understandable manner}

%\ads{Paragraph on the main proof ideas/insights? Add after the main emphasis points in the proof sketch are settled.}

After defining our model in Section~\ref{sec:model}, and key concepts in Section~\ref{sec:achievability}, we present an overview of our capacity analysis in Section~\ref{sec:proofsketch}. 
At a high level, both our achievability and converse proofs follow those appearing in \cite{dey-causal-it2013,chen2019capacity} addressing the causal bit-flip/symbol-error setting.
The main technical contribution in this work thus lies in the highly non-trivial nature of expanding the concepts and analysis in \cite{dey-causal-it2013,chen2019capacity} to fit the generalized model of AVCs with both state and input constraints.
We highlight the major challenges of analyzing general AVCs and the tools used to overcome these challenges in the overview of Section~\ref{sec:proofsketch}.

%For the online model we consider here, Haviv and Langberg}~\cite{haviv-langberg-isit2011} show that the classical Gilbert-Varshamov bound~\cite{gilbert_comparison_1952,varshamov_estimate_1957} for the omniscient case can be improved. Improved upper bounds were shown by Dey et al~\cite{dey-causal-it2013}. 

\section{Proof skecth}
\label{sec:proofsketch}
% !TEX root = causal_general.tex

The converse argument, broadly speaking, generalizes the two-phase ``babble-and-push" jamming strategy and analysis introduced in~\cite{dey-causal-it2013,chen2019capacity} in the bit-flip/symbol-error setting, with some novelties required for the general setting. In particular the recently developed Generalized Plotkin bound~\cite{wbbj-2019-omni-avc} plays a critical role, since via combinatorial arguments it guarantees that regardless of Alice's encoding strategy, with positive probability the joint type of randomly sampled pairs of codeword suffixes will fall within a restricted range (just as the classical Plotkin bound for binary codes can be used to argue that for any code of positive rate, for any $\varepsilon > 0$ the Hamming distance of a randomly chosen pair of codewords does not exceed $\frac{n}{2}(1+\varepsilon)$ with  probability $\Omega(\varepsilon)$). In particular, the Generalized Plotkin bound ensures that with positive probability the joint type of a randomly sampled pairs of codewords (or, in the converse argument here, randomly sampled codeword suffixes) is a convex combination of product distributions. The implication of these types of bounds is that if James just wishes his attack to work with positive probability bounded away from zero, rather than probability 1, then in general he can use a significantly less costly state sequence (just as in the binary setting, with  probability $\Omega(\varepsilon)$ he only needs to flip about $\frac{n}{4}(1+\varepsilon)$ bits to confuse Bob between the truly transmitted codeword and some other codeword which differ in $\frac{n}{2}(1+\varepsilon)$ bits, rather than $\frac{n}{2}$ as in the worst-case when the two codewords differ in every bit). 

Let $\varepsilon$ be an arbitrarily small positive constant that will help specify various slack parameters in what follows. In particular, James' attack strategy operates on consecutive length-$\varepsilon n$ chunks of Alice's codewords, and hence operates on $K = 1/\varepsilon$ chunks. In James' attack below, summarized first for the special case when Alice employs a deterministic coding strategy (i.e., her transmission is a deterministic function of the message she wishes to transmit to Bob). Steps \ref{step:subcode-extraction} and \ref{step:choose-jamming-distr} correspond to James analyzing Alice and Bob's chosen codebook $\cC$ prior to transmission, and steps \ref{step:babble} and \ref{step:push} comprise respectively the babble and push phases during the actual transmission.
\begin{enumerate}
	\item {\it \underline{Sub-code extraction}:} 
	\label{step:subcode-extraction}
	First, given Alice's codebook ${\cC}$ of rate $R$, James finds a subcode $\cC' \subseteq \cC$ which has both the following two properties: 
	\begin{enumerate}
		\item[(a)] 	The subcode $\cC'$ contains a constant fraction of the codewords in $\cC$, and
		\item[(b)] Each codeword in $\cC'$ is {\it chunk-wise $\delta$-approximately constant composition}, i.e., for each codeword $\vx \in \cC'$ and each chunk with index $u \in \{1,\ldots, K\}$, the {\it $u$-th chunk-wise type of $\vx$}, denoted $\type_{\vx^{(u)}}$ and defined as the type of $\vx$ restricted to symbols in the $u$-th chunk, differs in the $\ell_\infty$ norm from a given distribution $P_{\bfx|\bfu = u}$ by at most $\delta$. 
	\end{enumerate}
 Such a sub-code can always be extracted for the following reason. There are $1/\epsilon$ chunks, and so the simplex of all possible chunkwise-types of a length-$n$ code can be $\delta$-approximated by a $\delta$-net with $\cO((1/\delta)^{|\cX|/\epsilon})$ elements. Hence picking the cell of the $\delta$-net with the most codewords in it gives us the subcode $\cC'$ with a constant fraction (at least $\Omega(\delta^{|\cX|/\epsilon})$-fraction) of the codewords in $\cC$. Let this cell correspond to the set $\{P_{\bfx|\bfu = u}\}_{u=1}^K$ of distributions. Going forward, James' attack is only guaranteed to succeed with positive probability for codewords $\vx \in \cC'$. Despite this restriction, due to (b), this nonetheless happens with strictly positive probability, bounded away from zero independently of the blocklength $n$. 
	\item {\it \underline{Choice of attack parameter/jamming distributions}:} 
	\label{step:choose-jamming-distr}
	James then simultaneously selects a {\it prefix-length parameter} $\alpha \leq 1$ (w.l.o.g.\ it is assumed that $\alpha K$ is an integer -- if not, $\alpha$ can be appropriately quantized), and two corresponding sets of jamming distributions:
	 \begin{enumerate}
\item [(a)] {\it \underline{Prefix-babbling distributions}:} These comprise of a set of $\alpha K$ distributions $\{V_{\bfs|\bfx,\bfu = u}\}_{u=1}^{\alpha K}$ that James will use in the corresponding prefix chunks, as described in~\ref{step:babble} below. 
\item [(b)] {\it \underline{Suffix-symmetrization distributions}:} These comprise of a set of $\alpha K$ distributions $\{V_{\bfs|\bfx,\bfx',\bfu = u}\}_{u=\alpha K + 1}^{K}$ that James will use in the corresponding suffix chunks, as described in~\ref{step:push} below. 
	 \end{enumerate}
	 In particular, these two sets of distributions are required to satisfy both the following:
	 \begin{enumerate}
	 	\item [(i)] {\it \underline{State-feasibility}:}\label{step:state-feas} The prefix-babbling and suffix-symmetrization jamming distributions are required to jointly satisfy a certain {\it state-feasibility condition} -- see Definition~\ref{def:feas-jam-dist}. This condition can be thought of as ensuring that the overall jamming vector $\vs$ that James chooses satisfies the state cost constraint imposed by the AVC by separately computing the cost of the prefix jamming sequence $(\vs(1),\ldots,\vs(\alpha n) )$ induced by the prefix-babbling distributions, and the cost of the suffix jamming sequence $(\vs(\alpha n+1),\ldots,\vs(n) )$ induced by the suffix-pushing distributions. It is important to highlight here that due to the Generalized Plotkin bound~\cite{wbbj-2019-omni-avc} the suffix state cost only needs to be computed with respect to $P_{\bfx,\bfx'}$ distributions that are convex combinations of product distributions.
	 	\item [(i)] {\it \underline{Suffix-symmetrizability}:} The suffix-symmetrization jamming distributions are required to satisfy a certain {\it suffix-symmetrizability condition} -- see $\cV$ in Definition~\ref{def:symm-dist}. This condition can be thought of ensuring that the suffix $(\vy(\alpha n+1),\ldots,\vy(n) )$ observed by Bob was equally likely to have been generated by either the codeword suffix $\vx^{>\alpha}  = (\vx(\alpha n+1),\ldots,\vx(n) )$ or the codeword suffix $\vx'^{>\alpha}  = (\vx'(\alpha n+1),\ldots,\vx'(n) )$ (with the jamming distribution $V_{\bfy|\bfx,\bfx'}$ acting on the pair of codeword suffixes $(\vx^{>\alpha},\vx'^{>\alpha})$ in the first case, and on the pair $(\vx'^{>\alpha},\vx^{>\alpha})$ in the second case -- note the reversal/symmetrization in the two scenarios).
	 \end{enumerate}
	 After this pre-transmission analysis, once Alice starts transmitting, James proceeds with his jamming attack as follows.
		\item {\it \underline{Prefix-babbling phase}:}\label{step:babble} For the initial $\alpha K$ chunks (henceforth called the {\it $\alpha$-prefix of $\vx$}, or just the prefix) James uses a ``babble" attack. That is, for each time index $t$ in chunk $u \leq \alpha K$ he selects his jamming symbols $\vs(t)$ via the probabilistic map $\{V_{\bfs|\bfx, \bfu =u}\}_{u=1}^{\alpha K}$ applied symbol-by-symbol to the corresponding observed transmission $\vx(t)$. For these chunks in the prefix James thereby induces effective DMCs $\{W_{\bfy|\bfx,\bfu = u}\}_{u=1}^{\alpha K}$ from Alice to Bob. If the rate $R$ of Alice's code is too high (i.e., is not $\delta$-good -- see Definition~\ref{def:goodness-ach}), then for some $\alpha \leq 1$ there will be a chunk number $\alpha K$  such that in addition to James' jamming sequence $\vs$ satisfying the state-feasability and suffix-symmetrizability conditions described in 2) above, he can also force the ``Large prefix-list" condition described below.
	\begin{itemize}
	\item {\it \underline{Large prefix-list}:} For the prefix, the normalized cumulative mutual information $\sum_{i=1}^{\alpha K} \frac{1}{K} I(P_{\bfx|\bfu = u},W_{\bfy|\bfx,\bfu = u})$ (repeated in Definition~\ref{def:cumulative-mutual-info}) is less than the rate $R$ of Alice's code. Via standard information-theoretic arguments such as strong converses, this can be used to argue that at this point in the transmission Bob must still have a large set (say of size $2^{\Omega(n)}$) of potential messages that are compatible with Alice's coebook $\cC$, his prefix observation $\vy^{\leq \alpha}$ and the prefix channels $\{W_{\bfy|\bfx,\bfu = u}\}_{u=1}^{\alpha K}$ induced by James.  
	\end{itemize}
	 \item \label{step:push} {\it \underline{Suffix-pushing phase}:} Since the channel $W_{\bfy|\bfx,\bfs}$ is state-deterministic (this is the precise reason we need to restrict to state-deterministic AVCs the class of channels for which we can claim our achievability and converse arguments match), therefore at this $\alpha n$ point James can compute the same large prefix-list that Bob would have computed. He then randomly selects some $m'$ in this prefix list, and sets the ``spoofing suffix" $\vx'^{>\alpha}_{m'} $ as the codeword suffix corresponding to this $m'$, and for any time-index $t$ in suffix chunk $u$,  applies the corresponding suffix symmetrizing distribution $V_{\bfs|\bfx,\bfx',\bfu = u}(\cdot| \vx(t),\vx'(t))$ symbol-by-symbol on the $\vx(t)$ he (causally) observes and the $\vx'(t)$ in his chosen spoofing suffix.
\end{enumerate}
Via analysis paralleling but significantly generalizing to a general class of AVCs the symbol-error analysis in~\cite{chen2019capacity} it can be shown that for any code with rate that is not $\delta$-good the strategy outlined and motivated above results, with probability bounded away from zero, in James being able to cause a decoding error by Bob. We emphasize again that incorporating the recently discovered Generalized Plotkin bound was a key step in this converse, without which any converse would have resulted in a significantly weaker rate-bound for general AVCs. Further, by the synthesis of information-theoretic and coding-theoretic techniques introduced in~\cite{dey-causal-it2013}, these arguments can be generalized even to settings where Alice uses a stochastic encoder. 
% Therefore, by choosing among all possible choices of attack parameter/jamming distributions in Step \ref{step:choose-jamming-distr} above the one that attains the infimum of the ``not $\delta$-good" rates, we can ensure that if she wishes to reliably communicate to Bob, \yihan{incomplete sentence}
Therefore, by choosing among all possible choices of feasible attack parameter/jamming distributions in Step \ref{step:choose-jamming-distr} that attains the infimum of the ``not $\delta$-good'' rates, we can ensure that if Alice wishes to reliably communicate to Bob, James will be able to confuse Bob between the truly transmitted message and a spoofing message.

To complement this impossibility argument, we can also demonstrate that for the class of AVCs considered any rate that is $\delta$-good is indeed achievable. Again, this broadly parallels the scheme for symbol-error AVC in~\cite{chen2019capacity}, while significantly generalizing it to a general class of AVCs. In particular, the achievability analysis required novel interpolating arguments to ensure that certain algebraic conditions on distributions ($\delta$-goodness of the code, non-symmetrizability of the suffix) ensures that if Bob follows a certain iterative decoding mechanism, he will, with high probability, be able to decode correctly.

\begin{enumerate}
	\item {\it \underline{Codebook parameter choices}:} First, for each chunk Alice chooses a length-$K$ time-sharing vector $\vu$ and corresponding chunk-wise input distributions $\{P_{\bfx|\bfu = u}\}_{u=1}^{K}$ such that the overall time-averaged input distribution $P_\bfx = \frac{1}{K}\sum_{u=1}^K P_{\bfx|\bfu = u}$ satisfies the input constraint set specified in the AVC problem. Among all possible such choices of $\vu$ and $\{P_{\bfx|\bfu = u}\}_{u=1}^{K}$, Alice chooses the time-sharing vector and chunk-wise input distributions for which the supremum of all $\delta$-good rates is as large as possible, and sets this supremum (minus a $\delta$ slack) as the rate $R$ of her code.
	\item {\it \underline{Chunk-wise stochastic code design}:} Alice then designs a random chunk-wise stochastic code as follows. In particular, for each chunk $u \in \{1,\ldots,K\}$  and each message $m \in \{1,\ldots,2^{nR}\}$, she chooses $2^{\varepsilon^3 n}$ random length-$\varepsilon n$ codeword chunks of type $P_{\bfx|\bfu = u}$. These codeword-chunks for  chunk number $u$ and message $m$ are denoted $\vx^{(u)}_{(m,r^{(u)})}$, where $r^{(u)}$ is an index varying in the set $\{1,\ldots,2^{n\varepsilon^3}\}$. These multiple codeword-chunks for a given message are used, as described below, to ensure that at any time $t$ James always has some uncertainty about the actual codeword to be transmitted in the next chunk even if he knows Alice's message.
	\item {\it \underline{Chunk-wise stochastic encoding}:} Given the message $m$ and the chunk-wise stochastic code described above, for each chunk $u \in \{1,\ldots,K\}$ Alice picks a uniformly random $r^{(u)}$ from the set $\{1,\ldots,2^{n\varepsilon^3}\}$ and transmits the codeword-chunk  $\vx^{(u)}_{(m,r^{(u)})}$. For small $\varepsilon$, note that the entropy of $r^{(u)}$ (which equals $n\varepsilon^3$) is significantly smaller than the chunk-length $n\varepsilon$. This is important in the list-decoding steps described below.
	\item {\it \underline{Chunkwise-iterative decoding}:} On receiving the vector $\vy$, Bob's decoding scheme then proceeds iteratively over $K$ stages in a manner that conceptually parallels the babble-and-push jamming strategy described above. (We emphasize that we do not assume that James follows a babble-and-push jamming strategy -- rather, as a consequence of our analysis it turns out to be the case that for the class of AVCs considered the babble-and-push jamming strategy is an extremizing strategy in the sense that it results in the lowest reliable throughput.)\\
	We describe this chunkwise-iterative decoding scheme in more detail now as follows. First, Bob initializes his prefix-chunk counter $\alpha K$ to $1$. Then, for this value of $\alpha$ he performs the following two-step decoding attempt. 
	\begin{enumerate}
	\item {\it \underline{Prefix list-decoding}:} For the given $\alpha$, Bob considers all the tuples $(\alpha,\{V_{\bfs|\bfx,\bfu = u}\}_{u=1}^{\alpha K },\{V_{\bfs|\bfx,\bfx',\bfu = u}\}_{u=\alpha K + 1}^{K})$ which are feasible in the sense of Definition~\ref{def:feas-jam-dist}, and for which $\alpha$ is the first chunk for which the the normalized cumulative mutual information $\sum_{i=1}^{\alpha K} \frac{1}{K} I(P_{\bfx|\bfu = u},W_{\bfy|\bfx,\bfu = u})$ (in the sense of Definition~\ref{def:cumulative-mutual-info}) is {\it more} than the rate $R$ of the code. Via relatively standard list-decoding arguments (adapted to the chunkwise stochastic encoding scheme here) ensure that with overwhelming probability over code-design (indeed, with probability say $1-2^{\Omega(n^2)}$) Bob can decode down to a ``small" list (of size say $\cO(n)$).
	\item {\it \underline{Suffix disambiguation}:} For this value of $\alpha$ and the corresponding list of messages obtained from the prefix, Bob uses the suffix to disambiguate the correct message from the list, by performing a suffix consistency check. This comprises of checking to see if any of the codeword suffixes for any message in the prefix-list is jointly typical with respect to the observed suffix. If more than one message in the prefix list passes this suffix consistency test then Bob declares a decoding error and aborts; if none of the messages in the prefix list pass this consistency test then Bob increments the value of $\alpha$ and tries again; and if a unique message passes this test then Bob outputs that message as the decoded message and terminates. To show that this decoding strategy suffices to attain the capacity, we need to show two things. Firstly we need to demonstrate that the truly transmitted message does indeed pass Bob's suffix consistency check, and secondly that with high probability over the private randomness in the suffix no other message in Bob's prefix list passes the consistency check. The former argument is relatively standard, relying mostly on concentration arguments to show that joint typicality decoding in the suffix results in the true message being suffix-consistent. The latter argument is more delicate -- we first demonstrate that at the true $\alpha$ value, over the private randomness in suffix chunks that is currently unknown to James, the joint types of {\it most} codeword suffixes for the truly transmitted codeword will be jointly typical (with respect to a convex combination of a product distribution) with {\it every} suffix codewords for {\it all} other messages in Bob's prefix message list (here, importantly, we use the fact that the private-randomness rate is vanishing relative to the suffix length, and hence suffix joint-types are closely concentrated around their expectation). This coupled with suffix non-symmetrizability ensures that the true codeword suffix typicality implies that no other message codeword suffix can be jointly typical. Another non-trivial aspect to this analysis is to demonstrate that even for decoding points preceding the ``true" $\alpha$, this decoding rule works in the sense that no other suffix passes the suffix consistency test -- this is done by carefully choosing two interpolating jamming trajectories and using convexity arguments to argue that non-symmetrizability with respect to the true $\alpha$ also implies non-symmetrizability with respect to preceding $\alpha$s. %\sidj{TO BE CONTINUED -- besides describing this stage, give some intuition as to why it work for the correct trajectory, and fails for incorrect trajectories with $\alpha$ less than the true value. Describe the interpolating distributions, and also highlight the reason for the single cost constraint, and also the monotonicity of state cost with mutual information}
	\end{enumerate}
\end{enumerate}

It should be emphasized that, perhaps surprisingly, input distributions $P_{\bfx|\bfu = u}$ that are uniform in the time-sharing variable $\bfu$ do not necessarily attain capacity for general AVCs. For the special case of the finite alphabet symbol-error/erasure channels considered in~\cite{chen2019capacity} the uniform input distribution is indeed optimizing as shown there, but for the quadratically constrained casual model~\cite{li2018quadratically} it can be shown that uniform power allocation is {\it not} capacity-achieving. The reason for this phenomenon, which might seem counter-intuitive given the convexity of mutual information function and the linearity of the symmetrizability condition, is that since Alice has to choose a rate $R$ and design her corresponding codebook distribution to simultaneously be $\delta$-good for any possible choice of $\alpha$ (and the corresponding feasible input distributions), therefore the problem becomes highly non-convex. We are currently actively investigating alternating optimization approaches to computationally tractable well-approximate the capacity function for general AVCs, and simultaneously obtain the corresponding extremizing input/jamming distributions.

\section{Model}
\label{sec:model}
% !TEX root = causal_general.tex

\noindent\textbf{Notation:} alphabets are in calligraphic script (e.g.~$\cX$). In this paper, all alphabets are finite. A boldface letter (e.g.~$\bfx$) is a random variable, with the non-boldface (e.g.~$x$) as its realization. The set $[M] = \{1,2,\ldots, M\}$. Tuples are written with an underline sign and individual entries with the time index in parentheses (e.g.~$\vx = (\vx(1), \vx(2), \ldots, \vx(n))$) and $\vx(1:i) = (\vx(1), \vx(2), \ldots, \vx(i))$). The \emph{type} of a tuple $\type_{\vx}$ is the empirical distribution of $\vx$. The set of all probability distributions on an alphabet $\cX$ is $\Delta(\cX)$. The set of all conditional distributions (randomized maps) from $\cX$ to $\cS$ is $\Delta(\cS | \cX)$. The length $n$ \emph{type class} corresponding to $P \in \Delta(\cX)$ is denoted by $\cT_n(P)= \left\{ \vx \in \cX^n : \type_{\vx} = P \right\}$. For a joint distribution $P_{\bfx, \bfs}$ we write $[P_{\bfx,\bfs}]_{\bfx}$ and $[P_{\bfx,\bfs}]_{\bfs}$ for the marginal distributions of $\bfx$ and $\bfs$.

\subsection{Channels and codes}
\label{sec:channels-codes}

We consider a class of arbitrarily varying channels (AVCs) with cost constraints on the input and state. Our formulation of the cost constraint generalizes the standard definition~\cite{csiszar-narayan-it1988} by modeling the constraint as requiring that the type of the channel input or state belong to a specified set.
%Let $\cX$, $\cS$, and $\cY$ denote the input, state, and output alphabets. 

\begin{definition}[AVC]
\label{def:avc}
An \emph{arbitrarily varying channel} (AVC) is a sextuple $(\cX,\cS,\cY,\lambda_\bfx,\lambda_\bfs,W_{\bfy|\bfx,\bfs})$. 
Here, $\cX,\cS,\cY$ are the input, state and output alphabets, respectively. 
The input and state constraints are specified by $\lambda_\bfx\subset\Delta(\cX) $ and $\lambda_\bfs\subset\Delta(\cS)$, respectively. 
The channel law is 
%given by 
$W_{\bfy|\bfx,\bfs}\in\Delta(\cY|\cX\times\cS)$. 
\end{definition}

Our goal is to communicate one of $M$ messages reliably over this AVC. For a positive integer $M$, let $\cM \coloneqq[M]$ denote all possible messages that the transmitter may send. 

%Let $ M\in\bZ_{\ge1} $ and $ \cM\coloneqq[M] $. The set $\cM$ denotes all possible messages that the transmitter may send. 

\begin{definition}[Codes]
\label{def:codes}
A \emph{code} for a causal AVC $ (\cX,\cS,\cY,\lambda_\bfx,\lambda_\bfs,W_{\bfy|\bfx,\bfs}) $ is a pair $ (\enc,\dec) $. 
Here $ \enc\in\Delta(\cX^n|\cM) $ is a (potentially stochastic) encoder. 
For $ m\in\cM $, we use $ \enc(m)\in\cX^n $ to denote the (possibly random) encoding of $m$. 
Each such an encoding is called a \emph{codeword}. 
The set $ \enc(\cM) $ of all codewords is called the \emph{codebook}, denoted by $\cC$. 
The length $n$ of each codeword is called the \emph{blocklength}. 
The \emph{rate} of $\cC$ is defined as $ R(\cC)\coloneqq\frac{1}{n}\log M $. 
The code is required to satisfy the input constraint: for every $\vx\in\cC $, $ \type_{\vx}\in\lambda_\bfx$. 
The decoder is given by $ \dec\in\Delta(\cM|\cY^n)$. 
We use $ \dec(\vy)\in\cM $ to denote the (potentially random) message output by the decoder given $\vy$. 
\end{definition}

\begin{definition}[Jamming strategies]
\label{def:jamming}
A \emph{jamming strategy} of blocklength $n$ is a set of maps $\jam = (\jam_1,\cdots,\jam_n)$ where $\jam_t \in \Delta(\cS | \cX^n)$ is the jamming function at time $t$. In a \emph{causal jamming strategy},  
%is a jamming strategy in which 
$\jam_t \in \Delta(\cS | \cX^t)$.
\end{definition}

\begin{definition}[Communication over causal/online AVC]
\label{def:causal-avc}
Communication over a  \emph{causal} (a.k.a.\ \emph{online}) AVC 
%is an AVC 
$(\cX,\cS,\cY,\lambda_\bfx,\lambda_\bfs,W_{\bfy|\bfx,\bfs}) $ 
%with 
has the following requirements. 
Let $ \cC $ be a code with an encoder-decoder pair $ (\enc,\dec) $
and $\jam = (\jam_1,\cdots,\jam_n)$ be a causal jamming strategy.
%Let $ \jam = (\jam_1,\cdots,\jam_n) $ denote a (possibly random) jamming functions. 
%Here for each $ i\in[n] $, $ \jam_i\in\Delta(\cS|\cX^i) $. 
We use $ \jam_i(\vx(1),\cdots,\vx(i)) $ to denote the jamming symbol generated by the adversary at time $i$. 
Before communication happens, $ (\enc,\dec)$ are fixed  and revealed to the transmitter \emph{Alice}, the receiver \emph{Bob} and the adversary \emph{James}. James then fixes a causal jamming strategy $\jam = (\jam_1,\cdots,\jam_n)$ which can depend on $(\enc,\dec)$. 
%\ads{I changed this from what was before, which said all three are revealed simultaneously, but James can tailor the jamming strategy to the code!}

%,\jam $ are fixed and revealed to the transmitter \emph{Alice}, the receiver \emph{Bob} and the adversary \emph{James}. 

The code is required to satisfy the input constraint $ \type_\vx\in\lambda_\bfx $ for every $ \vx\in\cC $. 
Once a particular encoding $ \vx $ of a certain message $m$ is transmitted by Alice, James observes $\vx$ \emph{causally}. 
That is, for any $ i\in[n] $, he observes $ \vx(i) $ after observing $ \vx(1),\cdots,\vx(i-1) $. 
Given his causal observation, he %designs a jamming sequence $ \vs\in\cS^n $ causally. 
%That is, for any $ i\in[n] $, 
computes $ \vs(i) = \jam_i(\vx(1),\cdots,\vx(i)) $ which depends only on $ \vx(1),\cdots,\vx(i) $ (and $(\enc,\dec),\cC$ which are known to everyone). 
The channel then outputs $\vbfy$ according to the following distribution 
\begin{align}
\prob{\vbfy = \vy\condon\vbfx = \vx,\vbfs = \vs} &\coloneqq \prod_{i = 1}^n W_{\bfy|\bfx,\bfs}(\vy(i)|\vx(i),\vs(i)). \notag 
\end{align}
Receiving $ \vbfy $, Bob decodes to $ \dec(\vbfy) $. 
\end{definition}

Communication reliability is measured by:
% \emph{the probability of error}. 
\begin{definition}[Error probability]
\label{def:error-prob}
Consider a causal AVC $ (\cX,\cS,\cY,\lambda_\bfs,\lambda_\bfs,W_{\bfy|\bfx,\bfs}) $. 
Let $ (\enc,\dec) $ and $ \jam $ be a coding scheme and a jamming strategy for this channel, respectively. 
Define the \emph{maximum error probability} $P_{\e,\max}(\enc,\dec,\jam) $ as follows  
\begin{align*}
P_{\e,\max}(\enc,\dec,\jam) 
		\coloneqq 
		& \max_{m \in \cM} 
			\sum_{\substack{\wh{m}\in\cM \\ \wh{m}\ne m}} \sum_{\vy\in\cY^n} \sum_{\vs\in\cS^n} \sum_{\vx\in\cX^n} %\sum_{m\in\cM} 
		\enc(\vx|m) \cdot
	% \\
	% & \hspace{0.05cm}
		\left(\prod_{i = 1}^n \jam_i(\vs(i)|\vx(1:i)) W_{\bfy|\bfx,\bfs}(\vy(i)|\vx(i),\vs(i)) \right) \cdot
		\dec(\wh{m}|\vy).
%	\\
%&P_{\e,\avg}(\enc,\dec,\jam) 
%	\\
%	&\hspace{0.2cm}
%	\coloneqq 
%	\max_{\vs\in\cS^n} \frac{1}{M} \sum_{m\in\cM} 
%		\sum_{\vx\in\cX^n} \sum_{\wh{m}\in\cM} \sum_{\vy\in\cY^n} \enc(\vx|m) 
%	\\
%	&\qquad \hspace{0.05cm}
%	\prod_{i = 1}^n \jam_i(\vs(i)|\vx(1:i)) W_{\bfy|\bfx,\bfs}(\vy(i)|\vx(i),\vs(i)) 
%	\dec(\wh{m}|\vy),
\end{align*}
The \emph{average error probability} is defined analogously by averaging over messages $m$:
\begin{align}
P_{\e,\avg}(\enc,\dec,\jam) 
		\coloneqq 
		& \frac{1}{|\cM|} 
			\sum_{\substack{\wh{m}\in\cM \\ \wh{m}\ne m}} \sum_{\vy\in\cY^n} \sum_{\vs\in\cS^n} \sum_{\vx\in\cX^n} \sum_{m\in\cM} 
		\enc(\vx|m) \cdot \notag \\
		& \left(\prod_{i = 1}^n \jam_i(\vs(i)|\vx(1:i)) W_{\bfy|\bfx,\bfs}(\vy(i)|\vx(i),\vs(i)) \right) \cdot
		\dec(\wh{m}|\vy). \label{eqn:def-pe-avg} 
\end{align}
Here we view $ \enc\in\Delta(\cX^n|\cM) $ and $ \dec\in\Delta(\cM|\cY^n) $ as conditional distributions. 
\end{definition}

%\ads{I'm not sure if it would be clearer to just write it (for the ISIT paper) using $ \prob{\hat{\bfm} \ne m}$ for max error error and $\prob{\hat{\bfm} \ne \bfm}$ for average error?}

\subsection{State-deterministic AVCs}
\label{sec:global-assumptions}

We consider a class of AVCs which are state deterministic and have a single cost constraint on the state. 
%We call such AVCs \emph{single constraint state-deterministic} AVCs, or SCSDs for short. 
More precisely, we require the following five assumptions to hold:
\begin{enumerate}
	\item \label{itm:assumption-alpha} All alphabets $ \cX,\cS,\cY $ are finite.\footnote{The quadratically-constrained infinite alphabet setting was considered in~\cite{li2018quadratically}.}

	\item \label{itm:assumption-ip-constr} The input constraint set $ \lambda_\bfx\subset\Delta(\cX) $ is convex. This is a natural restriction -- a non-convex set  $ \lambda_\bfx$ would imply that the encoder is not allowed to time-share between some potential transmissions.

	\item \label{itm:assumption-st-constr} The set $ \lambda_\bfs\subset\Delta(\cS) $ is specified by a \emph{single} constraint:
	\begin{align}
	\lambda_\bfs &\coloneqq \curbrkt{P_\bfs\in\Delta(\cS): \sum_{s\in\cS}P_\bfs(s) B(s) \le \Lambda}, \notag 
	\end{align}
	for some $ {B}\in\bR^{|\cS|} $ and $ \Lambda\in\bR $. 
%		\mikel{We may need to modify the following cost function to allow chunkwise representation} \sidj{I think the model itself doesn't need to change -- chunkwise structure to the codebook would still result in a single linear function of the form below. In the actual code construction it might be more convenient to write the chunkwise form of the cost-function, but that can be done locally.}
	%Let $\cost(P_\bfs)=\sum_{s\in\cS}P_\bfs(s) B(s)$, thus $\lambda_\bfs = \curbrkt{P_\bfs\in\Delta(\cS): \cost(P_\bfs) \le \Lambda}$.
	
	\item \label{itm:assumption-ch} %\yihan{Needed in converse (maybe also in ach.?).}\sidj{I think just the converse.} 
	The channel law $ W_{\bfy|\bfx,\bfs} $ is deterministic, i.e., for every $ (x,s)\in\cX\times\cS $, there is a unique $y\in\cY$ such that $ W_{\bfy|\bfx,\bfs}(y|x,s) = 1 $. 
	Alternatively, we write the channel law as a (deterministic) function $ W\colon\cX\times\cS\to\cY $.~\footnote{Our achievability result actually does not require this restriction -- it holds even if the AVC is not state-deterministic. However, our current converse arguments providing a capacity upper bound asymptotically matching the rate achievable by our achievability scheme rely on state-determinism, since they rely on the jammer being able to predict the channel output resulting from a specific jamming strategy.}
	
	\item \label{itm:assumption-s0} There exists a zero-cost state $s_0\in \cS$ and a one-to-one mapping $\phi:\cY \rightarrow \cX$ for which for every $x$, $W_{\bfy|\bfx,\bfs}(\phi(x)|x,s_0)=1$. This final assumption is rather natural and corresponds to many channel models. It intuitively implies that there is a \emph{stand-down} state for James that on one hand has zero cost and on the other does not corrupt communication at all, e.g., the ``non-erasure'' state in an erasure channel.
	%, or the ``no-flip" state in a bit-flip channel.
%Such a state $s_0$ has no cost to James and, accordingly, does not corrupt Bob's view of $x$ (as $x$ can be decoded from $\phi(x)$).	
%If such a state $s_0$ does not exist for the given channel $W$, one can replace $W$ by $W'$, replace $\cY$ by $\cY'=\cY \cup \{y_x | x \in \cX\}$, and $\cS$ by $\cS'=\cS \cup \{s_0\}$, where $\{y_x | x \in \cX\} \cap \cY = \phi$, for all $x \in \cX$, $y \in \cY$, $s \in \cS$, $W'(y|x,s)=W(y|x,s)$, and for all $x \in \cX$, $W'(y_x|x,s_0)=1$ (the remaining values of $W'(y|x,s)$ are set to 0).
%As James is stronger when one considers the AVC defined by $\cX'$, $\cY'$, and $W'$ when compared to that defined by $\cX$, $\cY$, and $W$, the achievability proof at hand may assume the former without loss of generality.
%\mikel{Need to rethink the argument again - we may actually need to prove that the change does not reduce rate as we cannot use $W'$ in an isolated manner in this part of the proof and not in others ... which will eventually lead to capacity expressions that depend of $W'$.} 
\end{enumerate}

\section{Capacity charatcterization}
\label{sec:achievability}
% !TEX root = causal_general.tex

%\ads{I think this is actually the overview of main results and techniques.}

In order to state our main result we must define a notion of \emph{symmetrizability}~\cite{ericson_exponential_1985,csiszar-narayan-it1988-2} that is appropriate to the causal AVC model. Symmetrizabillity conditions play an important role in characterizing AVC capacities under deterministic coding. Roughly speaking, a channel is symmetrizable if James can, via selecting the state sequence $\vs$, cause the channel to behave like a symmetric two-user multiple-access channel $W_{\bfy|\bfx,\bfx'}(y | x, x')$. Operationally, this means James can select an alternative (``spoofing'') codeword $\vx'$ and use it to create a state $\vs$ such that Bob cannot tell if Alice sent $\vx$ and James chose $\vx'$ or if Alice sent $\vx'$ and James chose $\vx$. 

There are two aspects of the causal AVC which make it tricky to define a notion of symmetrizability. First is the online nature of the adversarial attack: James has to choose the elements of $\vs$ sequentially as opposed to selecting $\vx'$ and then $\vs$. Second is the cost constraint on $\vs$: if we think of the cost as ``power,'' then James faces a power allocation problem. Taken together, James could spend little power at the beginning and more power at the end of the transmission or vice versa. In the former case, Bob can get a good estimate of the message initially but then the channel becomes much worse. In the latter, Bob has a very bad estimate of the message but the channel is less noisy at the end, allowing him to potentially decode to the true message. Since James does not know the transmitted codeword a priori, he has to choose how to allocate the power ``on the fly'' while satisfying the cost constraint.

\subsection{Symmetrizing distributions}

Let $ K\in\bZ_{\ge1} $, $ \cU \coloneqq [K] $ and $ \cA\coloneqq \{0,1/K,2/K,\cdots,1-1/K\} $. 
For $ \alpha\in\cA $, let $ \cU^{\le\alpha} \coloneqq \{1,2,\cdots,\alpha K\} $ and $ \cU^{>\alpha} \coloneqq \{\alpha K+1,\alpha K+2,\cdots,K\} $. 
Let $ P_\bfu \coloneqq \unif(\cU) $. 
Fix $ P_{\bfx|\bfu}\in\Delta(\cX|\cU) $ such that $ \sqrbrkt{P_\bfu P_{\bfx|\bfu}}_{\bfx}\in\lambda_\bfx $. 
% For $ u\in\cU $, let $ P_{\bfx|\bfu = u} \coloneqq P_{\bfx|\bfu = u} $. 

%\subsection{Achievability bound}
%\label{sec:ach-bound}

In our achievable scheme we will use a code in which the total blocklength $n$ is broken into $K$ subblocks. In each block, we characterize the jammer by a single letter channel. 
%\ads{stopped here for now, trying to find the right words to explain the concept for the non-expert.}

%We will characterize jamming schemes by a tuple $(\alpha, V_{\bfs|\bfx, \bfu^{\le\alpha}}, V_{\bfs|\bfx, \bfx', \bfu^{>\alpha}})$ 

\begin{definition}[Feasible jamming distributions]\label{def:feas-jam-dist}
Let $ P_{\bfx|\bfu}\in\Delta(\cX|\cU) $. 
Define, for $ \alpha\in\cA $, 
\begin{align}
\cF_\alpha(P_{\bfx|\bfu}) &\coloneqq \curbrkt{ \begin{array}{l}
(V_{\bfs|\bfx, \bfu^{\le\alpha}}, V_{\bfs|\bfx, \bfx', \bfu^{>\alpha}})\in \Delta(\cS|\cX\times\cU^{\le\alpha})\times\Delta(\cS|\cX^2\times\cU^{>\alpha}): \\
\displaystyle\frac{1}{\alpha K}\sum_{u = 1}^{\alpha K} \sqrbrkt{P_{\bfx|\bfu = u} V_{\bfs|\bfx,\bfu^{\le\alpha} = u}}_{\bfs} + \frac{1}{(1-\alpha)K}\sum_{u = \alpha K+1}^K\sqrbrkt{P_{\bfx|\bfu = u} ^{\ot2}V_{\bfs|\bfx,\bfx',\bfu^{>\alpha} = u}}_\bfs \in \lambda_\bfs
\end{array} }, \notag 
\end{align}
and
\begin{align}
\cF(P_{\bfx|\bfu}) &\coloneqq \curbrkt{ V_{\bfs|\bfx, \bfu}\in \Delta(\cS|\cX\times\cU): 
\frac{1}{K} \sum_{u = 1}^K \sqrbrkt{P_{\bfx|\bfu = u} V_{\bfs|\bfx, \bfu = u}}_\bfs \in \lambda_\bfs }. \notag 
\end{align}
\end{definition}

Generalizing the definition above for any $P_{\bfu}$ and slackness $\delta$ we have:
\begin{definition}[$\delta$-feasible jamming distributions]
Let $ (P_{\bfu}, P_{\bfx|\bfu})\in\Delta(\cU)\times\Delta(\cX|\cU) $. 
Define, for $ \alpha\in\cA $, 
\begin{align}
\cF_{\alpha,\delta}(P_\bfu, P_{\bfx|\bfu}) &\coloneqq \curbrkt{ \begin{array}{l}
(V_{\bfs|\bfx, \bfu^{\le\alpha}}, V_{\bfs|\bfx, \bfx', \bfu^{>\alpha}})\in \Delta(\cS|\cX\times\cU^{\le\alpha})\times\Delta(\cS|\cX^2\times\cU^{>\alpha}): \\
\displaystyle  \sum_{u \in \cU^{\le\alpha}} P_\bfu(u) \sqrbrkt{P_{\bfx|\bfu = u} V_{\bfs|\bfx,\bfu^{\le\alpha} = u}}_{\bfs} +  \sum_{u \in\cU^{>\alpha}} P_\bfu(u) \sqrbrkt{P_{\bfx|\bfu = u} ^{\ot2}V_{\bfs|\bfx,\bfx',\bfu^{>\alpha} = u}}_\bfs \in \interior_{\delta}(\lambda_\bfs)
\end{array} }, \notag 
\end{align}
and
\begin{align}
\cF_{\delta}(P_{\bfx|\bfu}) &\coloneqq \curbrkt{ V_{\bfs|\bfx, \bfu}\in \Delta(\cS|\cX\times\cU): 
 \sqrbrkt{P_\bfu P_{\bfx|\bfu } V_{\bfs|\bfx, \bfu }}_\bfs \in \interior_{\delta}(\lambda_\bfs) }. \notag 
\end{align}
\end{definition}

%\mikel{We should remove the duplicate of the definition above appearing in the Converse part}

\begin{definition}[Cumulative mutual information]
\label{def:cumulative-mutual-info}
% \yihan{Define it for general $ P_\bfu $.}
Fix $ P_{\bfx|\bfu} $, $ \alpha\in\cA $ and $ V_{\bfs|\bfx,\bfu^{\le\alpha}}\in\Delta(\cS|\cX\times\cU^{\le\alpha}) $. 
The \emph{cumulative mutual information} w.r.t.\ $P_{\bfx|\bfu}$ and $V_{\bfs|\bfx, \bfu^{\le\alpha}}$ is defined as
\begin{align}
I(P_{\bfx|\bfu}, V_{\bfs|\bfx, \bfu^{\le\alpha}}) &\coloneqq I(\bfx;\bfy|\bfu^{\le\alpha})
= \frac{1}{K} \sum_{u = 1}^{\alpha K} I(\bfx_u;\bfy_u), \notag 
\end{align}
where the joint distribution of $ (\bfx_u, \bfy_u) $ is given by
\begin{align}
P_{\bfx_u,\bfy_u}(x, y) &\coloneqq \sum_{s\in\cS} P_{\bfx|\bfu}(x|u) V_{\bfs|\bfx, \bfu^{\le\alpha}}(s|x,u) W_{\bfy|\bfx,\bfs}(y|x,s). \notag 
\end{align}
\end{definition}

Similarly, we generalize the above definition for any $ P_{\bfu^{\le\alpha}} $.

\begin{definition}[Cumulative mutual information]
\label{def:cumulative-mutual-info-converse}
Define the \emph{cumulative mutual information} w.r.t.\ $ (P_{\bfu^{\le\alpha}}, P_{\bfx|\bfu^{\le\alpha}}, V_{\bfs|\bfx,\bfu^{\le\alpha}})\in\Delta(\cU^{\le\alpha})\times\Delta(\cX|\cU^{\le\alpha})\times\Delta(\cS|\cX\times\cU^{\le\alpha}) $ as
\begin{align}
I(P_{\bfu^{\le\alpha}}, P_{\bfx|\bfu^{\le\alpha}}, V_{\bfs|\bfx,\bfu^{\le\alpha}}) &\coloneqq I(\bfx;\bfy|\bfu^{\le\alpha})
= \alpha \sum_{u\in\cU^{\le\alpha}} P_{\bfu^{\le\alpha}}(u) I(\bfx_u;\bfy_u), \notag 
\end{align} 
where the joint distribution of $ (\bfx_u,\bfy_u) $ is given by
\begin{align}
P_{\bfx_u,\bfy_u}(x,y) &\coloneqq \sum_{s\in\cS} P_{\bfx|\bfu^{\le\alpha}}(x|u) V_{\bfs|\bfx,\bfu^{\le\alpha}}(s|x,u) W_{\bfy|\bfx,\bfs}(y|x,s). \notag 
\end{align}
\end{definition}

\begin{remark}
In the above definition, the time-sharing distribution $ P_{\bfu^{\le\alpha}} $ is over the prefix $ \cU^{\le\alpha} $. 
However, the information throughput is measured w.r.t.\ the whole block of transmissions of length $n$. 
Therefore, we need an $\alpha$ factor to account for that. 
If we instead define the cumulative mutual information with $ P_\bfu\in\Delta(\cU) $, then no $\alpha$ factor is needed. 
\end{remark}

%\mikel{We may need to change the definition below slightly to take $\bfu$ into account and also to require a condition only for $P_{\bfx|\bfu = u}(x) \ne 0$.}

\begin{definition}[Symmetrizing distributions]\label{def:symm-dist}
Define
\begin{align}
\cV &\coloneqq \curbrkt{ \begin{array}{l}
V_{\bfs|\bfx,\bfx}\in\Delta(\cS|\cX^2): \\ 
\displaystyle\forall(x,x',y)\in\cX^2\times\cY,\, {\sum_{s\in\cS}V_{\bfs|\bfx,\bfx'}(s|x,x')W_{\bfy|\bfx,\bfs}(y|x,s) = \sum_{s\in\cS}V_{\bfs|\bfx,\bfx'}(s|x',x) W_{\bfy|\bfx,\bfs}(y|x',s)}
\end{array} } , \notag \\
\cV_\slackrate &\coloneqq \curbrkt{ \begin{array}{l}
		V_{\bfs|\bfx,\bfx}\in\Delta(\cS|\cX^2): \\ 
		\displaystyle\forall(x,x',y)\in\cX^2\times\cY,\, \left|{\sum_{s\in\cS}V_{\bfs|\bfx,\bfx'}(s|x,x')W_{\bfy|\bfx,\bfs}(y|x,s) -
			 \sum_{s\in\cS}V_{\bfs|\bfx,\bfx'}(s|x',x) W_{\bfy|\bfx,\bfs}(y|x',s)}\right| \leq \slackrate
\end{array} } , \notag \\
\cV' &\coloneqq \curbrkt{ \begin{array}{l}
(V_{\bfs|\bfx,\bfx}, V_{\bfs|\bfx,\bfx'}')\in\Delta(\cS|\cX^2)^2: \\ 
\displaystyle\forall(x,x',y)\in\cX^2\times\cY,\, {\sum_{s\in\cS}V_{\bfs|\bfx,\bfx'}(s|x,x')W_{\bfy|\bfx,\bfs}(y|x,s) = \sum_{s\in\cS}V_{\bfs|\bfx,\bfx'}'(s|x',x) W_{\bfy|\bfx,\bfs}(y|x',s)}
\end{array} } , \notag \\
{\cV}'_\delta &\coloneqq \curbrkt{ \begin{array}{l}
(V_{\bfs|\bfx,\bfx}, V_{\bfs|\bfx,\bfx'}')\in\Delta(\cS|\cX^2)^2: \\ 
\displaystyle\forall(x,x',y)\in\cX^2\times\cY,\, \abs{\sum_{s\in\cS}V_{\bfs|\bfx,\bfx'}(s|x,x')W_{\bfy|\bfx,\bfs}(y|x,s) - \sum_{s\in\cS}V_{\bfs|\bfx,\bfx'}'(s|x',x) W_{\bfy|\bfx,\bfs}(y|x',s)} \leq \delta
\end{array}  }. \notag 
\end{align}
\end{definition}

In what follows, we show that $C$ defined below is the causal-capacity for general AVC channels. We start by defining $C_K$ for every positive integer $K$.
Let $|\cU|=K$.
\begin{align}
	C_K \coloneqq \max_{\substack{P_{\bfx|\bfu}\in\Delta(\cX|\cU) \\ \sqrbrkt{P_\bfu P_{\bfx|\bfu}}_\bfx\in\lambda_\bfx}} 
	\min\Big\{
		& \min_{ V_{\bfs|\bfx,\bfu}\in\cF(P_{\bfx|\bfu})} I(P_{\bfx|\bfu}, V_{\bfs|\bfx,\bfu}) , \notag \\
		& \min_{\substack{(\alpha, (V_{\bfs|\bfx,\bfu^{\le\alpha}}, V_{\bfs|\bfx,\bfx',\bfu^{>\alpha}}))\in\cA\times\cF_{\alpha}(P_\bfu,P_{\bfx|\bfu}) \\
		% (V_{\bfs|\bfx,\bfu^{\le\alpha}}, V'_{\bfs|\bfx,\bfx',\bfu^{>\alpha}})\in \cF_{\alpha,\delta'}(P_{\bfx|\bfu})\\
		\forall u\in\cU^{>\alpha},\,  (V_{\bfs|\bfx,\bfx',\bfu^{>\alpha} = u}, V_{\bfs|\bfx,\bfx',\bfu^{>\alpha} = u}')\in\cV'}}
		I(P_{\bfx|\bfu^{\le\alpha}}, V_{\bfs|\bfx,\bfu^{\le\alpha}})\Big\}
		. \label{eqn:capacity} 
\end{align}

Averaging over $\frac{ V_{\bfs|\bfx,\bfx',\bfu^{>\alpha} = u}+ V_{\bfs|\bfx,\bfx',\bfu^{>\alpha} = u}'}{2}$ in the rightmost $\min$ in the expression above, we have

\begin{align}
	C_K \coloneqq  \max_{\substack{P_{\bfx|\bfu}\in\Delta(\cX|\cU) \\ \sqrbrkt{P_\bfu P_{\bfx|\bfu}}_\bfx\in\lambda_\bfx}} 
	\min\Big\{
		& \min_{ V_{\bfs|\bfx,\bfu}\in\cF(P_{\bfx|\bfu})} I(P_{\bfx|\bfu}, V_{\bfs|\bfx,\bfu}), \notag \\
		& \min_{\substack{(\alpha, (V_{\bfs|\bfx,\bfu^{\le\alpha}}, V_{\bfs|\bfx,\bfx',\bfu^{>\alpha}}))\in\cA\times\cF_{\alpha}(P_\bfu,P_{\bfx|\bfu}) \\ 
		% (V_{\bfs|\bfx,\bfu^{\le\alpha}}, V'_{\bfs|\bfx,\bfx',\bfu^{>\alpha}})\in \cF_{\alpha,\delta'}(P_{\bfx|\bfu})\\
		\forall u\in\cU^{>\alpha},\,  V_{\bfs|\bfx,\bfx',\bfu^{>\alpha} = u}\in\cV}}
		I(P_{\bfx|\bfu^{\le\alpha}}, V_{\bfs|\bfx,\bfu^{\le\alpha}})\Big\}. \label{eqn:converse-rate} 
\end{align}

Let $C = \limsup_{K \rightarrow \infty}C_K$.
% (we show that the limit exists in Claim~\ref{claim:upper_lower} below).

\begin{theorem}[Achievability]
\label{thm:achievability}
%\yihan{Statement of finite-length achievability. Need to specify the choice of $K$, $\gamma$ etc.\ as functions of $\slackrate$. }
For any $\slackrate_0>0$ rate $C-\slackrate_0$ is achievable.
\end{theorem}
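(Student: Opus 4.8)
The plan is to instantiate the chunked stochastic-encoding / list-decode / disambiguate scheme sketched in Section~\ref{sec:proofsketch} and to show it achieves every rate $R < C_K - \varepsilon'$ for $K$ large; since $C = \limsup_{K\to\infty} C_K$, taking $K$ large enough (so $C_K \ge C - \varepsilon'$) and $\varepsilon'$ small enough gives $C - \slackrate_0$. Fix $K = 1/\varepsilon$, let $\vu = (1,\ldots,K)$ with $P_\bfu = \unif(\cU)$, and let $P_{\bfx|\bfu}$ be a maximizer in~\eqref{eqn:converse-rate} (or a near-maximizer, absorbing slack into $\slackrate_0$); set $R$ to be the value of the inner $\min$ minus a $\delta$-slack, where $\delta \ll \varepsilon$. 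Each of the $K$ chunks has length $\varepsilon n$; for each message $m$ and chunk $u$, draw $2^{\varepsilon^3 n}$ i.i.d.\ uniform codeword-chunks from the type class $\cT_{\varepsilon n}(P_{\bfx|\bfu=u})$, giving the stochastic encoder $\enc(m) = (\vx^{(1)}_{(m,r^{(1)})}, \ldots, \vx^{(K)}_{(m,r^{(K)})})$ with each $r^{(u)}$ uniform. This automatically satisfies the input constraint since $[\,P_\bfu P_{\bfx|\bfu}\,]_\bfx \in \lambda_\bfx$ and $\lambda_\bfx$ is convex.

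Next I would set up the decoder's two-stage procedure exactly as in the sketch. Given $\vy$, Bob sweeps $\alpha \in \cA$ from $1/K$ upward; at each $\alpha$ he first list-decodes the prefix $\vy^{\le\alpha}$: over all feasible $(\alpha, V_{\bfs|\bfx,\bfu^{\le\alpha}}, V_{\bfs|\bfx,\bfx',\bfu^{>\alpha}}) \in \cF_\alpha(P_\bfu, P_{\bfx|\bfu})$ for which $\alpha$ is the \emph{first} chunk where the cumulative mutual information exceeds $R$, he forms the set of messages whose prefix chunks are jointly typical with $\vy^{\le\alpha}$ under the induced prefix channels $W_{\bfy|\bfx,\bfu=u}$. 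A standard list-decoding / packing argument — adapted to the chunkwise structure, using that the private-randomness rate $\varepsilon^3$ per chunk is negligible against the chunk length $\varepsilon$, so a union bound over the at most $2^{nR} \cdot 2^{n\varepsilon^2}$ (message, randomness-vector) pairs still leaves a list of size $O(n)$ — shows that with probability $1 - 2^{-\Omega(n^2)}$ over the code design, the true message is in a list of size $O(n)$. Then Bob runs the suffix consistency check: keep those list messages some of whose suffix chunks are jointly typical with $\vy^{>\alpha}$; declare error if two survive, increment $\alpha$ if none survive, output the unique survivor otherwise.

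The two things to prove are (i) the true message survives the suffix check at its correct $\alpha$, and (ii) no other list message survives, at that $\alpha$ or at any earlier one. Part (i) is concentration: conditioned on the true codeword and James' (arbitrary, causal) state sequence in the suffix, the empirical joint type of $(\vx^{>\alpha}, \vs^{>\alpha})$ — and hence of $(\vx^{>\alpha}, \vy^{>\alpha})$ — concentrates around a point of the form $\sum_{u\in\cU^{>\alpha}} P_\bfu(u)[P_{\bfx|\bfu=u} V_{\bfs|\bfx,\bfu^{>\alpha}=u}]$ whose state-marginal lies in $\lambda_\bfs$ by feasibility, so the suffix is jointly typical with the true codeword with high probability over the private $r^{(u)}$'s and the channel. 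Part (ii) is the delicate part and the main obstacle. Here I invoke the Generalized Plotkin bound: for \emph{any} positive-rate code, a randomly sampled pair of codeword suffixes has joint type that is (close to) a convex combination of product distributions with probability $\Omega(1)$; combined with the fact that $\delta$-goodness of $R$ was defined precisely so that for every feasible suffix attack the symmetrizability set $\cV$ is \emph{avoided} (the rate constraint in~\eqref{eqn:converse-rate} being the binding one), a second concentration argument over the private suffix randomness — crucially unknown to James at the start of the suffix — shows the true suffix codeword-chunks are jointly typical (w.r.t.\ a convex combination of products) with \emph{every} suffix chunk of \emph{every} other list message, and non-symmetrizability then forces those other messages to fail the consistency check. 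Finally, for $\alpha' < \alpha_{\mathrm{true}}$ I would handle the ``premature decoding point'' issue by the interpolation trick mentioned in the sketch: construct two interpolating jamming trajectories between the $\alpha'$-attack and the $\alpha_{\mathrm{true}}$-attack and use convexity of the feasibility region together with linearity of the symmetrizability constraint to push non-symmetrizability at $\alpha_{\mathrm{true}}$ back to non-symmetrizability at $\alpha'$. I expect the bookkeeping in step (ii) — reconciling the Generalized Plotkin ``convex combination of products'' structure with the non-symmetrizability condition uniformly over all feasible $(\alpha, V)$ and over all earlier decoding points — to be where essentially all the real work lies; the encoder analysis, the prefix list-decoding, and part (i) are comparatively routine concentration-and-union-bound arguments.
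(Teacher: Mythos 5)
Your overall architecture is the paper's: chunked constant-composition stochastic encoding with vanishing private-randomness rate per chunk, an iterative sweep over $\alpha$ with prefix list-decoding followed by a suffix consistency check, and a final appeal to non-symmetrizability (via the $\delta$-goodness of $R$, Definition~\ref{def:goodness-ach}) to eliminate every spurious list member. Your list-size step, your part (i), and your treatment of premature decoding points $\alpha<\alpha^*$ via interpolation (the paper's Claim~\ref{claim:convex_opt}, which interpolates the prefix jamming distribution with the zero-cost stand-down distribution built from the state $s_0$) all line up with what the paper actually does.

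The one concrete misstep is in your part (ii): the Generalized Plotkin bound (Theorem~\ref{thm:gen-plotkin}) plays no role in the achievability proof --- it is purely a converse tool --- and it could not do the job you assign it. It only guarantees that \emph{some} pair among $N_{\textnormal{Plotkin}}$ codewords has joint type near a convex combination of products, with constant probability; for achievability you need that the \emph{transmitted} codeword's suffix chunks are, for \emph{most} realizations of Alice's private randomness, close in joint type to \emph{every} suffix chunk of \emph{every} other list message, and close to the exact product $P_{\bfx|\bfu=u}^{\ot2}$ rather than merely to a CP distribution, because the feasibility sets $\cF_\alpha$ and the goodness condition are stated against $P_{\bfx|\bfu=u}^{\ot2}$. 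That is precisely what the paper's Claim~\ref{claim:listprop} establishes by Sanov-type concentration over the independent code construction; the ``second concentration argument'' you mention is the real mechanism, and the Plotkin invocation should simply be deleted. A smaller omission: to convert ``joint type within $\slk{x|u}$ of the product'' into a violation of the approximate symmetrizability set ${\cV}'_\delta$, the paper must divide through by $P_{\bfx|\bfu=u}(x)P_{\bfx|\bfu=u}(x')$, and therefore first needs Claim~\ref{claim:min} to perturb $P_{\bfx|\bfu}$ so that every nonzero input probability is at least $\slk{MIN}=\poly(\delta)$; without that lower bound the final contradiction fails for input symbols of vanishing probability.
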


\begin{theorem}[Converse]
	\label{thm:converse_main}
	For any $\slackrate_0>0$ rate $C+\slackrate_0$ is not achievable.
\end{theorem}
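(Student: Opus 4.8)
The plan is to show that for any code $(\enc,\dec)$ of rate $R > C$, James has a causal jamming strategy that forces the average error probability to be bounded away from zero, which by the usual averaging argument suffices. I would implement the two-phase ``babble-and-push'' attack sketched in Section~\ref{sec:proofsketch}. First I would fix the small slack $\varepsilon$ (hence $K = 1/\varepsilon$ chunks) and, in a pre-transmission analysis of the codebook $\cC$, extract the approximately chunk-wise constant-composition subcode $\cC' \subseteq \cC$ by a $\delta$-net argument on the $(1/\varepsilon)$-tuple of chunk-wise types: since the net has $\cO((1/\delta)^{|\cX|/\varepsilon})$ cells, the most populous cell retains an $\Omega(\delta^{|\cX|/\varepsilon})$-fraction of codewords and pins down the chunk-wise input distributions $\{P_{\bfx|\bfu=u}\}_{u=1}^K$ (and, up to $\delta$ slack, the time-sharing $P_\bfu$). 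Because $R > C = \limsup_K C_K$, for $K$ large enough $R > C_K + \slackrate_0/2$, so in the inner $\min$ of~\eqref{eqn:converse-rate} the attacked rate is strictly below $R$; I would choose the prefix-length $\alpha \in \cA$ and the pair of jamming distributions $(V_{\bfs|\bfx,\bfu^{\le\alpha}}, V_{\bfs|\bfx,\bfx',\bfu^{>\alpha}}) \in \cF_\alpha(P_\bfu, P_{\bfx|\bfu})$ attaining (to within slack) this minimum, with the suffix distributions additionally lying in the symmetrizing set $\cV$.

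Next I would analyze the attack itself. In the \emph{babble phase} (chunks $u \le \alpha K$) James applies $V_{\bfs|\bfx,\bfu^{\le\alpha}=u}$ symbol-by-symbol to the observed $\vx(t)$, inducing the product-over-chunks DMC $\{W_{\bfy|\bfx,\bfu=u}\}$; since the normalized cumulative mutual information $I(P_{\bfu^{\le\alpha}}, P_{\bfx|\bfu^{\le\alpha}}, V_{\bfs|\bfx,\bfu^{\le\alpha}}) < R$, a strong-converse / Fano-type argument over the prefix shows that, conditioned on a positive-probability event over the codeword being in $\cC'$, after chunk $\alpha K$ the set of messages compatible with Bob's prefix observation $\vy^{\le\alpha}$, the codebook, and the induced prefix channels still has size $2^{\Omega(n)}$ --- the ``large prefix-list''. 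Because the channel is state-deterministic, James can compute exactly this list from his causal view; he samples a uniformly random $m'$ from it, sets the spoofing suffix $\vx'^{>\alpha}$ to the corresponding codeword suffix, and in the \emph{push phase} applies $V_{\bfs|\bfx,\bfx',\bfu^{>\alpha}=u}(\cdot\mid \vx(t),\vx'(t))$ symbol-by-symbol. State-feasibility of $\cF_\alpha$ guarantees the concatenated state sequence's type lies in $\lambda_\bfs$; here I would invoke the Generalized Plotkin bound~\cite{wbbj-2019-omni-avc} on randomly sampled pairs of \emph{codeword suffixes} to argue that with probability $\Omega_\varepsilon(1)$ the pair $(\vx^{>\alpha}_m, \vx'^{>\alpha}_{m'})$ has joint type close to a convex combination of products of $P_{\bfx|\bfu=u}$'s, which is exactly the class of $P_{\bfx,\bfx'}$ against which the suffix cost in $\cF_\alpha$ was budgeted --- so the cost constraint holds along this typical event rather than only in the worst case. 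Symmetrizability of the suffix distributions then makes Bob's suffix $\vy^{>\alpha}$ equally consistent with $(\vx^{>\alpha}_m \text{ transmitted},\ \vx'^{>\alpha}_{m'} \text{ spoofed})$ and with the reversed hypothesis, so Bob cannot distinguish $m$ from $m'$ and errs with probability at least $\tfrac12$ on this event; generalizing from deterministic to stochastic encoders follows the synthesis of information-theoretic and list-decoding techniques of~\cite{dey-causal-it2013}, replacing ``codeword'' by ``(message, private-randomness) pair'' throughout. Finally, optimizing the choice of $(\alpha, V_{\bfs|\bfx,\bfu^{\le\alpha}}, V_{\bfs|\bfx,\bfx',\bfu^{>\alpha}})$ over all feasible tuples --- matching the outer structure of~\eqref{eqn:converse-rate} --- and the $\min$ with the oblivious-type term $\min_{V_{\bfs|\bfx,\bfu}\in\cF(P_{\bfx|\bfu})} I(P_{\bfx|\bfu},V_{\bfs|\bfx,\bfu})$ (which handles the $\alpha = 1$ / pure-babble regime), yields a contradiction with reliable decodability at rate $R > C_K + \slackrate_0$.

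I expect the main obstacle to be the suffix-disambiguation analysis, i.e. showing rigorously that the babble-phase list remains exponentially large \emph{and} that, on an event of probability bounded away from zero (not just in expectation), the suffix pair $(\vx^{>\alpha}_m, \vx'^{>\alpha}_{m'})$ simultaneously (i) has a joint type in the Plotkin-guaranteed convex-combination-of-products region so the state cost is met, and (ii) is such that the symmetrized channel genuinely confuses Bob. The delicacy is that the large-list event, the Plotkin joint-type event, and the randomness of $m'$ and of the private encoding randomness are not independent, and one must be careful that conditioning on the subcode $\cC'$ and on James' causal prefix observation does not destroy the concentration needed for the suffix joint types. Controlling these dependencies --- and in particular carrying the Generalized Plotkin guarantee through to \emph{suffixes} of a code rather than whole codewords, where the residual private-randomness rate $\varepsilon^3$ being negligible relative to the suffix length $\varepsilon n$ is what rescues concentration --- is the crux; the remaining steps (net extraction, strong converse on the prefix, cost bookkeeping, final optimization) are comparatively routine adaptations of~\cite{dey-causal-it2013,chen2019capacity}.
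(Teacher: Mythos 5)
Your proposal follows essentially the same route as the paper's proof: subcode extraction by a net over chunk-wise types, a babble phase that leaves the message with residual entropy $\Omega(n\delta)$ given the prefix output, posterior computation enabled by state-determinism, a push phase with symmetrizing suffix distributions, the Generalized Plotkin bound to guarantee a completely-positive joint type for the sampled suffix pair, and a concentration bound for state-cost feasibility. The dependency worries you raise at the end are resolved in the paper exactly the way you hope: by chaining conditional probabilities of the nested events (codeword lands in the subcode; large residual entropy; non-collision together with CP joint type; feasible cost), each bounded below by a constant independent of $n$, with the Plotkin step implemented by drawing $N = N_{\textnormal{Plotkin}}$ i.i.d.\ samples from the posterior and paying a $1/N^2$ factor.

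One detail in your write-up does need correcting. James must sample $(\bfm',\bfr')$ from the \emph{posterior} $P'_{(\bfm,\bfr)|\vbfy^{\le\alpha}}$ restricted to $\cM'\times\cR'$, not uniformly from the ``compatible list.'' Your claim that Bob errs with probability at least $1/2$ on the confusion event rests on the exchangeability $P_{\vbfy,(\bfm,\bfr),(\bfm',\bfr')} = P_{\vbfy,(\bfm',\bfr'),(\bfm,\bfr)}$, and the prefix half of that identity holds only because, conditioned on $\vy^{\le\alpha}$, the true and spoofing message--key pairs are i.i.d.\ from the same distribution; uniform sampling from a list generally breaks this unless the posterior happens to be uniform on it. With posterior sampling the argument closes as you intend: the high residual entropy still yields non-collision via the lemma imported from \cite{dey-causal-it2013}, and the suffix symmetrization identity then forces any decoder to miss at least one of $m,m'$, giving the factor $1/2$.
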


Before we prove the theorems above we state a technical claim.
For $\cU \coloneqq [K]$ and $\slackrate >0$, $\underline{C}_{K,\slackrate}$ defined below will be used in the proof of Theorem~\ref{thm:achievability}.
The remainder of this work includes several slackness parameters that are mutually related and highlighted in blue to ease the reading and verification process.
% = \sup{\cR_\slackrate}$. 
%Then, by studying the complement of $\cR_{\slackrate}$ we have that 
\begin{align}
	\underline{C}_{K,\slackrate} &\coloneqq \max_{\substack{P_{\bfx|\bfu}\in\Delta(\cX|\cU) \\ \sqrbrkt{P_\bfu P_{\bfx|\bfu}}_\bfx\in\lambda_\bfx}} 
	\min\curbrkt{\min_{ V_{\bfs|\bfx,\bfu}\in\cF(P_{\bfx|\bfu})} I(P_{\bfx|\bfu}, V_{\bfs|\bfx,\bfu}), 
	\min_{\substack{(\alpha, (V_{\bfs|\bfx,\bfu^{\le\alpha}}, V_{\bfs|\bfx,\bfx',\bfu^{>\alpha}}))\in\cA\times\cF_\alpha(P_{\bfx|\bfu}) \\ 
	(V_{\bfs|\bfx,\bfu^{\le\alpha}}, V'_{\bfs|\bfx,\bfx',\bfu^{>\alpha}})\in \cF_\alpha(P_{\bfx|\bfu})\\
	\forall u\in\cU^{>\alpha},\,  (V_{\bfs|\bfx,\bfx',\bfu^{>\alpha} = u}, V_{\bfs|\bfx,\bfx',\bfu^{>\alpha} = u}')\in\cV'_\slackrate}}
	I(P_{\bfx|\bfu}, V_{\bfs|\bfx,\bfu^{\le\alpha}})}. \notag 
\end{align}
Averaging over $\frac{ V_{\bfs|\bfx,\bfx',\bfu^{>\alpha} = u}+ V_{\bfs|\bfx,\bfx',\bfu^{>\alpha} = u}'}{2}$ in the rightmost $\min$ in the expression above, we have
\begin{align}
	\underline{C}_{K,\slackrate} &\coloneqq \max_{\substack{P_{\bfx|\bfu}\in\Delta(\cX|\cU) \\ \sqrbrkt{P_\bfu P_{\bfx|\bfu}}_\bfx\in\lambda_\bfx}} 
	\min\curbrkt{\min_{ V_{\bfs|\bfx,\bfu}\in\cF(P_{\bfx|\bfu})} I(P_{\bfx|\bfu}, V_{\bfs|\bfx,\bfu}), 
		\min_{\substack{(\alpha, (V_{\bfs|\bfx,\bfu^{\le\alpha}}, V_{\bfs|\bfx,\bfx',\bfu^{>\alpha}}))\in\cA\times\cF_\alpha(P_{\bfx|\bfu}) \\ 
				\forall u\in\cU^{>\alpha},\,  V_{\bfs|\bfx,\bfx',\bfu^{>\alpha} = u} \in \cV_\slackrate}}
		I(P_{\bfx|\bfu}, V_{\bfs|\bfx,\bfu^{\le\alpha}})}. \notag 
\end{align}

For $\cU \coloneqq [K]$, $\slackrate >0$, and $\slk{state}=\poly(\delta)$, $\bar{C}_{K,\slackrate}$ defined below will be used in the proof of Theorem~\ref{thm:converse_main}.
 
\begin{align}
	\bar{C}_{K,\slackrate} \coloneqq \max_{\substack{P_{\bfx|\bfu}\in\Delta(\cX|\cU) \\ \sqrbrkt{P_\bfu P_{\bfx|\bfu}}_\bfx\in\lambda_\bfx}} 
	\min\Bigg\{
		& \min_{ V_{\bfs|\bfx,\bfu}\in\cF_{\slk{state}}(P_{\bfx|\bfu})} I(P_{\bfx|\bfu}, V_{\bfs|\bfx,\bfu}), \notag \\
		& \min_{\substack{(\alpha, (V_{\bfs|\bfx,\bfu^{\le\alpha}}, V_{\bfs|\bfx,\bfx',\bfu^{>\alpha}}))\in\cA\times\cF_{\alpha,\slk{state}}(P_{\bfx|\bfu}) \\ 
		% (V_{\bfs|\bfx,\bfu^{\le\alpha}}, V'_{\bfs|\bfx,\bfx',\bfu^{>\alpha}})\in \cF_{\alpha,\delta'}(P_{\bfx|\bfu})\\
		\forall u\in\cU^{>\alpha},\,  V_{\bfs|\bfx,\bfx',\bfu^{>\alpha} = u}\in\cV}}
		I(P_{\bfx|\bfu^{\le\alpha}}, V_{\bfs|\bfx,\bfu^{\le\alpha}})
	\Bigg\}. \label{eqn:converse-rate} 
\end{align}

Averaging over $\frac{ V_{\bfs|\bfx,\bfx',\bfu^{>\alpha} = u}+ V_{\bfs|\bfx,\bfx',\bfu^{>\alpha} = u}'}{2}$ in the rightmost $\min$ in the expression above, we have
\begin{align}
\bar{C}_{K,\slackrate} \coloneqq \max_{\substack{P_{\bfx|\bfu}\in\Delta(\cX|\cU) \\ \sqrbrkt{P_\bfu P_{\bfx|\bfu}}_\bfx\in\lambda_\bfx}} 
	\min\Bigg\{
		& \min_{ V_{\bfs|\bfx,\bfu}\in\cF_{\slk{state}}(P_{\bfx|\bfu})} I(P_{\bfx|\bfu}, V_{\bfs|\bfx,\bfu}), \notag \\
		& \min_{\substack{(\alpha, (V_{\bfs|\bfx,\bfu^{\le\alpha}}, V_{\bfs|\bfx,\bfx',\bfu^{>\alpha}}))\in\cA\times\cF_{\alpha,\slk{state}}(P_{\bfx|\bfu}) \\ 
		% (V_{\bfs|\bfx,\bfu^{\le\alpha}}, V'_{\bfs|\bfx,\bfx',\bfu^{>\alpha}})\in \cF_{\alpha,\delta'}(P_{\bfx|\bfu})\\
		\forall u\in\cU^{>\alpha},\,  V_{\bfs|\bfx,\bfx',\bfu^{>\alpha} = u} \in \cV}}
		I(P_{\bfx|\bfu^{\le\alpha}}, V_{\bfs|\bfx,\bfu^{\le\alpha}})
	\Bigg\}.
\end{align}
%\subsection{For discussion towards ISIT}
%
%The bad case is that there exists $\cU, (P_\bfu, P_{\bfx|\bfu})\in\Delta(\cU)\times\Delta(\cX|\cU)$,  $\sqrbrkt{P_{\bfx|\bfu}}_\bfx\in\lambda_\bfx$ such that the set 
%$$
%\cF_{\alpha}(P_\bfu,P_{\bfx|\bfu}) \cap \cV \subseteq \partial(\cF_{\alpha}(P_\bfu,P_{\bfx|\bfu}))
%$$
%

\begin{claim}
\label{claim:upper_lower}
%Let $\slackrate >0$. 
%Let $\slk{K}$ be a function of $\slackrate$ to be defined later.
%Let $C=\sup_{\slackrate,K=\slk{K}}C_{K,\slackrate} $. In what follows, we that $C$ is the causal-capacity for general AVC channels. Specifically, we show that
%\end{claim}
%
%\begin{claim}
Let $K$ be a positive integer.
Let $\delta$ be sufficiently small and let $\slk{state}=\poly(\delta)$, it holds that
$|C_K-\bar{C}_{K,\delta}| \leq \slk{GAP}$ and $|C_K-\underline{C}_{K,\delta}| \leq \slk{GAP}$, where $\slk{GAP}$ depends only on $\poly(\delta)$ , the parameters of the channel $W$, and the cost function $\cost$.
\end{claim}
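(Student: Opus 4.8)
The plan is to regard $C_K$, $\underline{C}_{K,\delta}$ and $\bar{C}_{K,\delta}$ as the optimal values of the \emph{same} $\max$--$\min$ program over products of probability simplices, whose feasible regions differ only by a $\poly(\delta)$-sized perturbation: in $\underline{C}_{K,\delta}$ the symmetrizability constraint is relaxed ($\cV\subseteq\cV_\slackrate$, $\cV'\subseteq\cV'_\slackrate$) while the state-cost constraint is exact, and in $\bar{C}_{K,\delta}$ the symmetrizability constraint is exact while the state-cost region is shrunk ($\interior_{\slk{state}}(\lambda_\bfs)\subseteq\lambda_\bfs$, $\slk{state}=\poly(\delta)$). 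It is convenient to use the averaged forms of all three quantities, so that each program carries a single suffix law constrained to lie in $\cV$ or $\cV_\slackrate$. Two of the four inequalities are then immediate from monotonicity of $\max$--$\min$ under inclusion of the feasible set: since $\cV\subseteq\cV_\slackrate$, James minimizes the cumulative mutual information over a larger family of suffix laws in $\underline{C}_{K,\delta}$, so $\underline{C}_{K,\delta}\le C_K$; and since $\interior_{\slk{state}}(\lambda_\bfs)\subseteq\lambda_\bfs$, he minimizes over a smaller set of state laws in $\bar{C}_{K,\delta}$ (in both the $\min_{\cF_{\slk{state}}}$ term and the $\min_{\cF_{\alpha,\slk{state}}}$ term), so $\bar{C}_{K,\delta}\ge C_K$. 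What remains is to show $\underline{C}_{K,\delta}\ge C_K-\slk{GAP}$ and $\bar{C}_{K,\delta}\le C_K+\slk{GAP}$.

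The hard directions rest on three facts. \emph{(i) Continuity of the objective:} if two state laws differ by $\eta$ in $\ell_\infty$, the induced single-letter channels differ by $O(|\cS||\cY|\eta)$ in total variation, so by the standard continuity estimate for mutual information the cumulative mutual information $I(P_{\bfx|\bfu^{\le\alpha}},V_{\bfs|\bfx,\bfu^{\le\alpha}})$ (and $I(P_{\bfx|\bfu},V_{\bfs|\bfx,\bfu})$) changes by $O(\eta\log(1/\eta))$, uniformly in $P_{\bfx|\bfu}$, with a constant depending only on the alphabets. \emph{(ii) A Hoffman bound for the symmetrizability polytopes:} $\cV$ and $\cV'$ are cut out of the relevant simplices by finitely many linear equalities whose coefficients are entries of $W$; hence, whenever $\cV$ (resp.\ $\cV'$) is nonempty, every point of $\cV_\slackrate$ (resp.\ $\cV'_\slackrate$) lies within $\ell_\infty$-distance $L_W\slackrate$ of $\cV$ (resp.\ $\cV'$), with Hoffman constant $L_W$ depending only on $W$. \emph{(iii) Retraction toward the stand-down state:} since $\lambda_\bfs$ is a single halfspace $\{\langle P_\bfs,\cost\rangle\le\Lambda\}$ and $s_0$ is zero-cost, replacing any chunkwise state law by its $(1-\rho)$--$\rho$ mixture with the law that plays $s_0$ deterministically scales that chunk's state cost by $1-\rho$ (never increasing it) while moving the induced channel by $O(\rho)$ in total variation; applied to every prefix chunk it scales the whole normalized prefix state cost by $1-\rho$.

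For $\underline{C}_{K,\delta}\ge C_K-\slk{GAP}$, fix the maximizing $P_{\bfx|\bfu}$ for $\underline{C}_{K,\delta}$ and a nearly inner-optimal tuple $(\alpha,V_{\bfs|\bfx,\bfu^{\le\alpha}},V_{\bfs|\bfx,\bfx',\bfu^{>\alpha}})$; project the suffix family chunkwise into $\cV$ by (ii)---which changes the \emph{normalized} suffix cost by $O(\slackrate)$, since the per-chunk change is $O(L_W\slackrate)$ and the costs are \emph{averaged}, not summed, over the $(1-\alpha)K$ suffix chunks, and leaves the prefix (hence the objective) untouched---and then apply (iii) to the prefix with $\rho=\poly(\slackrate)$ to re-enter $\lambda_\bfs$, at the price of an $O(\poly(\slackrate)\log(1/\slackrate))$ change in the prefix cumulative mutual information via (i); the $\min_{\cF}I$ term is treated by retracting its state law directly. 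The bound $\bar{C}_{K,\delta}\le C_K+\slk{GAP}$ is the mirror image: at the maximizing $P_{\bfx|\bfu}$ for $\bar{C}_{K,\delta}$, take a state law / tuple attaining the inner minima of the $C_K$-program and apply (iii) to the (prefix) state law only, leaving the exactly symmetrizing suffix family intact, so the perturbed choice lands in $\interior_{\slk{state}}(\lambda_\bfs)$ with an $O(\poly(\slackrate)\log(1/\slackrate))$ change in objective. In all cases $\slk{GAP}=O(\poly(\slackrate)\log(1/\slackrate))$ with a constant depending only on $L_W$, the alphabet sizes, $\Lambda$ and the smallest positive value $\gamma_{\min}$ of $\cost$, and not on $K$.

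The step I expect to be the main obstacle is making (ii)+(iii) go through uniformly near the boundary. First, a constraint set ($\cV$, $\cV'$, or one of them intersected with an available cost budget) may be \emph{empty}, making the corresponding inner $\min$ equal $+\infty$; one must check that for $\slackrate$ below a $W$-dependent threshold the relaxed set is empty as well (which follows from (ii): a nonempty $\cV_\slackrate$ of minimum cost $c$ forces a nonempty $\cV$ of minimum cost $\le c+O(\slackrate)$), so that both programs discard the same term. Second, and more delicate, is the coupling between the prefix state budget---which drives the objective---and the budget the (approximately) symmetrizing suffix consumes: the retraction (iii) fails to free up the required $\poly(\slackrate)$ of budget precisely when the prefix carries almost no cost, i.e.\ in the small-$\alpha$ / near-noiseless-prefix regime. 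The resolution I would pursue is that the budget-constrained prefix value $b\mapsto\min\{I(P_{\bfx|\bfu^{\le\alpha}},V_{\bfs|\bfx,\bfu^{\le\alpha}}):\text{normalized cost}\le b\}$ is Lipschitz in $b$ with constant $O(\log|\cX|/\gamma_{\min})$ (a marginal unit of budget buys at most mass $1/\gamma_{\min}$ on a positive-cost state, hence at most that much total-variation change in the induced channels), and is continuous in $\alpha$ with an alphabet-only modulus; combining these with a convexity-in-$\alpha$ interpolation of the kind already used in the achievability analysis keeps the final estimate $\poly(\slackrate)$ and independent of $K$. The fully degenerate cases---$\Lambda=0$, or $\cost\equiv0$ so the cost constraint is vacuous---are handled separately and trivially.
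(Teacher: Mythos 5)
Your skeleton matches the paper's: the two ``easy'' inequalities $\underline{C}_{K,\delta}\le C_K\le\bar{C}_{K,\delta}$ follow from nesting of the feasible sets exactly as you say, and the paper likewise reduces the claim to bounding $\bar{C}_{K,\delta}-\underline{C}_{K,\delta}$ by perturbation arguments: it exploits that $\cF,\cF_\alpha,\cV,\cV'$ are polytopes cut out by linear constraints (your Hoffman-type bound is the same device, stated more explicitly) together with continuity of mutual information. You also correctly isolate the genuinely delicate case --- the one where $\cV_\delta\cap[\cF_{\alpha}(P_{\bfx|\bfu})]_{V_{\bfs|\bfx,\bfx',\bfu^{>\alpha}}}$ is nonempty for every $\delta>0$ but $\cV\cap[\cF_{\alpha,\slk{state}}(P_{\bfx|\bfu})]_{V_{\bfs|\bfx,\bfx',\bfu^{>\alpha}}}$ is empty, i.e.\ the second inner minimum is present in $\underline{C}_{K,\delta}$ but absent (hence effectively $+\infty$) in $\bar{C}_{K,\delta}$.

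The gap is in your resolution of that case. Your mechanism for recovering the $\slk{state}$ of slack is to retract the \emph{prefix} jamming law toward the stand-down state $s_0$; you cannot retract the suffix toward $s_0$ because the $s_0$-law is not in $\cV$ (it would require $\phi(x)=\phi(x')$ for all $x,x'$, contradicting injectivity of $\phi$). So when the prefix already carries (essentially) zero cost and the minimum cost of \emph{exact} symmetrization of the suffix sits in $(\Lambda-\slk{state},\Lambda]$, there is nothing to retract: with $P_{\bfx|\bfu}$ held fixed, $\cV\cap[\cF_{\alpha,\slk{state}}(P_{\bfx|\bfu})]_{V_{\bfs|\bfx,\bfx',\bfu^{>\alpha}}}$ is genuinely empty while $\underline{C}_{K,\delta}$'s corresponding minimum is finite and can be small, and the Lipschitz-in-budget estimate you invoke only controls how the \emph{objective} responds to a budget change --- it does not restore \emph{feasibility} of the symmetrization constraint, which is what fails here. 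The paper closes this case by a different move: it perturbs the input distribution $P_{\bfx|\bfu}$ to $P'_{\bfx|\bfu}$ so that the state cost of a fixed exact symmetrizer $\bar{V}_{\bfs|\bfx,\bfx',\bfu^{>\alpha}}\in\cV$ (which is independent of $P_{\bfx|\bfu}$, the cost being quadratic in it) drops by $\slk{state}$, with $\|P_{\bfx|\bfu}-P'_{\bfx|\bfu}\|_\infty=\poly(\slk{state})$ and hence a controlled change in the objective; since both quantities are maxima over $P_{\bfx|\bfu}$, evaluating at the nearby $P'_{\bfx|\bfu}$ is legitimate. To complete your argument you would need either this perturbation of the input distribution or some substitute that makes the contracted constraint set nonempty; as written, the step fails precisely in the regime you yourself flagged as the main obstacle.
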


\proof
%We show that for sufficiently large $K$, $\delta$ sufficiently small, and $\slk{state}=\poly(\delta)$ that
%$\bar{C}_{K,\delta} - \underline{C}_{K,\delta} \leq \slk{GAP}$ where $\slk{\delta}=\poly(\delta)$ for a suitable function that depends polynomially on $\delta$, $K$, the parameters of the channel $W$, and the cost function $\cost$.

Let $\delta>0$ be a sufficiently small slackness parameter (the requirements on $\delta$ will be discussed below).
Let $|\cU|=K$.
Let $P_{\bfx|\bfu}\in\Delta(\cX|\cU)$ come $\delta$-close to optimizing $C_K$ as defined in Equation~(\ref{eqn:capacity}). We compare the first and second expressions in the definition of $C_K$ with those of $\bar{C}_{K,\delta}$ and $\underline{C}_{K,\delta}$ when evaluated with $\cU$ and
$P_{\bfx|\bfu}\in\Delta(\cX|\cU)$. 
 By our definitions, it follows that $C_K \leq \bar{C}_{K,\delta}$ and $C_K \geq \underline{C}_{K,\delta}$.
 Thus to prove our assertion, it suffices to compare the first and second expressions in $\bar{C}_{K,\delta}$ and $\underline{C}_{K,\delta}$ to bound $\bar{C}_{K,\delta} - \underline{C}_{K,\delta}$, and thus, in turn, to bound $|C_K-\bar{C}_{K,\delta}|$ and $|C_K-\underline{C}_{K,\delta}|$.
 
 We note that once $P_{\bfx|\bfu}$ is fixed, the sets $\cF$, $\cF_{\alpha}$, and $V$ are convex sets defined by linear constraints. 
 Similarly for the ``$\slk{state}$-contracted'' versions $\cF_{\slk{state}}$ and $\cF_{\alpha,\slk{state}}$ or the ``$\delta$-extracted'' variant $V_\delta$.
 We also note that by the properties given in Section~\ref{sec:global-assumptions}, for sufficiently small $\delta$, we have that   $\cF_{\slk{state}}$ and $\cF_{\alpha,\slk{state}}$ are non-empty.
 
One can bound the difference between the first expression in $\bar{C}_{K,\delta}$ and $\underline{C}_{K,\delta}$ using our observation that  $\cF_{\slk{state}}$ and $\cF_{\alpha,\slk{state}}$ are non-empty combined with standard bounds on the mutual-information function.
More formally, for any optimizing $\underline{V}_{\bfs|\bfx,\bfu} \in \cF$ for $\underline{C}_{K,\delta}$ one can find $\bar{V}_{\bfs|\bfx,\bfu}\in \cF_{\slk{state}}$ for $\bar{C}_{K,\delta}$ for which $\|\underline{V}_{\bfs|\bfx,\bfu}- \bar{V}_{\bfs|\bfx,\bfu}\|_{\infty}$ is at most a linear function of $\slk{state}$ and other constant parameters involved in the definition of $\cF$.

For the second expression, we study the set $\cV$ and $\cV_\delta$ in combination with $\cF_{\alpha}$ and $\cF_{\alpha,\slk{state}}$. Namely, we study the set $\cV \cap [\cF_{\alpha,\slk{state}}]_{V_{\bfs|\bfx,\bfx',\bfu^{>{\alpha}}}}$ (where the latter set $\cF_{\alpha,\slk{state}}$ in the intersection is projected onto the variables of $V_{\bfs|\bfx,\bfx',\bfu^{>{\alpha}}}$). 
If this intersection is non-empty for any sufficiently small $\delta$, then so is $\cV_\delta \cap [\cF_{\alpha}]_{V_{\bfs|\bfx,\bfx',\bfu^{>{\alpha}}}}$, and thus, as previously discussed in the analysis of the first expression, 
for any optimizing $({\alpha}, (\underline{V}_{\bfs|\bfx,\bfu^{\le{\alpha}}}, \underline{V}_{\bfs|\bfx,\bfx',\bfu^{>{\alpha}}}))\in\cA\times\cF_{{\alpha}}$ such that  $\forall u\in\cU^{>{\alpha}},\,  \underline{V}_{\bfs|\bfx,\bfx',\bfu = u} \in \cV_\slackrate$ in $\underline{C}_{K,\delta}$ one can find a corresponding 
$(\alpha, (\bar{V}_{\bfs|\bfx,\bfu^{\le\alpha}}, \bar{V}_{\bfs|\bfx,\bfx',\bfu^{>\alpha}}))\in\cA\times\cF_{\alpha,\slk{state}}$ such that  $\forall u\in\cU^{>\alpha},\,  \bar{V}_{\bfs|\bfx,\bfx',\bfu = u} \in \cV$ in $\bar{C}_{K,\delta}$ for which 
$\|(\underline{V}_{\bfs|\bfx,\bfu^{\le{\alpha}}}, \underline{V}_{\bfs|\bfx,\bfx',\bfu^{>{\alpha}}})-  (\bar{V}_{\bfs|\bfx,\bfu^{\le\alpha}}, \bar{V}_{\bfs|\bfx,\bfx',\bfu^{>\alpha}})\|_{\infty}$ is at most a linear function of $\delta$, $\slk{state}$ and other constant parameters involved in the definitions of $\cF$ and $\cV$.

If, alternatively, for some small $\delta$ the intersection $\cV_\delta \cap [\cF_{\alpha}]_{V_{\bfs|\bfx,\bfx',\bfu^{>{\alpha}}}}$ is empty, then so is $\cV \cap [\cF_{\alpha,\slk{state}}]_{V_{\bfs|\bfx,\bfx',\bfu^{>{\alpha}}}}$ and $\bar{C}_{K,\delta}$ and $\underline{C}_{K,\delta}$  are not impacted by the second expressions under study.

Finally, if for any $\delta>0$, $\cV_\delta \cap [\cF_{\alpha}(P_{\bfx|\bfu})]_{V_{\bfs|\bfx,\bfx',\bfu^{>{\alpha}}}}$ is non-empty but 
$\cV \cap [\cF_{\alpha,\slk{state}}(P_{\bfx|\bfu})]_{V_{\bfs|\bfx,\bfx',\bfu^{>{\alpha}}}}$ is empty, we slightly modify the studied $P_{\bfx|\bfu}$ to $P'_{\bfx|\bfu}$ to guarantee that the new sets $\cV_\delta \cap [\cF_{\alpha}(P'_{\bfx|\bfu})]_{V_{\bfs|\bfx,\bfx',\bfu^{>{\alpha}}}}$ and 
$\cV \cap [\cF_{\alpha,\slk{state}}(P'_{\bfx|\bfu})]_{V_{\bfs|\bfx,\bfx',\bfu^{>{\alpha}}}}$ are both non-empty, which brings us to a case analyzed above.
Our modification on $P_{\bfx|\bfu}$ will be of limited statistical distance implying a corresponding limited impact on the objective $I(P_{\bfx|\bfu^{\le\alpha}}, V_{\bfs|\bfx,\bfu^{\le\alpha}})$.
To modify $P_{\bfx|\bfu}$, we fix an element $\bar{V}_{\bfs|\bfx,\bfx',\bfu^{>{\alpha}}} \in \cV \cap [\cF_{\alpha}(P_{\bfx|\bfu})]_{V_{\bfs|\bfx,\bfx',\bfu^{>{\alpha}}}}$ and its corresponding $\bar{V}_{\bfs|\bfx,\bfu^{\le\alpha}}$, the existence of which follows from our assumption that for any $\delta>0$, $\cV_\delta \cap [\cF_{\alpha}(P_{\bfx|\bfu})]_{V_{\bfs|\bfx,\bfx',\bfu^{>{\alpha}}}}$ is non-empty.
With $(\bar{V}_{\bfs|\bfx,\bfu^{\le\alpha}},\bar{V}_{\bfs|\bfx,\bfx',\bfu^{>{\alpha}}})$ in mind, $\cF_\alpha$ is a structured quadratic function of $P_{\bfx|\bfu}$.
One can now reduce the state-cost of $(P'_{\bfx|\bfu}, (\bar{V}_{\bfs|\bfx,\bfu^{\le\alpha}},\bar{V}_{\bfs|\bfx,\bfx',\bfu^{>{\alpha}}}))$ when compared to $(P_{\bfx|\bfu}, (\bar{V}_{\bfs|\bfx,\bfu^{\le\alpha}},\bar{V}_{\bfs|\bfx,\bfx',\bfu^{>{\alpha}}}))$ by $\slk{state}$ by modifying $P_{\bfx|\bfu}$ to $P'_{\bfx|\bfu}$ with 
$\|P_{\bfx|\bfu}-P'_{\bfx|\bfu}\|_\infty$ bounded by a polynomial in $\slk{state}$ and other constant parameters involved in the definition of $\cF_\alpha$.
Accordingly, the objective $I(P'_{\bfx|\bfu},\bar{V}_{\bfs|\bfx,\bfu^{\le\alpha}})$ when compared to $I(P_{\bfx|\bfu},\bar{V}_{\bfs|\bfx,\bfu^{\le\alpha}})$ is similarly bounded.
This now implies that both $\cV_\delta \cap [\cF_{\alpha}(P'_{\bfx|\bfu})]_{V_{\bfs|\bfx,\bfx',\bfu^{>{\alpha}}}}$ and $\cV \cap [\cF_{\alpha,\slk{state}}(P'_{\bfx|\bfu})]_{V_{\bfs|\bfx,\bfx',\bfu^{>{\alpha}}}}$ are non-empty, and we may use the analysis of the similar case studied above.

\section{Proof of achievability (Theorem~\ref{thm:achievability})}
\label{sec:proof-achievability}

% !TEX root = causal_general.tex

%Let the rate be $R = C - \slackrate$. In the proof that follows we use several slackness parameters defined as follws:
%\begin{align}
%\gamma &< \slk{LB}/2 
%	& \textrm{common randomness (Claim 6)} \\
%K &= \textcolor{red}{\epsilon^3}
%	& \textrm{number of chunks (Claim. 7)} \\	
%\eps &= \frac{n}{K}
%	& \textrm{fractional length of chunk} \\
%\slk{TYP} &\to 0 
%	& \textrm{typicality slack} \\
%\slk{x|u} &= 3 \eps
%	& \textrm{joint type of codewords (Claim 7)} \\
%\slk{UB} & 
%	& \textrm{upper bound for rate goodness}\\
%\slk{LB} & 
%	& \textrm{lower bound for rate goodness}\\
%\slk{GOOD} & 
%	& \textrm{goodness slack?}
%\slk{SYM} & 
%	& \textrm{symmetrizability slack}
%\end{align}
%Error probabilities:
%	\begin{align}
%	\pr{1} &< 2^{-n \gamma/4}
%		& \textrm{prob of list code being bad (construction, Claim 6)} \\
%	\pr{2} &< \exp\left( -\frac{3}{2} n \eps^3 \right)
%		& \textrm{prob of list code being bad (construction, Claim 7)}
%	\pr{3} &< 2^{- n \eps^3/3 }
%		& \textrm{prob of bad (msg,key) in the list (decoding, Claim 7)}
%	\end{align}
%
\proof

%Fix $ P_{\bfx|\bfu}\in\Delta(\cX|\cU) $. 
Let $ \delta_0 >0$.
Let $R = C-\delta_0$.
Let $\delta$ be a sufficiently small function of $\delta_0$ to be determined shortly.
Let $\slk{MIN}$ be a function of $\delta$ defined shortly through 
Claim~\ref{claim:min}.
Let $\varepsilon(\delta)=\delta\slk{MIN}^2/2$.
Let $K= 1/\varepsilon$.
%That is, $\varepsilon = \Theta(\delta^3)$.
Take $\delta$ to be sufficiently small such that $|C-C_k| \leq \delta_0/4$ and, by Claim~\ref{claim:upper_lower}, $|C_K-\underline{C}_{K,\delta}| \leq \delta_0/4$, implying that $R \leq \underline{C}_{K,\delta}-\delta$.

Let $\slk{UB} \coloneqq \varepsilon$, $\slk{LB} \coloneqq \slk{UB}+\varepsilon\log{|\cX|}$, and $\slk{x|u} \coloneqq 3\varepsilon$.
Note, with this setting of parameters, that $2\slk{x|u}/\slk{MIN}^2=\delta$ (this fact will be used later in our proof).
Without loss of generality, assume $ n $ is sufficiently large so that $ n\eps $ is an integer. 
Let $ \gamma \coloneqq \varepsilon^4$. 
%Let $ R\ge0 $ be $\slackrate$-good w.r.t.\ $ (P_\bfu,P_{\bfx|\bfu}) $ (as per \Cref{def:goodness-ach}). 
Let $ M \coloneqq 2^{nR}, N \coloneqq 2^{n\gamma} $ and $ \cM\coloneqq[M],\cR\coloneqq[N] $. 

Let 
\begin{align}
\vu &\coloneqq (\underbracket{1,\cdots,1}_{n\eps},\underbracket{2,\cdots,2}_{n\eps}, \cdots, \underbracket{K,\cdots,K}_{n\eps}) \in \cU^n. \notag 
\end{align}
For a vector $ \vv\in\Sigma^n $ over some alphabet $ \Sigma $ and $ \alpha\in\cA $, let
\begin{align}
\vv^{\le\alpha} &\coloneqq (\vv(1),\cdots,\vv(\alpha n)) \in\Sigma^{\alpha n}, \notag \\
\vv^{>\alpha} &\coloneqq (\vv(\alpha n + 1), \cdots, \vv(n)) \in\Sigma^{(1-\alpha)n}; \notag 
\end{align}
for $ u\in\cU $, let
\begin{align}
\vv^{(u)} &\coloneqq (\vv((u-1)n\eps+1),\vv((u-1) n\eps + 2),\cdots, \vv(un\eps)) \in \Sigma^{n\eps}. \notag 
\end{align}
For two vectors $ \vv_1\in\Sigma^{n_1} $ and $ \vv_2\in\Sigma^{n_2} $, denote their concatenation by 
\begin{align}
\vv_1\circ\vv_2 &\coloneqq (\vv_1(1),\cdots,\vv_1(n_1),\vv_2(1),\cdots,\vv_2(n_2)) \in\Sigma^{n_1+n_2}. \notag 
\end{align}
For two sets of vectors $ \cV_1\in\Sigma^{n_1} $ and $ \cV_2\in\Sigma^{n_2} $, let 
\begin{align}
\cV_1\circ\cV_2 &\coloneqq \curbrkt{\vv_1\circ\vv_2:\vv_1\in\cV_1,\vv_2\in\cV_2}. \notag 
\end{align}

Our proof uses the following notion of {\em code-goodness}:
\begin{definition}[Rate goodness]
\label{def:goodness-ach}
Let $\slackrate >0$. Fix $ P_{\bfx|\bfu}\in\Delta(\cX|\cU) $. 
We say that a rate $ R\ge0 $ is \emph{$\slackrate$-good} w.r.t.\ $ (P_\bfu, P_{\bfx|\bfu}) $ if 
\begin{enumerate}
	\item \begin{align} 
	R &\le \min_{ V_{\bfs|\bfx,\bfu}\in\cF(P_{\bfx|\bfu})} I(P_{\bfx|\bfu}, V_{\bfs|\bfx,\bfu}); \notag 
	\end{align}

	\item For any $ (\alpha, (V_{\bfs|\bfx,\bfu^{\le\alpha}}, V_{\bfs|\bfx,\bfx',\bfu^{>\alpha}}))\in\cA\times\cF_\alpha(P_{\bfx|\bfu}) $ satisfying $\alpha \leq 1-1/K$ and $ R \in [I(P_{\bfx|\bfu}, V_{\bfs|\bfx,\bfu^{\le\alpha}}) - \slk{LB}, [I(P_{\bfx|\bfu}, V_{\bfs|\bfx,\bfu^{\le\alpha}}) - \slk{UB}]$,
	the following property holds.
	\begin{itemize}

		\item For every $ V_{\bfs|\bfx,\bfx',\bfu^{>\alpha}}'\in\Delta(\cS|\cX^2\times\cU^{>\alpha}) $ such that $ (V_{\bfs|\bfx,\bfu^{\le\alpha}}, V_{\bfs|\bfx,\bfx',\bfu^{>\alpha}}')\in\cF_\alpha(P_{\bfx|\bfu}) $, there exists $ u\in\cU^{>\alpha} $ such that $ (V_{\bfs|\bfx,\bfx',\bfu^{>\alpha} = u}, V_{\bfs|\bfx,\bfx',\bfu^{>\alpha} = u}')\not\in\cV'_\slackrate $. 
	\end{itemize}
\end{enumerate}
\end{definition}

\subsection{Determining a  $P_{\bfx|\bfu}$ for code design}

We first show using Claim~\ref{claim:min} below that there exists a $P_{\bfx|\bfu} \in \Delta(\cX|\cU)$ with $\sqrbrkt{P_\bfu P_{\bfx|\bfu}}_\bfx\in\lambda_\bfx$, such that for any $u \in \cU$, it holds that 
$$
\min_{x; P_{\bfx|\bfu=u}(x)>0}\{P_{\bfx|\bfu=u}(x)\} \geq \slk{MIN}
$$ 
and in addition
\begin{align}
\label{eq:ck}
\underline{C}_{K,\delta} - \slackrate/2 \leq \min\curbrkt{\min_{ V_{\bfs|\bfx,\bfu}\in\cF(P_{\bfx|\bfu})} I(P_{\bfx|\bfu}, V_{\bfs|\bfx,\bfu}), 
	\min_{\substack{(\alpha, (V_{\bfs|\bfx,\bfu^{\le\alpha}}, V_{\bfs|\bfx,\bfx',\bfu^{>\alpha}}))\in\cA\times\cF_\alpha(P_{\bfx|\bfu}) \\ 
	(V_{\bfs|\bfx,\bfu^{\le\alpha}}, V'_{\bfs|\bfx,\bfx',\bfu^{>\alpha}})\in \cF_\alpha(P_{\bfx|\bfu})\\
	\forall u\in\cU^{>\alpha},\,  (V_{\bfs|\bfx,\bfx',\bfu^{>\alpha} = u}, V_{\bfs|\bfx,\bfx',\bfu^{>\alpha} = u}')\in\cV'_\slackrate}}
	I(P_{\bfx|\bfu}, V_{\bfs|\bfx,\bfu^{\le\alpha}})}.
\end{align}

\begin{claim}
\label{claim:min}
Let $\delta>0$. For a suitable $\slk{MIN}$ that depends polynomially on $\delta$ and additional parameters of the channel $W$ at hand it holds 
that for any  $P_{\bfx|\bfu} \in \Delta(\cX|\cU)$ with $\sqrbrkt{P_\bfu P_{\bfx|\bfu}}_\bfx\in\lambda_\bfx$ there exists $P'_{\bfx|\bfu} \in \Delta(\cX|\cU)$ with $\sqrbrkt{P_\bfu P'_{\bfx|\bfu}}_\bfx\in\lambda_\bfx$ such that 
\begin{enumerate}
\item $\min_{x; P'_{\bfx|\bfu=u}(x)>0}\{P'_{\bfx|\bfu=u}(x)\} \geq \slk{MIN}$.
\item The right-hand-side of  (\ref{eq:ck}) when evaluated on $P_{\bfx|\bfu} \in \Delta(\cX|\cU)$ or   $P'_{\bfx|\bfu} \in \Delta(\cX|\cU)$ differs by at most $\delta/2$.
%\item For any $(V_{\bfs|\bfx,\bfu^{\le\alpha}}, V_{\bfs|\bfx,\bfx',\bfu^{>\alpha}})$ and $\alpha$, $(V_{\bfs|\bfx,\bfu^{\le\alpha}}, V_{\bfs|\bfx,\bfx',\bfu^{>\alpha}})\in \cF_\alpha(P'_{\bfx|\bfu})$ implies $(V_{\bfs|\bfx,\bfu^{\le\alpha}}, V_{\bfs|\bfx,\bfx',\bfu^{>\alpha}})\in \cF_\alpha(P_{\bfx|\bfu})$.
%\item $\sqrbrkt{P_\bfu P_{\bfx|\bfu}}_\bfx\in\lambda_\bfx$ implies $\sqrbrkt{P_\bfu P'_{\bfx|\bfu}}_\bfx\in\lambda_\bfx$.
%\item For any $(V_{\bfs|\bfx,\bfu^{\le\alpha}}, V_{\bfs|\bfx,\bfx',\bfu^{>\alpha}})$ and $\alpha$, $|I(P_{\bfx|\bfu}, V_{\bfs|\bfx,\bfu^{\le\alpha}})-I(P'_{\bfx|\bfu}, V_{\bfs|\bfx,\bfu^{\le\alpha}})| \leq \delta/2$.
\end{enumerate}
\end{claim}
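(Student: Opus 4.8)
The plan is to obtain $P'_{\bfx|\bfu}$ from $P_{\bfx|\bfu}$ by a small ``smoothing'' (mixing-with-uniform) perturbation applied chunkwise, and then to argue that (i) this enforces the minimum-mass bound, (ii) it keeps the input constraint satisfied by convexity of $\lambda_\bfx$, and (iii) it perturbs every quantity appearing on the right-hand side of~\eqref{eq:ck} by a controlled amount.

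\emph{Construction.} For each $u\in\cU$ set $P'_{\bfx|\bfu=u} \coloneqq (1-\eta)P_{\bfx|\bfu=u} + \eta\,\unif(\cX)$ for a parameter $\eta$ to be chosen as a suitable polynomial in $\delta$ (and in $1/|\cX|$). Then $\min_{x}P'_{\bfx|\bfu=u}(x) \ge \eta/|\cX|$, so item~1 holds with $\slk{MIN} \coloneqq \eta/|\cX|$, which is polynomial in $\delta$. Since $[P_\bfu P'_{\bfx|\bfu}]_\bfx = (1-\eta)[P_\bfu P_{\bfx|\bfu}]_\bfx + \eta\,\unif(\cX)$ is a convex combination of two elements of $\lambda_\bfx$ (here one uses that $\unif(\cX)\in\lambda_\bfx$, which follows from Assumption~\ref{itm:assumption-ip-constr} together with the fact that the all-zero-cost state $s_0$ of Assumption~\ref{itm:assumption-s0} makes the trivial ``identity'' transmission feasible; alternatively one mixes towards an arbitrary fixed interior point of $\lambda_\bfx$ rather than $\unif$), we get $[P_\bfu P'_{\bfx|\bfu}]_\bfx\in\lambda_\bfx$ by convexity.

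\emph{Continuity of the objective.} It remains to bound the change in the right-hand side of~\eqref{eq:ck}. Note $\|P'_{\bfx|\bfu=u}-P_{\bfx|\bfu=u}\|_\infty \le \eta$ for every $u$. The right-hand side is a min over two inner optimizations, each of which is a minimum of $I(P_{\bfx|\bfu},V_{\bfs|\cdots})$ over a feasible polytope that itself depends on $P_{\bfx|\bfu}$ only through the linear state-cost constraints defining $\cF$, $\cF_\alpha$ (the symmetrizability set $\cV'_\slackrate$ does not depend on $P_{\bfx|\bfu}$ at all). Two facts are needed. First, for fixed feasible $V$, the map $P_{\bfx|\bfu}\mapsto I(P_{\bfx|\bfu},V_{\bfs|\bfx,\bfu^{\le\alpha}})$ is uniformly continuous on the simplex with a modulus that is $O(\eta\log(1/\eta))$ (standard continuity of mutual information; all alphabets are finite). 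Second, the feasible polytopes move continuously with $P_{\bfx|\bfu}$: since $\cF_\alpha(P_{\bfx|\bfu})$ is cut out by affine (indeed, in the suffix term, bilinear-in-$P_{\bfx|\bfu}$ but affine-in-$V$) inequalities whose coefficients shift by $O(\eta)$, any $V\in\cF_\alpha(P_{\bfx|\bfu})$ can be moved to some $V'\in\cF_\alpha(P'_{\bfx|\bfu})$ with $\|V-V'\|_\infty = O(\eta)$ — here one invokes exactly the non-emptiness / slack argument already used in the proof of Claim~\ref{claim:upper_lower} (the sets in question have nonempty interior after the $\poly(\delta)$-contraction, so a feasible point perturbed by $O(\eta)$ can be projected back into the polytope at cost $O(\eta)$), and if some polytope becomes empty the corresponding inner min is vacuous and only helps. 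Combining, the value of each inner min changes by $O(\eta\log(1/\eta))$, hence so does the outer min. Choosing $\eta$ a small enough polynomial in $\delta$ makes this at most $\delta/2$, giving item~2.

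\emph{Main obstacle.} The delicate point is the second fact above: the suffix state-cost constraint in $\cF_\alpha(P_{\bfx|\bfu})$ involves $P_{\bfx|\bfu}^{\otimes 2}$, so it is \emph{quadratic} in $P_{\bfx|\bfu}$, and one must make sure that perturbing $P_{\bfx|\bfu}$ by $\eta$ does not push a feasible $V$ out of the polytope by more than $O(\eta)$ and, conversely, that a near-optimal $V$ for $P_{\bfx|\bfu}$ has a genuinely feasible nearby counterpart for $P'_{\bfx|\bfu}$. This is handled exactly as in Claim~\ref{claim:upper_lower}: work with the $\slk{state}$-contracted feasible sets, which have $\Omega(\poly(\delta))$ interior slack, and choose $\eta$ small relative to that slack so that the $O(\eta)$ shift of the (Lipschitz, on the bounded simplex) quadratic constraint functions is absorbed; the potential emptiness of an intersection with $\cV'_\slackrate$ only removes constraints from the min and is therefore harmless. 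Since $\slk{MIN}=\eta/|\cX|$ and $\eta$ is polynomial in $\delta$, $\slk{MIN}$ depends polynomially on $\delta$ and on the channel parameters, as claimed.
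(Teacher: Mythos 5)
Your overall strategy --- perturb $P_{\bfx|\bfu}$ by a $\poly(\delta)$ amount, keep the input constraint by convexity, and control the change in the right-hand side of~(\ref{eq:ck}) by the same polytope-perturbation and mutual-information-continuity argument used for Claim~\ref{claim:upper_lower} --- is exactly the route the paper takes; its entire proof of Claim~\ref{claim:min} is the single sentence ``follows lines similar to those presented in the proof of Claim~\ref{claim:upper_lower}.'' Your continuity analysis (including the correct observation that the suffix constraint is quadratic in $P_{\bfx|\bfu}$ and must be absorbed by the $\slk{state}$ interior slack) is a faithful and more explicit version of that argument, and I have no objection to it.

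There is, however, a genuine gap in your construction step, i.e., in how you simultaneously obtain item~1 and the input constraint. First, $\unif(\cX)\in\lambda_\bfx$ does \emph{not} follow from Assumptions~\ref{itm:assumption-ip-constr} and~\ref{itm:assumption-s0}: Assumption~\ref{itm:assumption-s0} concerns the existence of a zero-cost \emph{state} and says nothing about which input distributions are admissible, and convexity of $\lambda_\bfx$ alone certainly does not force it to contain the uniform distribution. Second, your fallback of mixing toward an arbitrary fixed point $Q\in\lambda_\bfx$ repairs the input constraint (by convexity) but breaks item~1: if some symbol $x$ has $0<P_{\bfx|\bfu=u}(x)<\slk{MIN}$ while $Q(x)=0$, then $(1-\eta)P_{\bfx|\bfu=u}(x)+\eta Q(x)$ is still positive but below $\slk{MIN}$, so the minimum over the \emph{support} of $P'_{\bfx|\bfu=u}$ is not bounded below. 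A correct construction must also truncate sub-threshold masses to zero (and reassign that mass), after which one has to argue separately that the resulting $O(|\cX|\cdot\slk{MIN})$ drift of the marginal $[P_\bfu P'_{\bfx|\bfu}]_\bfx$ can be absorbed back into $\lambda_\bfx$; this interaction between truncation and the (merely convex, possibly interior-free) set $\lambda_\bfx$ is the step your proposal leaves unaddressed.
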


\proof
Follows lines similar to those presented in the proof of Claim~\ref{claim:upper_lower}. 
%\mikel{Need to double check bounds above.}

\vspace{2mm}

\noindent
Fix $ P_{\bfx|\bfu}\in\Delta(\cX|\cU) $ to satisfy the above.
We now show that $R$ is \emph{$\slackrate$-good} w.r.t.\ $ (P_\bfu, P_{\bfx|\bfu}) $.
Namely, by our definition of $ P_{\bfx|\bfu}\in\Delta(\cX|\cU) $, 
$R \leq \underline{C}_{K,\delta} -\delta \le \min_{ V_{\bfs|\bfx,\bfu}\in\cF(P_{\bfx|\bfu})} I(P_{\bfx|\bfu}, V_{\bfs|\bfx,\bfu})$.
Moreover, let $ (\alpha, (V_{\bfs|\bfx,\bfu^{\le\alpha}}, V_{\bfs|\bfx,\bfx',\bfu^{>\alpha}}))\in\cA\times\cF_\alpha(P_{\bfx|\bfu}) $ satisfy $\alpha \leq 1-1/K$ and $ R \in [I(P_{\bfx|\bfu}, V_{\bfs|\bfx,\bfu^{\le\alpha}}) - \slk{LB}, [I(P_{\bfx|\bfu}, V_{\bfs|\bfx,\bfu^{\le\alpha}}) - \slk{UB}]$, then
for every $ V_{\bfs|\bfx,\bfx',\bfu^{>\alpha}}'\in\Delta(\cS|\cX^2\times\cU^{>\alpha}) $ such that $ (V_{\bfs|\bfx,\bfu^{\le\alpha}}, V_{\bfs|\bfx,\bfx',\bfu^{>\alpha}}')\in\cF_\alpha(P_{\bfx|\bfu}) $, there exists $ u\in\cU^{>\alpha} $ such that $ (V_{\bfs|\bfx,\bfx',\bfu^{>\alpha} = u}, V_{\bfs|\bfx,\bfx',\bfu^{>\alpha} = u}')\not\in\cV'_\slackrate $. 
As otherwise, the right-hand-side of (\ref{eq:ck}) would be less than $R+\slk{LB} =  \underline{C}_{K,\delta}-\delta+\slk{LB} < \underline{C}_{K,\delta}-\delta/2$ in contradiction to  (\ref{eq:ck}).

%
%
%\subsection{Notation}
%\label{sec:notation-ach-pf}

\subsection{Code construction}
\label{sec:code-construction}
We describe below the construction of the codebook $ \cC\subset\cX^n $. 
It is a concatenation of $ K $ short codebooks: $ \cC = \cC^{(1)}\circ\cC^{(2)}\circ\cdots\circ\cC^{(K)} $. 
Here for each $ u\in\cU $, $ \cC^{(u)}\subset\cX^{n\eps} $ is sampled randomly and independently. 
Specifically, $ |\cC^{(u)}| = M\cdot N $ and each $ \vx^{(u)}\in\cC^{(u)} $ is independent and uniformly distributed in $ \cT_{n\eps}(P_{\bfx|\bfu = u}) $. 
We label codewords in $ \cC^{(u)} $ using a pair $ (m,r)\in\cM\times\cR $, i.e., $ \cC^{(u)} = \{\vx^{(u)}_{(m,r)}\}_{(m,r)\in\cM\times\cR} $.

\subsection{Encoding}
\label{sec:enc}
The encoder $\enc$ is stochastic. 
To encode a message $ m\in\cM $, the encoder first samples a sequence of random seeds $ \bfr = (\bfr_1,\cdots,\bfr_K)\in\cR^K $, where $ \bfr_u\iid\unif(\cR) $ for each $ u\in\cU $. 
These seeds are private to the encoder and not revealed to any other parties. 
The encoding of $m$ with a random seed $\bfr$ is given by $ \enc(m,\bfr) \coloneqq \vx_{(m,\bfr_1)}^{(1)}\circ\vx_{(m,\bfr_2)}^{(2)}\circ\cdots\circ\vx_{(m,\bfr_K)}^{(K)} $, where we make the dependence on $\bfr$ explicit. 

\subsection{Decoding}
\label{sec:dec}

Bob's decoder runs iteratively. 
Suppose the channel output is $ \vy\in\cY^n $. 
For each $ \alpha \in\cA $ (from small to large values, i.e., $ \alpha = 0,1/n,2/n,\cdots,1 $) and for each $V_{\bfs|\bfx,\bfu^{\le\alpha}}$ for which there exists a 
$V_{\bfs|\bfx,\bfx',\bfu^{>\alpha}}$ such that 
$(V_{\bfs|\bfx,\bfu^{\le\alpha}},V_{\bfs|\bfx,\bfx',\bfu^{>\alpha}})\in\cF_\alpha(P_{\bfx|\bfu}) $ satisfying $ R \in  [I(P_{\bfx|\bfu},V_{\bfs|\bfx,\bfu^{\le\alpha}}) - \slk{LB}, I(P_{\bfx|\bfu},V_{\bfs|\bfx,\bfu^{\le\alpha}})-\slk{UB}] $, the decoder performs the following two-step decoding. 

\begin{enumerate}
	\item \textbf{List-decoding.} 
	The decoder first list-decodes the prefix $ (\vy(1),\cdots,\vy(\alpha K)) $ to the following list $ \cM(P_{\bfx|\bfu}, V_{\bfs|\bfx,\bfu^{\le\alpha}}) $:
	\begin{align}
	 \cM(P_{\bfx|\bfu}, V_{\bfs|\bfx,\bfu^{\le\alpha}}) &\coloneqq \curbrkt{ m\in\cM:\begin{array}{l}
	 \exists(r_1,\cdots,r_{\alpha K})\in\cR^{\alpha K},\exists(\vs^{(1)},\cdots,\vs^{(\alpha K)})\in(\cS^{n\eps})^{\alpha K} \,\suchthat \\
	 \forall u\in\cU^{\le\alpha},\,\type_{\vx_{(m,r_u)}^{(u)},\vs^{(u)},\vy^{(u)}} 
	 %\in \Delta_{\slk{LIST}}(
	 = P_{\bfx|\bfu = u} V_{\bfs|\bfx,\bfu^{\le\alpha} = u} W_{\bfy|\bfx,\bfs}
	 \end{array}}. \notag 
	\end{align}
    %Here, for a distribution $P$, $\Delta_\slackrate(P)$ includes all distributions within statistical distance $\slackrate$ from $P$.
	In words, the list consists of all messages whose corresponding codewords are 
	%(approximately) 
	jointly typical with the prefix of the received vector according to the joint distribution $ P_{\bfu}P_{\bfx|\bfu}V_{\bfs|\bfx,\bfu^{\le\alpha}}W_{\bfy|\bfx,\bfs} $. 
	We also define $ \cL(P_{\bfx|\bfu},V_{\bfs|\bfx,\bfu^{\le\alpha}}) $ as the set of codewords whose corresponding messages are in $ \cM(P_{\bfx|\bfu},V_{\bfs|\bfx,\bfu^{\le\alpha}}) $.
%	\begin{align}
%	\cL(P_{\bfx|\bfu},V_{\bfs|\bfx,\bfu^{\le\alpha}}) &\coloneqq \curbrkt{\vx\in\cC:
%	% \begin{array}{l}
%	 % \exists(r_1,\cdots,r_{\alpha K})\in\cR^{\alpha K},
%	 \exists(\vs^{(1)},\cdots,\vs^{(\alpha K)})\in(\cS^{n\eps})^{\alpha K}, \, %\\
%	 \forall u\in\cU^{\le\alpha},\,\type_{\vx^{(u)},\vs^{(u)},\vy^{(u)}} = P_{\bfx|\bfu = u} V_{\bfs|\bfx,\bfu^{\le\alpha} = u} W_{\bfy|\bfx,\bfs}
%	 % \end{array}
%	 }. \notag 
%	\end{align}

	\item \textbf{Unique decoding.} 
	Bob examines each codeword $\vx' \in 	\cL(P_{\bfx|\bfu},V_{\bfs|\bfx,\bfu^{\le\alpha}})$
	and outputs the message $m'$ corresponding to $\vx'$ if and only if $\vy$ could have been obtained from $\vx'$ through a feasible $\vs'$. In particular, if there exists a vector $\vs'$ and  $V_{\bfs|\bfx,\bfu^{>\alpha}}$ for which $(V_{\bfs|\bfx,\bfu^{\le\alpha}},V_{\bfs|\bfx,\bfu^{>\alpha}}) \in \cF_\alpha(P_{\bfx|\bfu})$ and $\forall u\in\cU^{>\alpha}$, $\type_{\vx'^{(u)},\vs'^{(u)},\vy^{(u)}} 
	%\in \Delta_{\slk{LIST}} (
	= P_{\bfx|\bfu = u} V_{\bfs|\bfx,\bfu^{>\alpha} = u} W_{\bfy|\bfx,\bfs}$.
	I.e., corrupting $\vx'$ using $\vs'$ results in the received word $\vy$.
	If no $\vx' \in 	\cL(P_{\bfx|\bfu},V_{\bfs|\bfx,\bfu^{\le\alpha}})$ passes the test above, Bob continues in studying the next pair $(\alpha,V_{\bfs|\bfx,\bfu^{\le\alpha}})$ until eventually finding a codeword and a corresponding message that pass the test. Once such a codeword is found the decoding process is terminated. If no codewords pass the test in any of Bob's iterations, a decoding error is considered.
\end{enumerate}

\noindent
%\mikel{
%	\noindent
%	Mike's remarks
%	\begin{enumerate}
%		\item We probably need to quantize the set of $V_{\bfs|\bfx,\bfu^{\le\alpha}}$ studied, as this will appear in our union bound. That way, the number of different potential lists is bounded by the code size (the transmitted codeword) times $\cS^n$ times the number of different $(\alpha,V_{\bfs|\bfx,\bfu^{\le\alpha}})$.
%	\end{enumerate}
%}
	
\subsection{Analysis}
We now show that with high probability over code design it is the case that for every transmitted message $m$ with high probability over the stochasticity of Alice the decoding process will  succeed. Namely, with high probability over the stochasticity of Alice, only the the codeword $\vx_{(m,r)} $ corresponding to $m$ will pass the unique decoding step.

\subsubsection{Code properties}
We start by analyzing some properties of our code:

% !TEX root = causal_general.tex

\begin{claim}[List size]
\label{claim:listsize}
With probability at most $\pr{1} = 2^{-n \gamma/4}$ over the code design, if we choose $\gamma < \slk{UB}/2$, for any pair $(\alpha,V_{\bfs|\bfx,\bfu^{\le\alpha}})$, the size of the list produced in the first step of Bob's decoding exceeds $\listsize = \frac{ 2 \log |\cY| }{ \slk{UB} }$.
\end{claim}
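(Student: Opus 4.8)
The plan is a union bound over \emph{bad configurations}, where a configuration is a triple $(\vy,\alpha,V_{\bfs|\bfx,\bfu^{\le\alpha}})$; for each configuration we charge, over the code design, the probability that more than $\listsize$ messages lie in the induced prefix list $\cM(P_{\bfx|\bfu},V_{\bfs|\bfx,\bfu^{\le\alpha}})$. Since this is a statement about the code alone, it suffices to prove it for every $\vy\in\cY^n$ simultaneously, so $\vy$ is fixed inside each configuration and union-bounded at the end (a $2^{n\log\abs{\cY}}$ factor). A chunk $\vx^{(u)}_{(m,r)}$ can contribute to the list only when $\type_{\vx^{(u)}_{(m,r)},\vs^{(u)},\vy^{(u)}}$ equals $Q_u:=\sqrbrkt{P_{\bfx|\bfu=u}V_{\bfs|\bfx,\bfu^{\le\alpha}=u}W_{\bfy|\bfx,\bfs}}$ \emph{exactly} for some state chunk $\vs^{(u)}$; hence only those $V_{\bfs|\bfx,\bfu^{\le\alpha}=u}$ making $Q_u$ a valid length-$n\varepsilon$ joint type matter, of which there are at most $(n\varepsilon+1)^{\abs{\cX}\abs{\cS}\abs{\cY}}$ per chunk, so the list depends on $(\alpha,V_{\bfs|\bfx,\bfu^{\le\alpha}})$ through only $\{Q_u\}_{u\le\alpha K}$, and the union over $\alpha$ ($\le K+1$ values) and over these distributions costs only a subexponential $2^{o(n)}$ factor.

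Next I would bound, for one fixed configuration and one fixed message $m$, the probability $p:=\prob{m\in\cM(P_{\bfx|\bfu},V_{\bfs|\bfx,\bfu^{\le\alpha}})}$. For a chunk $\vx\in\cT_{n\varepsilon}(P_{\bfx|\bfu=u})$, a state chunk realizing $\type_{\vx,\vs,\vy^{(u)}}=Q_u$ exists iff $\type_{\vx,\vy^{(u)}}=\sqrbrkt{Q_u}_{\bfx,\bfy}$ (the ``if'' direction is just a position-by-position redistribution of states, always possible since the counts match), and the number of such $\vx$ is at most $2^{n\varepsilon H_{Q_u}(\bfx|\bfy)}$; dividing by $\abs{\cT_{n\varepsilon}(P_{\bfx|\bfu=u})}\ge(n\varepsilon+1)^{-\abs{\cX}}2^{n\varepsilon H_{Q_u}(\bfx)}$ shows a single codeword chunk is compatible with probability at most $(n\varepsilon+1)^{\abs{\cX}}2^{-n\varepsilon I_{Q_u}(\bfx;\bfy)}$, where $I_{Q_u}(\bfx;\bfy)$ is the mutual information of the $(\bfx,\bfy)$-marginal of $Q_u$ (the quantity $I(\bfx_u;\bfy_u)$ of Definition~\ref{def:cumulative-mutual-info}). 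A union bound over the $N=2^{n\gamma}$ seeds within each prefix chunk, independence across the $\alpha K$ prefix chunks, and $\sum_{u=1}^{\alpha K}\varepsilon I_{Q_u}(\bfx;\bfy)=I(P_{\bfx|\bfu},V_{\bfs|\bfx,\bfu^{\le\alpha}})$ (Definition~\ref{def:cumulative-mutual-info}, using $\varepsilon=1/K$) give
\[
p\ \le\ N^{\alpha K}(n\varepsilon+1)^{\abs{\cX}\alpha K}\,2^{-nI(P_{\bfx|\bfu},V_{\bfs|\bfx,\bfu^{\le\alpha}})}\ =\ 2^{n\gamma\alpha K+o(n)}\,2^{-nI(P_{\bfx|\bfu},V_{\bfs|\bfx,\bfu^{\le\alpha}})}.
\]
Since Bob only considers pairs with $I(P_{\bfx|\bfu},V_{\bfs|\bfx,\bfu^{\le\alpha}})\ge R+\slk{UB}$, and $\gamma\alpha K\le\gamma K=\varepsilon^3\le\slk{UB}/2$ for $\varepsilon$ small (invoking $\gamma<\slk{UB}/2$), we obtain $Mp\le 2^{n(\gamma\alpha K-\slk{UB})+o(n)}$, which is $2^{-\Omega(n)}$.

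Finally I would observe that the prefix list exceeds $\listsize$ only if some $\lceil\listsize\rceil+1$ distinct messages all lie in it; the prefix codewords of distinct messages are sampled independently, so for a fixed configuration these events are independent and $\prob{\abs{\cM(P_{\bfx|\bfu},V_{\bfs|\bfx,\bfu^{\le\alpha}})}>\listsize}\le\binom{M}{\lceil\listsize\rceil+1}p^{\lceil\listsize\rceil+1}\le(Mp)^{\lceil\listsize\rceil+1}$. Multiplying by the $2^{o(n)}$ union bound over configurations with $\vy$ held fixed and by $2^{n\log\abs{\cY}}$ over $\vy$, the overall exponent (per $n$) is $\log\abs{\cY}-(\lceil\listsize\rceil+1)(\slk{UB}-\gamma\alpha K)+o(1)$; with $\listsize=2\log\abs{\cY}/\slk{UB}$ and $\slk{UB}-\gamma\alpha K\ge\slk{UB}/2>0$ this is at most $\log\abs{\cY}-(\log\abs{\cY}+\slk{UB}/2)+o(1)=-\slk{UB}/2+o(1)<-\gamma/4$, so the probability is at most $2^{-n\gamma/4}$ for $n$ large, as claimed. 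The main point to get right is the type-counting estimate of the second step; the rest is subexponential bookkeeping, and the value $\listsize=\Theta(\log\abs{\cY}/\slk{UB})$ is chosen precisely so that the expensive $\abs{\cY}^n$ union bound over output words is absorbed by the per-message gain $2^{-n\slk{UB}}$, which is also what the hypothesis $\gamma<\slk{UB}/2$ is for.
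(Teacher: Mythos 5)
Your proposal is correct and takes essentially the same route as the paper's proof: a method-of-types bound on the per-chunk, per-seed compatibility probability (yielding $2^{-n\eps I(\bfx_u;\bfy_u)}$ up to polynomial factors), a union over the $N$ seeds and independence across prefix chunks and across messages to obtain the $(\listsize+1)$-fold exponent, followed by union bounds over $\vy$, over message subsets, and over the polynomially many $(\alpha,V_{\bfs|\bfx,\bfu^{\le\alpha}})$ pairs corresponding to valid empirical types. Your bookkeeping of the seed union bound as $N^{\alpha K}=2^{n\gamma\alpha K}$ is in fact slightly more careful than the paper's, and is harmless under the stated parameter choices since $\gamma K=\eps^3\ll\slk{UB}$.
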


%\mikel{A connection between $\slk{UB}$ and $\listsize$ we be determined in analysis above.}

\begin{IEEEproof}
%\mikel{Unless I am missing something, we are only decoding according to a given $V_{\bfs|\bfx, \bfu^{\le\alpha}}$, so I omitted the definition of $R_0$ and other corresponding points below.}
Fix a pair $(\alpha,V_{\bfs|\bfx,\bfu^{\le\alpha}})$ such that $(V_{\bfs|\bfx,\bfu^{\le\alpha}},V_{\bfs|\bfx,\bfx',\bfu^{>\alpha}})\in\cF_\alpha(P_{\bfx|\bfu})$. 
%Let $R_0 = I(P_{\bfx|\bfu}, V_{\bfs|\bfx, \bfu^{\le\alpha}})$.%Consider any output prefix $\vy^{\le \alpha} = (\vy(1), \vy(2), \ldots, \vy(\alpha n))$. 
For each $u  \in \cU$,
%we have a single letter channel $V_{\bfs|\bfx,\bfu=u}(s|x,u)$ and 
define the joint distribution $P_{\bfx, \bfy|\bfu=u} = [P_{\bfx|\bfu=u} V_{\bfs|\bfx,\bfu = u}W_{\bfy|\bfx,\bfs}]_{\bfx, \bfy|\bfu=u}$ and the ``reverse channel'' $V_{\bfx|\bfy, \bfu = u}$ by conditioning $P_{\bfx, \bfy|\bfu=u}$.
Consider any output prefix $\vy^{\le \alpha} = (\vy(1), \vy(2), \ldots, \vy(\alpha K))$ where $\vy(u)$ corresponds to chunk $u$ of $\vy^{\le \alpha}$.
For each $u$, let $\cA_{u}(\vy(u)) = \{\vx \in \cX^{n\eps}: T_{\vx} = V_{\bfx|\bfy = T_{\vy(u)}, \bfu = u}\}$ be the set of $\vx \in \cX^{n\eps}$ which are jointly typical with $\vy(u)$. Define $\slk{TYP} = \frac{1}{n} (|\cX| - 1) \log(n \eps +1)$, which goes to $0$ as $n \to \infty$. 
We have $|\cA_{k}| \le 2^{n \eps H(\bfx | \bfy, \bfu=u) }$. From the code construction, for each $u \in \cU^{\le \alpha}$ we have $|\cT_{n \eps}(P_{\bfx|\bfu = u})| \ge 2^{n \eps ( H(\bfx|\bfu=u) - \slk{TYP} ) }$. 

We want to show that with high probability over the construction of the code that the list size will be guaranteed. To emphasize that codewords are random variables, we write $\rvx_{(m, r)}^{(u)}$ for the codeword symbols corresponding to to message $m$, randomness $r$, and chunk $u$. 
Fix a message $m$ and let $\cC^{(u)_{m}} = \{ \rvx^{(u)}_{(m, r)} : r \in \cR\}$ be the set of codeword chunks representing message $m$ in chunk $u$. The probability over the code construction that a single $\rvx^{(u)}_{(m, r)}$ uniformly selected from $\cT_{n\eps}(P_{\bfx|\bfu = u})$ falls in $\cA_{k}(\vy(u))$ is
	\begin{align}
	\eta_u \leq \frac{ 2^{n \eps H(\bfx | \bfy, \bfu=u)  }}{ 2^{n \eps ( H(\bfx|\bfu=u) -  \slk{TYP} )}} = 2^{- n \eps ( I(\bfx ; \bfy | \bfu = u) - \slk{TYP} )}
	\end{align}
Hence by a union bound, the probability that any element of $\cC^{(u)_{m}}$ falls in $\cA_{u}(\vy(u))$ is at most $N \eta_u$. 

With some abuse of notation, we use $\exp(a) = 2^{a}$. Since the codebooks in each chunk are chosen independently, the probability that there exists an $r$ such that $\rvx^{(u)}_{(m, r)}$ is jointly typical with $\vy(u)$ in each chunk $u \le \alpha K$ is at most
	\begin{align}
	\eta &= \prod_{u=1}^{\alpha K} N 2^{- n \eps ( I(\bfx ; \bfy | \bfu = u ) - 2 \slk{TYP} ) } \\ 
	&= \exp\left( - n \frac{1}{K} \sum_{u=1}^{ \alpha K } I(\bfx ; \bfy | \bfu = u)  + \alpha \slk{TYP} n + \gamma n  \right) \\
	&= \exp\left( - n \left( I(P_{\bfx|\bfu}, V_{\bfs|\bfx, \bfu^{\le\alpha}}) - \slk{TYP} - \gamma \right) \right).
	\end{align}

%\mikel{Should the number of types $\phi$ below depend on $\alpha$ (or $n\eps$)?}
%
%Now, taking a union bound over $\cF_\alpha(P_{\bfx|\bfu})$ and setting $\phi(n) = ( |\cS||\cX| + 1) \frac{\log n}{n}$ we can set
%	\begin{align}
%	\zeta &= \exp\left( - n \left( \min_{\cF_\alpha(P_{\bfx|\bfu})} I(P_{\bfx|\bfu}, V_{\bfs|\bfx, \bfu^{\le\alpha}}) - \gamma - ( |\cS||\cX| + 1) \frac{\log n}{n} \right) \right)\\
%	&= \exp\left( -n \left( R_0 - \gamma - \phi(n) \right) \right).
%	\end{align}
%This is an upper bound on the probability that a message $m$ ends up in the list produced by Bob. 

We next have to find an upper bound on the list size. The probability that a particular set of $L+1$ messages is in the list is upper bounded by $\eta^{L+1}$. Taking a union bound over all possible lists (there are at most $2^{nR (L+1)}$) and all $2^{n \log |\cY|}$ choices for $\vy$, the probability that codebook generation places more than $L+1$ messages in the list produced by Bob is at most
	\begin{align}
	\psi = \exp\left( - n \left( I(P_{\bfx|\bfu}, V_{\bfs|\bfx, \bfu^{\le\alpha}}) (L+1) - R (L+1) - (L+1) \slk{TYP} -  (L+1) \gamma - \log |\cY| \right) \right)
	\end{align}
By assumption, $\slk{UB} \le I(P_{\bfx|\bfu}, V_{\bfs|\bfx, \bfu^{\le\alpha}}) - R \le \slk{LB}$, so
	\begin{align}
	\psi \le \exp\left( - n (L+1) (\slk{UB} - \slk{TYP} -  \gamma) + n \log |\cY| \right)
	\end{align}
To make this probability less than $1$ we can choose
	\begin{align}
	L > \frac{ \log |\cY| }{ \slk{UB} - \slk{TYP} - \gamma  }.
	\end{align}
Now set $\gamma < \slk{UB}/2$. For sufficiently large $n$ we can make $\slk{TYP}$ as small as we like and set
	\begin{align}
	L > \frac{ 2 \log |\cY| }{  \slk{UB} }.
	\end{align}
to make the probability of not generating a good code smaller than $2^{-n \gamma/2}$.
To guarantee that that the above holds for any pair $(\alpha,V_{\bfs|\bfx,\bfu^{\le\alpha}})$ that may be used in our decoding procedure, we apply the union bound over 
$2^{\phi(n)}$ such pairs where $\phi(n) \leq (|\cS||\cX| + 1)\log (n\eps) + \log(1/\eps)$.
All in all, for sufficiently large $n$, we conclude that  the probability of not generating a good code is smaller than $2^{-n \gamma/4}$.
\end{IEEEproof}

%\begin{claim}[List size]
%	\label{claim:listsize}
%	With probability $\pr{1}$ over code design, for any pair $(\alpha,V_{\bfs|\bfx,\bfu^{\le\alpha}})$ the size of  
%	$\cL(P_{\bfx|\bfu},V_{\bfs|\bfx,\bfu^{\le\alpha}})$ is at most $\listsize$. 
%\end{claim}
%
%\mikel{A connection between $\slk{UB}$ and $\listsize$ we be determined in analysis above.}

For a distribution $P$, let $\Delta_\slackrate(P)$ includes all distributions within $\ell_\infty$ distance $\slackrate$ from $P$.

\begin{claim}[Typicality properties]
	\label{claim:listprop}
	With probability $\pr{2}$ over code design it holds for any message $m$ and any list $\cL$ of size at most $\listsize$ obtained in the decoding process, that
	\begin{eqnarray}
	\left|\left\{
	(r_{\alpha K+1},\ldots, r_K) \in \cR^{K(1-\alpha)} :
	\forall m' \in \cL,\  \forall u\in\cU^{>\alpha},\,\ \forall (r'_{\alpha K+1},\ldots, r'_K) \in \cR^{K(1-\alpha)}, \,\  \right .\right .\label{eq:claim7}\\ 
\left .\left .	
%\ |\type_{\vx_{(m,r_u)}^{(u)},\vx'^{(u)}_{(m',r'_u)}}(x,x') - P_{\bfx|\bfu=u}(x)P_{\bfx|\bfu=u}(x')| \leq \slk{x|u}\cdot {\bf 1}_{P_{\bfx|\bfu = u}(x)=0} \cdot {\bf 1}_{P_{\bfx|\bfu = u}(x')=0}
\type_{\vx_{(m,r_u)}^{(u)},\vx'^{(u)}_{(m',r'_u)}} \in \Delta_{\slk{x|u}}(P_{\bfx|\bfu = u}^{\ot2})
	\right\}\right|	
	\geq (1-\pr{3})|\cR|^{K(1-\alpha)} \nonumber 		
	\end{eqnarray}
\end{claim}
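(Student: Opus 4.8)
The plan is to show that for each fixed message $m$ and each fixed list $\cL$ of at most $\listsize$ messages, the "bad" event---that too many seed tuples $(r_{\alpha K+1},\ldots,r_K)$ produce some suffix chunk codeword $\vx^{(u)}_{(m,r_u)}$ whose pair-type with some $\vx'^{(u)}_{(m',r'_u)}$ escapes $\Delta_{\slk{x|u}}(P_{\bfx|\bfu=u}^{\ot2})$---has doubly-exponentially small probability over the code design, and then take a union bound over the $2^{nR}$ messages, the $\binom{2^{nR}}{\le\listsize}\le 2^{nR\listsize}$ lists, and the polynomially-many $(\alpha,V_{\bfs|\bfx,\bfu^{\le\alpha}})$ decoding pairs. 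The key observation enabling the per-chunk analysis is that for a fixed chunk $u$ and two \emph{distinct} codeword slots, $\rvx^{(u)}_{(m,r_u)}$ and $\rvx^{(u)}_{(m',r'_u)}$ are drawn independently and uniformly from $\cT_{n\eps}(P_{\bfx|\bfu=u})$, so conditioned on $\rvx^{(u)}_{(m,r_u)}$, the probability that the pair-type of $(\rvx^{(u)}_{(m,r_u)},\rvx^{(u)}_{(m',r'_u)})$ deviates from $P_{\bfx|\bfu=u}^{\ot2}$ in $\ell_\infty$ by more than $\slk{x|u}$ is at most $2^{-\Omega(n\eps\,\slk{x|u}^2)}$ by standard type-counting / Chernoff bounds (the number of $\vx'$ of each joint type is exponentially concentrated near the product type). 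Call this probability $q$; note $\log(1/q)=\Omega(n\eps\,\slk{x|u}^2)=\Omega(n\,\eps^2)$ since $\slk{x|u}=3\eps$, which will dominate the $n\gamma=n\eps^4$ union-bound losses.

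First I would fix $m$, $\cL$, $\alpha$, and a chunk $u\in\cU^{>\alpha}$, and for each seed value $r_u\in\cR$ call the slot $\rvx^{(u)}_{(m,r_u)}$ \emph{$u$-bad} if there exist $m'\in\cL$ and $r'_u\in\cR$ with $\type_{\rvx^{(u)}_{(m,r_u)},\rvx^{(u)}_{(m',r'_u)}}\notin\Delta_{\slk{x|u}}(P_{\bfx|\bfu=u}^{\ot2})$. By a union bound over the at most $\listsize\cdot N$ choices of $(m',r'_u)$, each slot is $u$-bad with probability at most $\listsize\cdot N\cdot q$, which is still $2^{-\Omega(n\eps^2)}$ because $N=2^{n\gamma}=2^{n\eps^4}$ and $\listsize=O(1/\eps)$. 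Since the slots $\{\rvx^{(u)}_{(m,r_u)}\}_{r_u\in\cR}$ for the fixed message $m$ are mutually independent, the number of $u$-bad slots among the $N$ of them is a sum of i.i.d.\ indicators, so by a Chernoff bound the fraction of $u$-bad slots exceeds, say, $\pr{3}/K$ with probability at most $2^{-\Omega(\pr{3}N/K)}=2^{-2^{\Omega(n)}}$. Taking a union bound over the $K=1/\eps$ suffix chunks, with probability $1-2^{-2^{\Omega(n)}}$ over code design, for \emph{every} suffix chunk $u$ at most a $\pr{3}/K$ fraction of slots is $u$-bad. On this good event, a seed tuple $(r_{\alpha K+1},\ldots,r_K)$ fails the displayed condition only if $r_u$ indexes a $u$-bad slot for some $u\in\cU^{>\alpha}$; by a union bound over chunks the fraction of failing tuples is at most $\sum_{u}\pr{3}/K=\pr{3}$, which is exactly~(\ref{eq:claim7}).

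Finally I would absorb the union bound over the $2^{nR}$ messages $m$, the $\le 2^{nR\listsize}$ lists $\cL$, the $\le 2^{\phi(n)}$ decoding pairs $(\alpha,V_{\bfs|\bfx,\bfu^{\le\alpha}})$ with $\phi(n)$ polynomial in $\log n$, and the $|\cA|=K$ values of $\alpha$: all of these are $2^{O(n)}$ or smaller, which is swamped by the $2^{-2^{\Omega(n)}}$ per-event bound, giving the claim with $\pr{2}=1-2^{-2^{\Omega(n)}}$ (or any $\pr{2}\ge 1-2^{-\Omega(n)}$ suffices for the downstream argument). The main obstacle I anticipate is purely bookkeeping: making sure the chain of constants lines up so that $\log(1/q)=\Omega(n\eps\,\slk{x|u}^2)$ genuinely dominates all the union-bound exponents $nR\listsize$, $n\gamma$, $\phi(n)$, and so that the Chernoff deviation parameter $\pr{3}$ can be taken as a fixed (blocklength-independent) small constant---here the paper's pre-chosen relations $\slk{x|u}=3\eps$, $\gamma=\eps^4$, $\listsize=2\log|\cY|/\slk{UB}=O(1/\eps)$ are exactly what make this work, so the verification is routine given those choices. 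A secondary subtlety is handling the corner cases $r_u=r'_u$ with $m=m'$ (same slot, where independence fails): there the pair-type is the "diagonal" type of a single typical sequence, which lies in $\Delta_{\slk{x|u}}(P_{\bfx|\bfu=u}^{\ot2})$ automatically for typical $\vx$ up to the $\slk{TYP}$-level type granularity---but since~(\ref{eq:claim7}) quantifies over all $(r'_{\alpha K+1},\ldots,r'_K)$ including $r'_u\ne r_u$, and $m'$ ranges over $\cL$ which may or may not contain $m$, the cleanest route is to note the diagonal contributes only $N$ extra (trivially-good, for typical slots) pairs per chunk and fold it into the same bound.
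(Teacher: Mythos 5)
Your proposal follows the same overall strategy as the paper's proof: a Sanov/Pinsker (type-counting) bound on the joint type of two independently drawn constant-composition chunks, union bounds over the list $\cL$, the seed set $\cR$, and the suffix chunks, and finally a concentration argument exploiting the independence of Alice's slot choices, capped by a union bound over $m$, $\cL$, and the decoding pairs. Where you genuinely differ is in the concentration step. The paper bounds the \emph{expected} fraction of bad seed tuples $(r_{\alpha K+1},\ldots,r_K)$ by $\exp(-\frac{3}{2}n\eps^3)$ and then invokes ``standard tail inequalities and additional ideas appearing in the analysis of Claim III.21 in \cite{chen2019capacity}'' to concentrate the actual fraction at $\exp(-n\eps^3)$ with probability $1-\exp(-n^2)$; the delicate point it delegates to that citation is that the $N^{K(1-\alpha)}$ tuples are highly dependent because they share slots. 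Your per-chunk/per-slot decomposition (declare a slot $u$-bad, apply Chernoff over the $N$ independent slots, union over chunks, then observe that a tuple fails only if some coordinate indexes a bad slot) handles exactly that dependence explicitly and is in that sense more self-contained. The price is quantitative: your Chernoff step needs $(\pr{3}/K)N\gg1$, so you cannot reach the paper's $\pr{3}=e^{-n\eps^3}$ (for which $(\pr{3}/K)N<1$); you get a constant, or at best roughly $2^{-\Theta(n\eps^4)}$. Do let $\pr{3}$ vanish with $n$ rather than fixing it as a constant: Theorem~\ref{thm:no_decoding} charges probability $\pr{3}$ of mis-decoding per decoding attempt, so a constant $\pr{3}$ yields only error probability bounded by a small constant rather than vanishing error. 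Two smaller points to tighten: the badness indicators of the slots $\{\rvx^{(u)}_{(m,r_u)}\}_{r_u}$ are i.i.d.\ only \emph{conditionally} on the realization of the codeword chunks of the messages in $\cL\setminus\{m\}$ (the per-slot bad-probability bound holds uniformly over that conditioning, so the Chernoff bound goes through after averaging); and the pairs with $m'=m$ break even this conditional independence, so they should be removed by a separate direct union bound over the at most $N^2$ ordered pairs of distinct slots of $m$ per chunk (the paper silently restricts to $m'\ne m$).
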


\begin{IEEEproof}
%Fix any $u \in \cU^{>\alpha}$. 
For any $m'\in \cL$ with $m'\neq m$ and fixed $(r_{\alpha K+1},\ldots, r_K)$ and $(r'_{\alpha K+1},\ldots, r'_K)$, by Sanov's theorem followed by Pinsker's inequality, and finally by setting $\slk{x|u} = 3\varepsilon$, we have that for any sufficiently large $n$,
\begin{eqnarray}
\lefteqn{\Pr_{\bf \vx_{(m,r_u)}^{(u)}} \left (\exists u \in \cU^{>\alpha},\ \type_{\vx_{(m,r_u)}^{(u)},\vx'^{(u)}_{(m',r'_u)}} \notin \Delta_{\slk{x|u}}(P_{\bfx|\bfu = u}^{\ot2}) \right   )} \label{eq:atyp_j_type}\\
 & \leq & \exp \left(-n\varepsilon D (P_{\bfx,\bfx'|\bfu = u}^{\ot2} \| P_{\bfx|\bfu = u}^{\ot2}) \right) 	+ c_1\log(n)\nonumber \\
& \leq & \exp \left(-n\varepsilon \frac{\|P_{\bfx,\bfx'|\bfu = u}^{\ot2} - P_{\bfx|\bfu = u}^{\ot2}\|_1^2}{2\ln(2)} \right) 	+ c_1\log(n) \nonumber \\
& \leq & \exp \left(-3n\varepsilon^3  \right). \nonumber
\end{eqnarray}
(In the inequalities above, $c_1$ is a constant independent of the blocklength $n$.)

Taking a union bound over all $|\cR|^{K} = 2^{n\gamma/\varepsilon}=2^{n\varepsilon^3}$ values of $(r'_{\alpha K+1},\ldots, r'_K)$ and all $|\cL|$ elements $m'$ in $\cL$, we have that the probability that the condition in~\eqref{eq:claim7} is not satisfied for fixed $(r_{\alpha K+1},\ldots, r_K)$ is at most $|\cL|\exp \left(-2n\varepsilon^3  \right) < \exp \left(-\frac{3}{2}n\varepsilon^3  \right)$ (by bounding $|\cL|$ from above, very loosely, by Claim~\ref{claim:listsize}).

Hence for any given $m$ the expected fraction of $(r_{\alpha K+1},\ldots, r_K)$ not satisfying the event of~\eqref{eq:claim7} is at most $\exp \left(-\frac{3}{2}n\varepsilon^3  \right)$. By standard tail inequalities, and additional ideas appearing in the analysis of Claim III.21 in \cite{chen2019capacity}, the probability that the fraction of $(r_{\alpha K+1},\ldots, r_K)$ not satisfying the event of~\eqref{eq:claim7} exceeds $\exp \left(-n\varepsilon^3  \right)=1-\pr{3}$ is at most $\exp \left(-n^2 \right)$.

The above computation is for a given $\alpha$, $m$, and list $\cL$. Hence to conclude the asserted claim we take the union bound over all possible $\alpha$, $m$, and list $\cL$, to obtain, for sufficiently large $n$, $\pr{2}=1-\exp(-n^2/2)$.
%the number of $(r_{\alpha K+1},\ldots, r_K) \in \cR^{K(1-\alpha)}$ such that some $r_u$ does not satisfy the event of~\eqref{eq:atyp_j_type} is at most $n^{K(1-\alpha)} < n^{1/\varepsilon}$, with probability at least $1-\pr{2}/\varepsilon$. Hence $\pr{3} \leq n^{1/\epsilon}/2^{\varepsilon^3 n} \leq 2^{-\varepsilon^3 n/2}$
\end{IEEEproof}

%\mikel{Slackness parameter $\slk{x|u}$ will be determined by the analysis above. Also, the statement should change slightly to also guarantee that for $x$ for which $P_{\bfx|\bfu = u}(x)=0$ we have for any $x'$ that 
%$\type_{\vx_{(m,r_u)}^{(u)},\vx'^{(u)}}(x,x')=0$. This will be needed later.}

From this point on, we assume our code $\cC$ holds the properties of Claims~\ref{claim:listsize} and \ref{claim:listprop}.

\subsubsection{Decoding analysis}
Consider a  message $m$, corresponding codeword transmitted by Alice $\vx_{(m,r)} $, and a jamming vector $\vs$ used by James to obtain the received $\vy$.
Let $V^*_{\bfs|\bfx,\bfu} \in \cF(P_{\bfx|\bfu})$ satisfy 
$\forall u\in\cU$, $\type_{\vx^{(u)},\vs^{(u)},\vy^{(u)}} =P_{\bfx|\bfu = u} V^*_{\bfs|\bfx,\bfu} W_{\bfy|\bfx,\bfs}$.
Let $\alpha^*$ satisfy $R \in [ I(P_{\bfx|\bfu},V_{\bfs|\bfx,\bfu^{\le\alpha^*}})-\slk{LB},  I(P_{\bfx|\bfu},V_{\bfs|\bfx,\bfu^{\le\alpha^*}})-\slk{UB}]$. 
By the facts that $R = C_{K,\slackrate} - \slackrate \leq \min_{ V_{\bfs|\bfx,\bfu}\in\cF(P_{\bfx|\bfu})} I(P_{\bfx|\bfu}, V_{\bfs|\bfx,\bfu}) - \slackrate \leq I(P_{\bfx|\bfu}, V^*_{\bfs|\bfx,\bfu}) - \slackrate$; and that $\slk{LB}-\slk{UB} \geq \frac{1}{K}\log{|\cX|}$ we conclude that such $\alpha^* \in [K-1]$ exists.

\begin{theorem}[Correct decoding with $(\alpha^*,V^*_{\bfs|\bfx,\bfu^{\le\alpha}})$]
	\label{thm:correct_decoding}
	%With probability $1-\pr{3}$ over the stochasticity of Alice, 
	When Bob decodes using the pair $(\alpha^*,V^*_{\bfs|\bfx,\bfu^{\le\alpha}})$, the message $m$ will pass  the two-step decoding of Bob.
\end{theorem}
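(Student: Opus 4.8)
The plan is to exhibit the honestly transmitted codeword $\vx_{(m,r)}$ (with $r=(r_1,\dots,r_K)$ the seeds Alice actually drew) together with James's actual state sequence $\vs$ as the explicit witness that $m$ survives both stages of Bob's decoder when it is run at $(\alpha^*,V^*_{\bfs|\bfx,\bfu^{\le\alpha^*}})$, where $V^*_{\bfs|\bfx,\bfu^{\le\alpha^*}}$ denotes the restriction of the given $V^*_{\bfs|\bfx,\bfu}$ to the prefix chunks $\cU^{\le\alpha^*}$. As a preliminary I would record that this pair really is one Bob iterates over: $\alpha^*\in\cA$ (fixed in the paragraph preceding the theorem to lie in $\{1/K,\dots,1-1/K\}$), the window condition $R\in[I(P_{\bfx|\bfu},V^*_{\bfs|\bfx,\bfu^{\le\alpha^*}})-\slk{LB},\ I(P_{\bfx|\bfu},V^*_{\bfs|\bfx,\bfu^{\le\alpha^*}})-\slk{UB}]$ holds by the choice of $\alpha^*$, and $\cF_{\alpha^*}$-feasibility holds because lifting the suffix restriction $\{V^*_{\bfs|\bfx,\bfu=u}\}_{u\in\cU^{>\alpha^*}}$ to the two-input family $V^*_{\bfs|\bfx,\bfx',\bfu^{>\alpha^*}=u}(s|x,x'):=V^*_{\bfs|\bfx,\bfu=u}(s|x)$ that ignores its spoof argument makes $\sqrbrkt{P_{\bfx|\bfu=u}^{\ot2}V^*_{\bfs|\bfx,\bfx',\bfu^{>\alpha^*}=u}}_\bfs=\sqrbrkt{P_{\bfx|\bfu=u}V^*_{\bfs|\bfx,\bfu=u}}_\bfs$, so the state-feasibility expression of $\cF_{\alpha^*}(P_{\bfx|\bfu})$ collapses to $\frac1K\sum_{u\in\cU}\sqrbrkt{P_{\bfx|\bfu=u}V^*_{\bfs|\bfx,\bfu=u}}_\bfs$, which lies in $\lambda_\bfs$ precisely because $V^*_{\bfs|\bfx,\bfu}\in\cF(P_{\bfx|\bfu})$ by hypothesis.

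For the list-decoding stage I would present the prefix seeds $(r_1,\dots,r_{\alpha^* K})$ and the prefix state chunks $(\vs^{(1)},\dots,\vs^{(\alpha^* K)})$ as the required witness: for each $u\in\cU^{\le\alpha^*}$ the hypothesis on $V^*$ already states $\type_{\vx_{(m,r_u)}^{(u)},\vs^{(u)},\vy^{(u)}}=P_{\bfx|\bfu=u}V^*_{\bfs|\bfx,\bfu=u}W_{\bfy|\bfx,\bfs}$, which is exactly the joint-type identity in the definition of $\cM(P_{\bfx|\bfu},V^*_{\bfs|\bfx,\bfu^{\le\alpha^*}})$ (since $V^*_{\bfs|\bfx,\bfu^{\le\alpha^*}=u}=V^*_{\bfs|\bfx,\bfu=u}$ on prefix $u$). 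Hence $m$ enters the list and $\vx_{(m,r)}\in\cL(P_{\bfx|\bfu},V^*_{\bfs|\bfx,\bfu^{\le\alpha^*}})$, so $\vx_{(m,r)}$ is inspected in the second stage.

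For the unique-decoding stage I would take $\vs':=\vs$ and set $V_{\bfs|\bfx,\bfu^{>\alpha^*}=u}:=V^*_{\bfs|\bfx,\bfu=u}$ for $u\in\cU^{>\alpha^*}$, and then verify the two requirements of Bob's consistency test: (i) $(V^*_{\bfs|\bfx,\bfu^{\le\alpha^*}},V_{\bfs|\bfx,\bfu^{>\alpha^*}})\in\cF_{\alpha^*}(P_{\bfx|\bfu})$, by the same lift-and-collapse argument as above; and (ii) for every suffix chunk $u\in\cU^{>\alpha^*}$, $\type_{\vx_{(m,r_u)}^{(u)},\vs'^{(u)},\vy^{(u)}}=\type_{\vx_{(m,r_u)}^{(u)},\vs^{(u)},\vy^{(u)}}=P_{\bfx|\bfu=u}V^*_{\bfs|\bfx,\bfu=u}W_{\bfy|\bfx,\bfs}=P_{\bfx|\bfu=u}V_{\bfs|\bfx,\bfu^{>\alpha^*}=u}W_{\bfy|\bfx,\bfs}$, straight from the hypothesis; and since $\vy=W(\vx_{(m,r)},\vs)$ symbol by symbol, the requirement that corrupting $\vx_{(m,r)}$ by $\vs'$ reproduce $\vy$ is literally met. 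Hence $\vx_{(m,r)}$ passes the test, so $m$ passes Bob's two-step decoding at $(\alpha^*,V^*_{\bfs|\bfx,\bfu^{\le\alpha^*}})$.

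I do not anticipate a deep obstacle here: the statement is essentially a consistency check, since every joint-type identity it needs is handed over in the hypothesis. The one place that genuinely needs care is the bookkeeping between the two presentations of the suffix jamming law --- Bob's test uses a single-input $V_{\bfs|\bfx,\bfu^{>\alpha}}$, whereas $\cF_\alpha$ uses a two-input $V_{\bfs|\bfx,\bfx',\bfu^{>\alpha}}$ --- so one must spell out that embedding the former as a map that ignores the spoof codeword is exactly what makes the state-feasibility constraint of $\cF_{\alpha^*}$ reduce to the global state-cost bound that $V^*\in\cF(P_{\bfx|\bfu})$ supplies. It is also worth a sentence noting that state-determinism of $W$ is what underwrites the hypothesis $V^*\in\cF(P_{\bfx|\bfu})$ in the first place: determinism pins $\type_{\vx^{(u)},\vs^{(u)},\vy^{(u)}}$ to equal $P_{\bfx|\bfu=u}V^*_{\bfs|\bfx,\bfu=u}W_{\bfy|\bfx,\bfs}$ for the conditional type $V^*_{\bfs|\bfx,\bfu=u}$ of $\vs^{(u)}$ given $\vx^{(u)}$, while James's cost feasibility delivers the membership in $\lambda_\bfs$.
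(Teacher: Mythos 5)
Your proposal is correct and is exactly the verification the paper is invoking with its one-line proof (``follows directly by the suggested decoding scheme''): the transmitted codeword, the actual seeds, and James's actual state sequence witness both the list-membership and the suffix consistency test, with the type identities supplied by the definition of $V^*_{\bfs|\bfx,\bfu}$. Your extra bookkeeping --- lifting the single-input suffix law to a two-input $V_{\bfs|\bfx,\bfx',\bfu^{>\alpha^*}}$ that ignores the spoof argument so that the $\cF_{\alpha^*}$ condition collapses to the $\cF$ condition --- is a detail the paper glosses over, and you handle it correctly.
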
	

\begin{proof}(of Theorem~\ref{thm:correct_decoding})
Follows directly by the suggested decoding scheme.
\end{proof}

\begin{theorem}[No decoding when $\alpha \le \alpha^*$]
	\label{thm:no_decoding}
	When Bob decodes using a pair $(\alpha,V_{\bfs|\bfx,\bfu^{\le\alpha}})$ in which $\alpha \le \alpha^*$, if for the transmitted message $\vx_{(m,r)} $ Alice's stochasticity $r$ of falls into the set of size $(1-\pr{3})|\cR|^{K(1-\alpha)}$ of Claim~\ref{claim:listprop}, no message different from $m$ will be decoded (specifically, no codeword $\vx' \ne \vx$ will pass the test in the unique-decoding step).
\end{theorem}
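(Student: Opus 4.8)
The plan is to argue by contradiction from the $\slackrate$-goodness of $R$ w.r.t.\ $(P_\bfu,P_{\bfx|\bfu})$ established above. Suppose that for some pair $(\alpha,V_{\bfs|\bfx,\bfu^{\le\alpha}})$ used by Bob with $\alpha\le\alpha^*$, some codeword $\vx'\ne\vx$ (corresponding to $m'\ne m$) survives both decoding steps. Surviving the first step means $\vx'$ is in the prefix list, so there is a prefix state word with $\type_{\vx'^{(u)},\vs'^{(u)},\vy^{(u)}}=P_{\bfx|\bfu=u}V_{\bfs|\bfx,\bfu^{\le\alpha}=u}W$ for all $u\in\cU^{\le\alpha}$; surviving the second step gives a suffix state word $\vs'^{>\alpha}$ and a channel $\widetilde V_{\bfs|\bfx,\bfu^{>\alpha}}$ with $(V_{\bfs|\bfx,\bfu^{\le\alpha}},\widetilde V_{\bfs|\bfx,\bfu^{>\alpha}})\in\cF_\alpha(P_{\bfx|\bfu})$ and $\type_{\vx'^{(u)},\vs'^{(u)},\vy^{(u)}}=P_{\bfx|\bfu=u}\widetilde V_{\bfs|\bfx,\bfu^{>\alpha}=u}W$ for all $u\in\cU^{>\alpha}$. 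Thus the concatenated word $\vs'$ satisfies $\type_{\vs'}\in\lambda_\bfs$, and it carries the same prefix state type as the decoder's channel $V_{\bfs|\bfx,\bfu^{\le\alpha}}$. Meanwhile James's actual transmission gives, for every $u\in\cU^{>\alpha}$, the chunk $\vs^{(u)}$ with $\type_{\vx^{(u)},\vs^{(u)},\vy^{(u)}}=P_{\bfx|\bfu=u}V^*_{\bfs|\bfx,\bfu=u}W$ (with $V^*$ as in Theorem~\ref{thm:correct_decoding}) and $\type_\vs\in\lambda_\bfs$. We may assume w.l.o.g.\ that $P_{\bfx|\bfu=u}$ has full support for every $u$ (unused input symbols are deleted), so all codeword chunks are supported on all of $\cX$.

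The conceptual heart is turning the suffix chunk types into a per-chunk symmetrization. Fix $u\in\cU^{>\alpha}$ and let $\pi_u\coloneqq\type_{\vx^{(u)},\vs^{(u)},\vx'^{(u)},\vs'^{(u)},\vy^{(u)}}$. Under $\pi_u$ one has the deterministic relations $\bfy=W(\bfx,\bfs)=W(\bfx',\bfs')$ (state-determinism, plus the fact that $\vy^{(u)}$ is explained by both words), the marginals $[\pi_u]_\bfx=[\pi_u]_{\bfx'}=P_{\bfx|\bfu=u}$, and — since the transmitted seed lies in the good set of Claim~\ref{claim:listprop} and $m'$ is in Bob's list, which has size at most $\listsize$ by Claim~\ref{claim:listsize} — the near-independence $\|[\pi_u]_{\bfx\bfx'}-P_{\bfx|\bfu=u}^{\ot2}\|_\infty\le\slk{x|u}$; together with $\min_x P_{\bfx|\bfu=u}(x)\ge\slk{MIN}$ this forces $[\pi_u]_{\bfx\bfx'}(x,x')>0$ for every $(x,x')$. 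Define $V^{\mathrm t}_u(s\mid x,x')\coloneqq\Pr_{\pi_u}[\bfs=s\mid\bfx=x,\bfx'=x']$ and $V^{\mathrm s}_u(s\mid x,x')\coloneqq\Pr_{\pi_u}[\bfs'=s\mid\bfx'=x,\bfx=x']$; then for every $(x,x',y)$ one checks $\sum_{s}V^{\mathrm s}_u(s\mid x,x')W(y\mid x,s)=\Pr_{\pi_u}[\bfy=y\mid\bfx'=x,\bfx=x']=\sum_{s}V^{\mathrm t}_u(s\mid x',x)W(y\mid x',s)$, the two outer equalities using $\bfy=W(\bfx',\bfs')$ and $\bfy=W(\bfx,\bfs)$ respectively. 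Hence $(V^{\mathrm s}_u,V^{\mathrm t}_u)\in\cV'$ exactly, for every $u\in\cU^{>\alpha}$.

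I would then try to realize $(V^{\mathrm s},V^{\mathrm t})$ (or a slight perturbation) as a pair contradicting clause~2 of $\slackrate$-goodness at $(\alpha,V_{\bfs|\bfx,\bfu^{\le\alpha}})$ — a legitimate instance, since $\alpha\le\alpha^*\le K-1$ gives $\alpha\le1-\tfrac1K$ and Bob only runs the test when the cumulative-mutual-information window holds for $V_{\bfs|\bfx,\bfu^{\le\alpha}}$. Averaging the state-marginals of $V^{\mathrm s}_u$, resp.\ $V^{\mathrm t}_u$, against $P_{\bfx|\bfu=u}^{\ot2}$ reproduces the suffix state type of $\vs'$, resp.\ of $\vs$, up to an $O(\slk{x|u})$ error (a fixed polynomial in $\delta$), which can be trimmed by mixing the channels with the zero-cost state $s_0$ of assumption~\ref{itm:assumption-s0}; this perturbation moves the per-chunk pairs into $\cV'_\slackrate$. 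Since $V_{\bfs|\bfx,\bfu^{\le\alpha}}$ carries the prefix state type of $\vs'$, we get $(V_{\bfs|\bfx,\bfu^{\le\alpha}},V^{\mathrm s})\in\cF_\alpha(P_{\bfx|\bfu})$ outright, whereas $(V_{\bfs|\bfx,\bfu^{\le\alpha}},V^{\mathrm t})\in\cF_\alpha(P_{\bfx|\bfu})$ holds exactly when the ``mixed'' word (prefix of $\vs'$)$\,\circ\,$(suffix of $\vs$) has type in $\lambda_\bfs$. As $\lambda_\bfs$ is the halfspace $\{P:\sum_s P(s)B(s)\le\Lambda\}$ and the costs of this mixed word and of its mirror (prefix of $\vs$)$\,\circ\,$(suffix of $\vs'$) sum to $\sum_s\type_\vs(s)B(s)+\sum_s\type_{\vs'}(s)B(s)\le2\Lambda$, at least one of the two mixed words is feasible. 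When the first one is, both $(V_{\bfs|\bfx,\bfu^{\le\alpha}},V^{\mathrm t})$ and $(V_{\bfs|\bfx,\bfu^{\le\alpha}},V^{\mathrm s})$ lie in $\cF_\alpha(P_{\bfx|\bfu})$ while $(V^{\mathrm s}_u,V^{\mathrm t}_u)\in\cV'_\slackrate$ for all $u\in\cU^{>\alpha}$, directly contradicting clause~2 of $\slackrate$-goodness; in that case we are done.

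The remaining case — the mirror word feasible but (prefix of $\vs'$)$\,\circ\,$(suffix of $\vs$) infeasible — is the main obstacle, and is exactly the situation handled in the overview by ``two interpolating jamming trajectories''. Here $V^{\mathrm t}$ is infeasible against the \emph{decoder's} prefix channel, so I would instead force the contradiction at a different valid pair: interpolate the prefix of James's certified trajectory toward the true one by (i) blending the prefix channel as $\theta V^*_{\bfs|\bfx,\bfu^{\le\alpha}}+(1-\theta)V_{\bfs|\bfx,\bfu^{\le\alpha}}$ and, if needed, (ii) extending the prefix length from $\alpha$ toward $\alpha^*$ using the chunk channels $V^*_{\bfs|\bfx,\bfu=u}$ on the newly absorbed chunks. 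Because each absorbed chunk changes the cumulative mutual information by at most $\tfrac1K\log|\cX|=\slk{LB}-\slk{UB}$, and the blend is continuous in $\theta$, one can arrange the interpolation so that the prefix cumulative mutual information lands in the window; convexity of $\lambda_\bfs$ applied to the feasible points ($\type_\vs$, $\type_{\vs'}$, and the feasible mirror word) keeps \emph{both} completions feasible; and linearity of the defining equalities of $\cV'$ in each of its two arguments separately shows that the corresponding blended per-chunk pairs remain in $\cV'_\slackrate$ — this is precisely the assertion that failure of symmetrizability at $\alpha^*$ propagates to the preceding decoding point. The resulting pair then contradicts clause~2 of $\slackrate$-goodness, completing the argument. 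The delicate part throughout this last step is choosing the interpolation so that the mutual-information window and the two $\cF_\alpha$-feasibility constraints are satisfied simultaneously; by contrast the symmetrization identity of the second paragraph is clean and needs only the near-independence supplied by Claim~\ref{claim:listprop}.
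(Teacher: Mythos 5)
Your symmetrization step is sound and matches the paper's: extracting the two per-chunk conditional state channels from the joint type of $(\vx^{(u)},\vs^{(u)},\vx'^{(u)},\vs'^{(u)},\vy^{(u)})$ and observing that state-determinism forces them to satisfy the $\cV'$ identity (the paper books the error from $\type_{\vx^{(u)},\vx'^{(u)}}\approx P_{\bfx|\bfu=u}^{\ot2}$ in the symmetrization identity, landing in $\cV'_\slackrate$ with $\slackrate=2\slk{x|u}/\slk{MIN}^2$, whereas you book it in the feasibility marginal; that is only a bookkeeping difference). The gap is in how you produce a \emph{single} prefix channel that simultaneously (i) lies in the mutual-information window and (ii) makes \emph{both} suffix completions $V^{\mathrm t}$ and $V^{\mathrm s}$ feasible. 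Your case split on which of the two ``mixed'' state words is feasible leaves the hard case genuinely open: when only the mirror word is feasible, the natural candidate is the true prefix channel $V^*_{\bfs|\bfx,\bfu^{\le\alpha}}$ (both completions feasible), but it may violate the upper edge of the window ($R> I(P_{\bfx|\bfu},V^*_{\bfs|\bfx,\bfu^{\le\alpha}})-\slk{UB}$), and your proposed fix of blending toward the decoder's $V_{\bfs|\bfx,\bfu^{\le\alpha}}$ does not work --- the prefix cost along that blend is a convex combination of the two prefix costs, and since $c_p(V)+c_s(V^{\mathrm t})>\Lambda$ in this case, any positive weight on $V$ can push the $V^{\mathrm t}$ completion back out of $\cF_\alpha$ before the window is reached. ``Convexity of $\lambda_\bfs$ applied to the feasible points'' does not rescue this, because one of the relevant endpoints is infeasible by hypothesis.

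The paper closes this with Claim~\ref{claim:convex_opt}, whose key idea is absent from your proposal: take $\bar{V}_{\bfs|\bfx,\bfu^{\le\alpha}}$ to be a \emph{minimizer of the prefix state cost} over the halfspace $\{V: R\ge I(P_{\bfx|\bfu},V)-\slk{LB}\}$. Both the decoder's $V_{\bfs|\bfx,\bfu^{\le\alpha}}$ and the true $V^*_{\bfs|\bfx,\bfu^{\le\alpha}}$ lie in that halfspace (the latter because $\alpha\le\alpha^*$ gives $I(P_{\bfx|\bfu},V^*_{\bfs|\bfx,\bfu^{\le\alpha}})\le I(P_{\bfx|\bfu},V^*_{\bfs|\bfx,\bfu^{\le\alpha^*}})\le R+\slk{LB}$), so $\bar V$'s prefix cost is dominated by \emph{both} of theirs simultaneously, and since $\lambda_\bfs$ is a single linear cost constraint, both $(\bar V,V^{\mathrm t})$ and $(\bar V,V^{\mathrm s})$ are feasible at once --- no case split needed. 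The window is then restored by interpolating $\bar V$ with the zero-cost stand-down channel $V^{(0)}$ of assumption~\ref{itm:assumption-s0}: this never increases the prefix cost (so feasibility of both completions is preserved) and, since $I(P_{\bfx|\bfu},V^{(0)}_{\bfs|\bfx,\bfu^{\le\alpha}})\ge R+\slk{UB}$, the intermediate value theorem places the blend in the window. This is the ``interpolating trajectory'' the overview alludes to; your version interpolates between the wrong pair of endpoints and cannot be repaired without essentially rediscovering the cost-minimizer construction.
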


\begin{proof}(of Theorem~\ref{thm:no_decoding})
Assume in contradiction that there exists $(\alpha,V_{\bfs|\bfx,\bfu^{\le\alpha}})$ in which $\alpha \le \alpha^*$ and $\vx' \ne \vx \in \cL(P_{\bfx|\bfu},V_{\bfs|\bfx,\bfu^{\le\alpha}})$ such that the message $m'  \ne m$ corresponding to $\vx'$ is decoded by Bob. 
Let $\type_{\vx^{(u)},\vx'^{(u)}}= P_{\bfx,\bfx'|\bfu=u}$, $\forall u\in\cU$.
In particular, there exists a vector $\vs'$ and  $V_{\bfs|\bfx,\bfu^{>\alpha}}$ for which $(V_{\bfs|\bfx,\bfu^{\le\alpha}},V_{\bfs|\bfx,\bfu^{>\alpha}}) \in \cF_\alpha(P_{\bfx|\bfu})$ and $\forall u\in\cU^{>\alpha}$, $\type_{\vx'^{(u)},\vs'^{(u)},\vy^{(u)}} = P_{\bfx|\bfu = u} V_{\bfs|\bfx,\bfu^{>\alpha} = u} W_{\bfy|\bfx,\bfs}$.
I.e., corrupting $\vx'$ using $\vs'$ results in the received word $\vy$.

Using the pair $\vx$, $\vx'$ let $V^*_{\bfs|\bfx,\bfx',\bfu^{>\alpha}}$ satisfy 
$$
\forall (x,x',s, y),\  \forall u\in\cU^{>\alpha},\,\type_{\vx^{(u)},\vx'^{(u)},\vs^{(u)},\vy^{(u)}}(x,x',s,y) = %\Delta_{\slk{x|u}}(
P_{\bfx,\bfx'|\bfu=u}(x,x') V_{\bfs|\bfx,\bfx',\bfu^{>\alpha} = u}^*(s|x,x') W_{\bfy|\bfx,\bfs}(y|x,s)
$$
and  let $V_{\bfs|\bfx,\bfx',\bfu^{>\alpha}}$ satisfy 
$$
\forall (x,x',s, y),\  \forall u\in\cU^{>\alpha},\,\type_{\vx^{(u)},\vx'^{(u)},\vs'^{(u)},\vy^{(u)}}(x,x',s,y) = %\Delta_{\slk{x|u}}
P_{\bfx,\bfx'|\bfu=u}(x',x) V_{\bfs|\bfx,\bfx',\bfu^{>\alpha} = u}(s|x',x) W_{\bfy|\bfx,\bfs}(y|x',s)
$$
It follows that $(V_{\bfs|\bfx,\bfu^{\le\alpha}}, V_{\bfs|\bfx,\bfx',\bfu^{>\alpha}})\in\cF_\alpha(P_{\bfx|\bfu})$ and  $(V^*_{\bfs|\bfx,\bfu^{\le\alpha}}, V^*_{\bfs|\bfx,\bfx',\bfu^{>\alpha}})\in\cF_\alpha(P_{\bfx|\bfu})$.

%Let 
%\begin{align}
%\label{ex:vbar}
%\bar{V}_{\bfs|\bfx,\bfu^{\le\alpha}}= \argmin_{
%	\substack{V_{\bfs|\bfx,\bfu^{\le\alpha}} \\ 
%		R  \geq  I(P_{\bfx|\bfu},V_{\bfs|\bfx,\bfu^{\le\alpha}})-\slk{LB} }}
%{\cost([P_{\bfx|\bfu}V_{\bfs|\bfx,\bfu^{\le\alpha}}]_{\bfs^{\le\alpha}})}
%\end{align}

\begin{claim}
	\label{claim:convex_opt}
	There exists 
\begin{align}
\label{ex:vbar}
\bar{V}_{\bfs|\bfx,\bfu^{\le\alpha}} \in \argmin_{
	\substack{V_{\bfs|\bfx,\bfu^{\le\alpha}} \\ 
		R  \geq  I(P_{\bfx|\bfu},V_{\bfs|\bfx,\bfu^{\le\alpha}})-\slk{LB} }}
{\cost([P_{\bfx|\bfu}V_{\bfs|\bfx,\bfu^{\le\alpha}}]_{\bfs^{\le\alpha}})}
\end{align}
such that
\begin{enumerate}
	\item $(\bar{V}_{\bfs|\bfx,\bfu^{\le\alpha}}, V_{\bfs|\bfx,\bfx',\bfu^{>\alpha}})\in\cF_\alpha(P_{\bfx|\bfu})$.
	\item $(\bar{V}_{\bfs|\bfx,\bfu^{\le\alpha}}, V^*_{\bfs|\bfx,\bfx',\bfu^{>\alpha}})\in\cF_\alpha(P_{\bfx|\bfu})$.
	\item $R \in [ I(P_{\bfx|\bfu},\bar{V}_{\bfs|\bfx,\bfu^{\le\alpha}})-\slk{LB}, I(P_{\bfx|\bfu},\bar{V}_{\bfs|\bfx,\bfu^{\le\alpha}})-\slk{UB}]$.
\end{enumerate}
\end{claim}

\begin{proof}(of Claim~\ref{claim:convex_opt})
Let $\bar{V}_{\bfs|\bfx,\bfu^{\le\alpha}}$ be any distribution satisfying  (\ref{ex:vbar}).
The first claim that $(\bar{V}_{\bfs|\bfx,\bfu^{\le\alpha}}, V_{\bfs|\bfx,\bfx',\bfu^{>\alpha}})\in\cF_\alpha(P_{\bfx|\bfu})$ follows from the fact that $ R \in  [I(P_{\bfx|\bfu},V_{\bfs|\bfx,\bfu^{\le\alpha}}) - \slk{LB}, I(P_{\bfx|\bfu},V_{\bfs|\bfx,\bfu^{\le\alpha}})-\slk{UB}] $ and thus $\cost([P_{\bfx|\bfu}\bar{V}_{\bfs|\bfx,\bfu^{\le\alpha}}]_{\bfs^{\le\alpha}}) \leq 
\cost([P_{\bfx|\bfu}V_{\bfs|\bfx,\bfu^{\le\alpha}}]_{\bfs^{\le\alpha}})$ (here we use the nature of $\lambda_\bfs$).
% and the fact that $\slk{OPT} \geq \slk{LB}$ \mikel{Perhaps one can set $\slk{OPT} = \slk{LB}$}).
A similar analysis holds  for the second claim, $(\bar{V}_{\bfs|\bfx,\bfu^{\le\alpha}}, V^*_{\bfs|\bfx,\bfx',\bfu^{>\alpha}})\in\cF_\alpha(P_{\bfx|\bfu})$, with the additional observation that as $\alpha \le \alpha^*$ we have  
 $I(P_{\bfx|\bfu},V^*_{\bfs|\bfx,\bfu^{\le\alpha}}) \leq I(P_{\bfx|\bfu},V^*_{\bfs|\bfx,\bfu^{\le\alpha^*}})\in[ R+\slk{UB},R+\slk{LB}]$.
 
 Finally, the third claim follows from standard convexity arguments coupled by our assumption that there exists a zero-cost state $s_0\in \cS$ and a one-to-one mapping $\phi:\cY \rightarrow \cX$ for which for every $x$, $W_{\bfy|\bfx,\bfs}(\phi(x)|x,s_0)=1$.
%Such a state $s_0$ has no cost to James and, accordingly, does not corrupt Bob's view of $x$ (as $x$ can be decoded from $\phi(x)$).
%If such a state $s_0$ does not exist for the given channel $W$, one can replace $W$ by $W'$, replace $\cY$ by $\cY'=\cY \cup \{y_x | x \in \cX\}$, and $\cS$ by $\cS'=\cS \cup \{s_0\}$, where $\{y_x | x \in \cX\} \cap \cY = \phi$, for all $x \in \cX$, $y \in \cY$, $s \in \cS$, $W'(y|x,s)=W(y|x,s)$, and for all $x \in \cX$, $W'(y_x|x,s_0)=1$ (the remaining values of $W'(y|x,s)$ are set to 0).
%As James is stronger when one considers the AVC defined by $\cX'$, $\cY'$, and $W'$ when compared to that defined by $\cX$, $\cY$, and $W$, the achievability proof at hand may assume the former without loss of generality.
%\mikel{Need to rethink the argument again - we may actually need to prove that the change does not reduce rate as we cannot use $W'$ in an isolated manner in this part of the proof and not in others ... which will eventually lead to capacity expressions that depend of $W'$.} 
More formally, consider  ${V}^{(0)}_{\bfs|\bfx,\bfu^{\le\alpha}}$ for which for every $x$ and $u \leq \alpha$, ${V}^{(0)}_{\bfs|\bfx,\bfu^{\le\alpha}}(s_0|x,u)=1$. 
 It holds that  $\cost([P_{\bfx|\bfu}V^{(0)}_{\bfs|\bfx,\bfu^{\le\alpha}}]_{\bfs^{\le\alpha}})=0$, and 
 that 
$I(P_{\bfx|\bfu},V^{(0)}_{\bfs|\bfx,\bfu^{\le\alpha}}) \geq I(P_{\bfx|\bfu},V_{\bfs|\bfx,\bfu^{\le\alpha}}) \geq R+\slk{UB}$.
% 
% as 
% $R \in \cR_\slackrate$ we have that
% $R \le \min_{ V_{\bfs|\bfx,\bfu}\in\cF(P_{\bfx|\bfu})} I(P_{\bfx|\bfu}, V_{\bfs|\bfx,\bfu}) - \slk{GOOD} \leq I(P_{\bfx|\bfu}, V^{(0)}_{\bfs|\bfx,\bfu}) - \slk{GOOD}$.
%As  $\slk{GOOD} \geq \slk{OPT}$, the latter in turn implies that   
% $R \leq I(P_{\bfx|\bfu}, V^{(0)}_{\bfs|\bfx,\bfu}) - \slk{OPT}$. 
Assume in contradiction that 
for every $\bar{V}_{\bfs|\bfx,\bfu^{\le\alpha}}$ satisfying  (\ref{ex:vbar}) it holds that 
$R \not\in [ I(P_{\bfx|\bfu},\bar{V}_{\bfs|\bfx,\bfu^{\le\alpha}})-\slk{LB}, I(P_{\bfx|\bfu},\bar{V}_{\bfs|\bfx,\bfu^{\le\alpha}})-\slk{UB}]$ or equivalently that
$R >  I(P_{\bfx|\bfu},\bar{V}_{\bfs|\bfx,\bfu^{\le\alpha}})-\slk{UB}$.
Consider one such $\bar{V}_{\bfs|\bfx,\bfu^{\le\alpha}}$ and 
define the convex combination 
 $V^\lambda_{\bfs|\bfx,\bfu^{\le\alpha}}=\lambda\bar{V}_{\bfs|\bfx,\bfu^{\le\alpha}}+(1-\lambda)V^{(0)}_{\bfs|\bfx,\bfu^{\le\alpha}}$
 for an arbitrarily small $\lambda >0$.
From the linear nature of our cost function, it follows that  
$\cost([P_{\bfx|\bfu}V^\lambda_{\bfs|\bfx,\bfu^{\le\alpha}}]_{\bfs^{\le\alpha}}) \leq \cost([P_{\bfx|\bfu}\bar{V}_{\bfs|\bfx,\bfu^{\le\alpha}}]_{\bfs^{\le\alpha}})$ 
with equality only if $\cost([P_{\bfx|\bfu}\bar{V}_{\bfs|\bfx,\bfu^{\le\alpha}}]_{\bfs^{\le\alpha}})=0$.

For $\cost([P_{\bfx|\bfu}\bar{V}_{\bfs|\bfx,\bfu^{\le\alpha}}]_{\bfs^{\le\alpha}})>0$, we have that 
$\cost([P_{\bfx|\bfu}V^\lambda_{\bfs|\bfx,\bfu^{\le\alpha}}]_{\bfs^{\le\alpha}}) < \cost([P_{\bfx|\bfu}\bar{V}_{\bfs|\bfx,\bfu^{\le\alpha}}]_{\bfs^{\le\alpha}})$.
Moreover, from the continuity of $I(P_{\bfx|\bfu},V^\lambda_{\bfs|\bfx,\bfu^{\le\alpha}})$ (in $\lambda$), for sufficiently small $\lambda>0$ we have that $R >  I(P_{\bfx|\bfu},\bar{V}_{\bfs|\bfx,\bfu^{\le\alpha}})-\slk{UB} \geq I(P_{\bfx|\bfu},\bar{V}_{\bfs|\bfx,\bfu^{\le\alpha}})-\slk{LB}$ in contradiction to the definition of  $\bar{V}_{\bfs|\bfx,\bfu^{\le\alpha}}$ in (\ref{ex:vbar}).
% $\cost([P_{\bfx|\bfu}V_{\bfs|\bfx,\bfu^{\le\alpha}}]_{\bfs^{\le\alpha}})=0$, and  $I(P_{\bfx|\bfu},V^*_{\bfs|\bfx,\bfu^{\le\alpha}}) \leq I(P_{\bfx|\bfu},V^{(0)}_{\bfs|\bfx,\bfu^{\le\alpha}}) \leq equals $s_0$.
If $\cost([P_{\bfx|\bfu}\bar{V}_{\bfs|\bfx,\bfu^{\le\alpha}}]_{\bfs^{\le\alpha}})=0$, again from the continuity of $I(P_{\bfx|\bfu},V^\lambda_{\bfs|\bfx,\bfu^{\le\alpha}})$ (in $\lambda$), there exists a $\lambda \in (0,1)$ for which 
$R = I(P_{\bfx|\bfu},V^\lambda_{\bfs|\bfx,\bfu^{\le\alpha}})-\slk{UB} \in [ I(P_{\bfx|\bfu},V^\lambda_{\bfs|\bfx,\bfu^{\le\alpha}})-\slk{LB}, I(P_{\bfx|\bfu},V^\lambda_{\bfs|\bfx,\bfu^{\le\alpha}})-\slk{UB}]$, inplying that $V^\lambda_{\bfs|\bfx,\bfu^{\le\alpha}}$ satisfies (\ref{ex:vbar}) and can be used to conclude the assertion.
 \end{proof}

Using Claim~\ref{claim:convex_opt} and recalling that $R \in \cR_\slackrate$ we conclude that there exists $ u\in\cU^{>\alpha} $ such that $ (V_{\bfs|\bfx,\bfx',\bfu^{>\alpha} = u}, V_{\bfs|\bfx,\bfx',\bfu^{>\alpha} = u}^*)\not\in\cV'_\slackrate$.
However, this latter fact is in contradiction with the definition of $V^*_{\bfs|\bfx,\bfx',\bfu^{>\alpha}}$ and $V_{\bfs|\bfx,\bfx',\bfu^{>\alpha}}$.
In particular, we have for $\bfu=u$ that:
$$
\forall (x,x', y),\ \type_{\vx^{(u)},\vx'^{(u)},\vy^{(u)}}(x,x',y) = %\Delta_{\slk{x|u}}(
\sum_{s \in \cS}P_{\bfx,\bfx'|\bfu=u}(x,x') V_{\bfs|\bfx,\bfx',\bfu^{>\alpha} = u}^*(s|x,x') W_{\bfy|\bfx,\bfs}(y|x,s),
$$
$$
\forall (x,x',s, y),\ \type_{\vx^{(u)},\vx'^{(u)},\vy^{(u)}}(x,x',y) = %\Delta_{\slk{x|u}}
\sum_{s \in \cS}P_{\bfx,\bfx'|\bfu=u}(x',x) V_{\bfs|\bfx,\bfx',\bfu^{>\alpha} = u}(s|x',x) W_{\bfy|\bfx,\bfs}(y|x',s).
$$
%$$
%\forall (x,x',s, y),\,\type_{\vx^{(u)},\vx'^{(u)},\vs^{(u)},\vy^{(u)}}(x,x',s,y) = P_{\bfx|\bfu = u}(x)P_{\bfx|\bfu = u}(x') V_{\bfs|\bfx,\bfx',\bfu^{>\alpha} = u}(s|x,x') W_{\bfy|\bfx,\bfs}(y|x,s),
%$$ 
%$$
%\forall (x,x',s', y),\,\type_{\vx^{(u)},\vx'^{(u)},\vs^{(u)},\vy^{(u)}}(x,x',s',y) = P_{\bfx|\bfu = u}(x)P_{\bfx|\bfu = u}(x') V_{\bfs|\bfx,\bfx',\bfu^{>\alpha} = u}(s'|x',x) W_{\bfy|\bfx,\bfs}(y|x',s'),
%$$
In addition, by the assumption that Claim~\ref{claim:listprop} holds, we have that
$$
\type_{\vx_{(m,r_u)}^{(u)},\vx'^{(u)}} = P_{\bfx,\bfx'|\bfu=u} \in \Delta_{\slk{x|u}} (P_{\bfx|\bfu = u}^{\ot2}).
$$
Moreover, from our code design, we have for $x$ for which $P_{\bfx|\bfu = u}(x)=0$ we have for any $x'$ that $P_{\bfx,\bfx'|\bfu=u}(x,x')=0$. 
This implies that 
$$
\forall (x,x'),\ |P_{\bfx,\bfx'|\bfu=u}(x,x') - P_{\bfx|\bfu=u}(x)P_{\bfx|\bfu=u}(x')| \leq \slk{x|u}\cdot {\bf 1}_{P_{\bfx|\bfu = u}(x) \neq 0} \cdot {\bf 1}_{P_{\bfx|\bfu = u}(x') \neq 0}
$$
Now, the above implies that
%$$
%\forall (x,x',s,y),\  
%P_{\bfx,\bfx'|\bfu=u}(x,x') V_{\bfs|\bfx,\bfx',\bfu^{>\alpha} = u}^*(s|x,x') W_{\bfy|\bfx,\bfs}(y|x,s)=
%P_{\bfx,\bfx'|\bfu=u}(x',x) V_{\bfs|\bfx,\bfx',\bfu^{>\alpha} = u}(s|x',x) W_{\bfy|\bfx,\bfs}(y|x',s)
%$$
%which, in turn, implies that 
$$
\forall (x,x',y),\ 
P_{\bfx,\bfx'|\bfu=u}(x,x')\sum_{s\in\cS}V_{\bfs|\bfx,\bfx',\bfu^{>\alpha} = u}^*(s|x,x')W_{\bfy|\bfx,\bfs}(y|x,s) 
=
P_{\bfx,\bfx'|\bfu=u}(x',x)\sum_{s\in\cS}V_{\bfs|\bfx,\bfx',\bfu^{>\alpha} = u}(s'|x',x)W_{\bfy|\bfx,\bfs}(y|x',s) 
$$
and thus $\forall (x,x',y)$ for which $P_{\bfx|\bfu = u}(x) \ne 0$ and $P_{\bfx|\bfu = u}(x') \ne 0$:
\begin{align*}
\left|\sum_{s\in\cS}V_{\bfs|\bfx,\bfx',\bfu^{>\alpha} = u}^*(s|x,x')W_{\bfy|\bfx,\bfs}(y|x,s) -\sum_{s\in\cS}V_{\bfs|\bfx,\bfx',\bfu^{>\alpha} = u}(s'|x',x)W_{\bfy|\bfx,\bfs}(y|x',s) \right| \leq
\frac{2\slk{x|u}}{P_{\bfx|\bfu=u}(x)P_{\bfx|\bfu=u}(x')}
\end{align*}
In contradiction to 
$ (V_{\bfs|\bfx,\bfx',\bfu^{>\alpha} = u}, V_{\bfs|\bfx,\bfx',\bfu^{>\alpha} = u}^*)\not\in\cV'_\slackrate$ as
$\slackrate =   \frac{2\slk{x|u}}{\slk{MIN}^2} \geq \frac{2\slk{x|u}}{\min_{x}\{P_{\bfx|\bfu=u}^2(x)\}}$ where the denominator-minimization is taken over $x$ for which $P_{\bfx|\bfu = u}(x) \ne 0$.
This concludes our proof.

%\mikel{ML: To connect to the main theorem, we may need to optimize only over a quantized version of $P_{\bfx|\bfu = u}(x)$ in which any non-zero probabilities are at least some fixed parameter $\slk{Non-Zero}$. The loss in rate is then bounded by a function of  $\slk{Non-Zero}$ which should be specified.}
\end{proof}

% !TEX root = causal_general.tex

\section{Converse}
\label{sec:converse}

To prove Theorem~\ref{thm:converse_main} it suffices to prove the following theorem.
\begin{theorem}
\label{thm:converse}
Let $ (\cX,\cS,\cY,\ipcstr,\stcstr,W_{\bfy|\bfx,\bfs}) $ be a causal channel satisfying \Cref{itm:assumption-alpha,itm:assumption-ip-constr,itm:assumption-st-constr,itm:assumption-ch} in \Cref{sec:global-assumptions}. 
For any $ \delta>0, K\ge \frac{2\log|\cX|}{\delta} + \binom{|\cX|+1}{2} $ and any code $ \cC\subset\cX^n $ satisfying
\begin{enumerate}
	\item $ \cC $ is equipped with an encoder-decoder pair $ (\enc,\dec)\in\Delta(\cX^n|\cM)\times\Delta(\cM|\cY^n) $ (both potentially stochastic);

	\item $ \type_{\vx}\in\ipcstr $ for every $ \vx\in\cC $;

	\item $ R(\cC)\ge \ol{C}_{K,\delta} + \delta $, 
\end{enumerate}
the average error probability is at least
\begin{align}
\sup_{\jam} P_{\e,\avg}(\enc,\dec,\jam) &\ge f(\delta) \notag 
\end{align}
for some $ f(\delta)>0 $ such that $ f(\delta)\xrightarrow{\delta\to0}0 $. 
The supremum above is taken over all feasible causal jamming strategies as per \Cref{def:jamming}. 
The rate bound $ \ol{C}_{K,\delta} $ is defined (for general $P_\bfu$) as follows: 
\begin{align}
	\ol{C}_{K,\delta} \coloneqq \max_{\substack{(P_\bfu, P_{\bfx|\bfu})\in\Delta(\cU)\times\Delta(\cX|\cU) \\ \sqrbrkt{P_\bfu P_{\bfx|\bfu}}_\bfx\in\ipcstr}} 
	\min\Bigg\{
		& \min_{ V_{\bfs|\bfx,\bfu}\in\cF_{\slk{state}}(P_\bfu, P_{\bfx|\bfu})} I(P_{\bfx|\bfu}, V_{\bfs|\bfx,\bfu}) , \notag \\
		& \min_{\substack{(\alpha, (V_{\bfs|\bfx,\bfu^{\le\alpha}}, V_{\bfs|\bfx,\bfx',\bfu^{>\alpha}}))\in\cA\times\cF_{\alpha,\slk{state}}(P_\bfu,P_{\bfx|\bfu}) \\ 
		\forall u\in\cU^{>\alpha},\,  V_{\bfs|\bfx,\bfx',\bfu^{>\alpha} = u}\in\cV}}
		I(P_{\bfu^{\le\alpha}}, P_{\bfx|\bfu^{\le\alpha}}, V_{\bfs|\bfx,\bfu^{\le\alpha}}) 
	\Bigg\} , \notag 
\end{align}
where $ K_1\ge\frac{2\log|\cX|}{\delta} $, $ K_2\ge\binom{|\cX|+1}{2} $, $ \cU^{\le\alpha} = \{u_1,\cdots,u_{K_1}\} $, $ \cU^{>\alpha} = \{v_1,\cdots,v_{K_2}\} $, $ P_{\bfu^{\le\alpha}} = \unif(\cU^{\le\alpha}) $, $ P_{\bfu^{>\alpha}}\in\Delta(\cU^{>\alpha}) $, $ \cU = \cU^{\le\alpha}\sqcup\cU^{>\alpha} $, $ K = K_1+K_2 $, $ P_{\bfu}\in\Delta(\cU) $ defined as
\begin{align}
P_{\bfu}(u) &= \begin{cases}
\alpha P_{\bfu^{\le\alpha}}, & u\in\cU^{\le\alpha} \\
(1-\alpha) P_{\bfu^{>\alpha}(u)}, & u\in\cU^{>\alpha}
\end{cases}, \notag 
\end{align}
and 
$\slk{state} = (K_1|\cX||\cS| + K_2|\cX|^2|\cS|)\delta $. 
\end{theorem}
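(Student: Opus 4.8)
The plan is to prove Theorem~\ref{thm:converse} by constructing, for every code $\cC$ with $R(\cC)\ge\ol{C}_{K,\delta}+\delta$, an explicit causal jamming strategy $\jam$ implementing the two-phase ``babble-and-push'' attack sketched in Section~\ref{sec:proofsketch}, and by showing it forces $P_{\e,\avg}(\enc,\dec,\jam)\ge f(\delta)$. First I would carry this out for deterministic encoders; the passage to stochastic $\enc$ then follows from the information-theoretic/combinatorial synthesis of~\cite{dey-causal-it2013}, applied essentially verbatim. Throughout, $[n]$ is split into $K$ chunks of length $n/K$ and the cascade of slack parameters ($\slk{UB},\slk{LB},\slk{state},\dots$) is fixed as polynomials of $\delta$, mirroring the achievability proof of Section~\ref{sec:proof-achievability}.

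\emph{Pre-transmission set-up.} As in Step~\ref{step:subcode-extraction}, James first passes to a subcode $\cC'\subseteq\cC$ via a $\delta$-net/pigeonhole argument on the $K$-fold product of $\Delta(\cX)$: some net cell contains an $\Omega(\delta^{|\cX|K})$-fraction of the codewords, all of which are chunk-wise $\delta$-approximately constant-composition with common chunk-wise laws $\{P_{\bfx|\bfu=u}\}_{u\in\cU}$; since each $\type_{\vx}\in\ipcstr$ with $\ipcstr$ convex, a further $\delta$-perturbation (absorbed into the slack) gives $\sqrbrkt{P_\bfu P_{\bfx|\bfu}}_\bfx\in\ipcstr$. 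Because $\ol{C}_{K,\delta}$ is a maximum, $R-\delta\ge\ol{C}_{K,\delta}$ is at least the inner minimum in the definition of $\ol{C}_{K,\delta}$ evaluated at this $P_{\bfx|\bfu}$, so one of two cases holds. In the first, there is $V_{\bfs|\bfx,\bfu}\in\cF_{\slk{state}}(P_\bfu,P_{\bfx|\bfu})$ with $I(P_{\bfx|\bfu},V_{\bfs|\bfx,\bfu})\le R-\delta$; James feeds Alice's transmission through the induced single-letter channel chunk-by-chunk, and the fixed-composition strong converse (applied chunk-wise) already yields a constant error probability on $\cC'$, hence on $\cC$. The interesting case is the second: there are $\alpha\in\cA$ and $(V_{\bfs|\bfx,\bfu^{\le\alpha}},V_{\bfs|\bfx,\bfx',\bfu^{>\alpha}})\in\cF_{\alpha,\slk{state}}(P_\bfu,P_{\bfx|\bfu})$ with $V_{\bfs|\bfx,\bfx',\bfu^{>\alpha}=u}\in\cV$ for every $u\in\cU^{>\alpha}$ and prefix cumulative mutual information (Definition~\ref{def:cumulative-mutual-info-converse}) at most $R-\delta$; these fix James's babble distributions on the prefix chunks and symmetrizing distributions on the suffix chunks. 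The prefix is taken with $K_1\ge\frac{2\log|\cX|}{\delta}$ chunks so that the $[\slk{UB},\slk{LB}]$-window controlling when James stops babbling always admits a valid $\alpha$, and the suffix with $K_2\ge\binom{|\cX|+1}{2}$ chunks for the reason below (whence the hypothesis $K\ge\frac{2\log|\cX|}{\delta}+\binom{|\cX|+1}{2}$).

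\emph{Execution and analysis.} In the babble phase, James feeds $\vx(t)$ through $V_{\bfs|\bfx,\bfu=u}$ letter-by-letter on prefix chunk $u$; since the resulting cumulative mutual information is $<R$, a strong-converse/list-decoding argument for the constant-composition subcode gives that, with probability $\Omega(1)$ over James's and Alice's randomness, Bob's posterior after the prefix still supports a list of $2^{\Omega(n)}$ candidate messages, and---crucially, because $W$ is state-deterministic---James reconstructs the identical list from $(\vx^{\le\alpha},\vs^{\le\alpha})$. He draws a spoof $m'$ uniformly from this list ($m'\ne m$ with probability $1-2^{-\Omega(n)}$), sets $\vx'\coloneqq\vx_{m'}$, and in the push phase applies $V_{\bfs|\bfx,\bfx',\bfu^{>\alpha}=u}(\cdot\mid\vx(t),\vx'(t))$ letter-by-letter on suffix chunk $u$. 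The Generalized Plotkin bound~\cite{wbbj-2019-omni-avc}, applied to the exponentially-large prefix list viewed as a code and averaged over $m$, now guarantees that with probability $\Omega(\poly(\delta))$ each suffix chunk-wise joint type $\type_{\vx^{(u)},\vx'^{(u)}}$ is a convex combination of at most $\binom{|\cX|+1}{2}$ product distributions with the prescribed marginals; James then budgets the $K_2$ suffix chunks (with mixture weights $P_{\bfu^{>\alpha}}$) across these components so that the realized state type matches, up to $\poly(\delta)$, the suffix quantity constrained in $\cF_{\alpha,\slk{state}}(P_\bfu,P_{\bfx|\bfu})$, which together with the prefix cost and the slack $\slk{state}=(K_1|\cX||\cS|+K_2|\cX|^2|\cS|)\delta$ certifies $\type_{\vs}\in\stcstr$. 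On this good event, $V_{\bfs|\bfx,\bfx',\bfu^{>\alpha}=u}\in\cV$ (Definition~\ref{def:symm-dist}) makes the conditional law of $\vy^{>\alpha}$ invariant under the swap $\vx^{>\alpha}\leftrightarrow\vx'^{>\alpha}$, so up to typicality slack the whole transcript is equally consistent with ``$m$ transmitted, $m'$ spoofed'' and with ``$m'$ transmitted, $m$ spoofed''; since $m'$ is uniform in Bob's list, Bob errs with probability at least $1/2$ on the good event. Multiplying the positive, $\delta$-dependent probabilities of $\{\vx\in\cC'\}$, the large-list event, $\{m'\ne m\}$, the Plotkin good-type event, and the final $1/2$---each holding for a constant fraction of messages $m$---gives $\sup_\jam P_{\e,\avg}(\enc,\dec,\jam)\ge f(\delta)$ for some $f(\delta)>0$ (depending also on $K$ and the fixed channel $W,\cost$) with $f(\delta)\to0$ as $\delta\to0$.

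\emph{Anticipated main obstacle.} I expect the crux to be the push-phase analysis for a general state-deterministic AVC, where one must simultaneously: (i) use the Generalized Plotkin bound so that the suffix state cost is charged only against convex combinations of product types---precisely the structure hard-wired into $\cF_{\alpha,\slk{state}}$, and the reason a feasible bounded-cost state sequence exists with positive probability---while making that bound interact correctly with list-decoding, since the spoof $m'$ comes from the channel-dependent prefix list rather than the full codebook; (ii) thread every $\ell_\infty$ slack (the $\delta$-approximate constant composition of $\cC'$, $\poly(\delta)$ type fluctuations, exact versus relaxed symmetrizability) through the symmetrization identity of Definition~\ref{def:symm-dist} so that Bob's induced confusion degrades only by $o(1)$; and (iii) lift the interpolation and convexity arguments of~\cite{chen2019capacity}---in particular the device of choosing two interpolating jamming trajectories so that feasibility and symmetrizability at the ``true'' $\alpha$ extend to all preceding values of $\alpha$---from the bit-flip/erasure setting to an arbitrary deterministic channel law $W$, and finally combine everything with the stochastic-to-deterministic encoder reduction of~\cite{dey-causal-it2013}.
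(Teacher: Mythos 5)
Your blueprint matches the paper's at the top level (subcode extraction, choosing $(\alpha,V_{\bfs|\bfx,\bfu^{\le\alpha}},V_{\bfs|\bfx,\bfx',\bfu^{>\alpha}})$ from the minimizer of $\ol{C}_{K,\delta}$, babble, exploit state-determinism, symmetrize the suffix, certify the cost via $\slk{state}$), but there is a genuine gap in your push-phase confusability step. You have James list-decode the prefix to a $2^{\Omega(n)}$ list, draw the spoof $m'$ \emph{uniformly} from that list, and conclude that ``since $m'$ is uniform in Bob's list, Bob errs with probability at least $1/2$.'' That inference does not follow: the symmetrizing distributions in $\cV$ only make $\vbfy^{>\alpha}$ equally likely under the swap $\vx^{>\alpha}\leftrightarrow\vx'^{>\alpha}$; for Bob to actually be forced into a coin flip you need the \emph{joint} law of (true, spoof) given $\vbfy^{\le\alpha}$ to be (nearly) exchangeable, and with a uniform spoof but a non-uniform posterior of the true message over the list (the generic situation with a stochastic encoder and a general state-deterministic $W$) Bob can simply decode to the a posteriori more likely candidate. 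The paper sidesteps list decoding and strong converses entirely: Lemma~\ref{lem:lb-e1} uses only the data-processing inequality plus a Markov argument to show $H(\bfm|\vbfy^{\le\alpha}=\vy^{\le\alpha})\ge n\delta/4$ with probability $\Omega(\delta)$, and James samples the spoofing \emph{message-key pair} $(\bfm',\bfr')$ from Bob's exact posterior $P'_{(\bfm,\bfr)|\vbfy^{\le\alpha}}$ restricted to $\cM'\times\cR'$ (\Cref{eqn:posterior-distr-prime}); this gives exact conditional exchangeability (\Cref{eqn:first-exchangeable,eqn:second-exchangeable}), which is what produces the $1/2$ in Lemma~\ref{lem:lb-pe}. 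Collisions $\bfm=\bfm'$ are then handled by the Fano-type non-collision bound of Lemma~\ref{lem:fano-no-collision}, and occur with probability that is only $\Omega(\delta)$ -- not $1-2^{-\Omega(n)}$ as you claim; this is also why $f(\delta)$ vanishes as $\delta\to0$. Note too that the stochastic encoder is handled natively throughout (everything is done on pairs $(\bfm,\bfr)$), not by a separate deterministic-to-stochastic reduction ``applied verbatim'' from \cite{dey-causal-it2013}.

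Two secondary deviations are worth flagging. First, your plan to apply the Generalized Plotkin bound chunk-wise and then ``budget the $K_2$ suffix chunks across the CP components'' does not work as stated: the pair type $\type_{\vx^{(u)},\vx'^{(u)}}$ inside a physical chunk need not equal any single product component, so whole chunks cannot be assigned to components. The paper instead deliberately forgets the chunk structure in the suffix (see the Remark after the attack description), fixes one CP distribution $\wh{P}_{\bfx,\bfx'}\in\cp(P_{\bfx^{>\alpha}})$ with respect to the \emph{average} suffix composition, and has James assign the auxiliary time-sharing labels $\vv^{>\alpha}$ greedily symbol-by-symbol so that $\type_{\vv^{>\alpha},\vx^{>\alpha},\vbfx'^{>\alpha}}$ matches $P_\bfv P_{\bfx|\bfv}^{\ot2}$ (\Cref{eqn:close-to-cp}); the Plotkin bound enters through Lemma~\ref{lem:lb-e2-e3}, applied to $N_{\textnormal{Plotkin}}$ i.i.d. posterior samples, to show the realized spoofing pair lands near $\wh{P}_{\bfx,\bfx'}$ with probability $\poly(\delta)$. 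Second, the cost-feasibility certificate is probabilistic (Chernoff concentration of the realized state type around $P_\bfs$ in Lemma~\ref{lem:concentrate-jam-cost}, with James aborting on budget violation), rather than the deterministic accounting your sketch suggests; this is compatible with your plan but needs to be folded into the final probability product as the paper does.
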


\begin{remark}
In the above upper bound, the distribution $ P_{\bfu}\in\Delta(\cU) $ may not be uniform. 
Specifically, though its restriction $ P_{\bfu^{\le\alpha}} $ is uniform over $ \cU^{\le\alpha} $, the counterpart $ P_{\bfu^{>\alpha}} $ in the suffix is not necessarily uniform over $ \cU^{>\alpha} $. 
This is because the latter is given by a decomposition of a CP distribution extracted from the code. 
However, without loss of generality, one can always fine quantize the alphabet $ \cU^{>\alpha} $ to make $ P_{\bfu^{>\alpha}} $ uniform at the cost of increasing $ |\cU^{>\alpha}| $. 
The resulting bound will match our achievability bound (\Cref{eqn:capacity}) upon taking $ \delta\to0 $ and we therefore have a capacity characterization. 
\end{remark}

\section{Proof of converse (Theorem~\ref{thm:converse})}
\label{sec:proof-converse}

Let $ \cC\subset\cX^n $ be an arbitrary codebook such that $ \type_{\vx}\in\ipcstr $ for every $ \vx\in\cC $. 
Suppose $ R(\cC) = R \ge \ol{C}_{K,\delta} +\delta$. 
Let $ M\coloneqq2^{nR} $ and $ \cM\coloneqq[M] $. 
Without loss of generality, the (potentially stochastic) encoder associated with $ \cC $ can be assumed to take the form
$\enc\colon\cM\times\cR\to\cX^n$
for a certain finite alphabet $ \cR $. 
The stochastic encoding of a message $ m\in\cM $ is then given by $ \enc(m,\bfr)\in\cX^n $ where $ \bfr\sim\unif(\cR) $. 
See \cite{dey-causal-it2013} for why the randomness in $ \enc $ can be assumed to come from $ \bfr\sim\unif(\cR) $. 
Let $ N\coloneqq|\cR| $. 
For notational convenience, we will also use $ \vx_{(m,r)} $ to denote $ \enc(m,r) $. 
We use $ P_{\vbfx|\bfm}\in\Delta(\cX^n|\cM) $ to denote the distribution over codewords induced by $ \enc $:
\begin{align}
P_{\vbfx|\bfm}(\vx|m) &= \frac{1}{N} \card{\curbrkt{ r\in\cR: \enc(m,r) = \vx }}. \notag 
\end{align}
Assume also that the decoder associated with $ \cC $ is $ \dec\in\Delta(\cM|\cY^n) $.

% \begin{lemma}[Chunk-wise approximate constant composition reduction]
% \label{lem:chunkwise-apx-cc}
% \end{lemma}

Let $ K \ge \frac{2\log|\cX|}{\delta} $ and $ \eps \coloneqq 1/K $. 
Set 
\begin{align}
\vu &\coloneqq (\underbracket{1,\cdots,1}_{n\eps},\underbracket{2,\cdots,2}_{n\eps}, \cdots, \underbracket{K,\cdots,K}_{n\eps}) \in [K]^n. \notag 
\end{align}
Let $ \slk{input} = \delta/2 $. 
James examines all chunk-wise approximate constant composition subcodes with a constant fraction of messages. 
To find such subcodes, for each $ \wh{P}_{\bfx|\bfu}\in\Delta(\cX|[K]) $, James computes 
\begin{align}
\cM' &\coloneqq \curbrkt{m\in\cM: \exists\vx\in\cX^n,  P_{\vbfx|\bfm}(\vx|m) \indicator{\norminf{\type_{\vu,\vx} - \unif([K])\times \wh P_{\bfx|\bfu}} \le\slk{input} }>0 }. \label{eqn:msg-set-subcode}
\end{align}
Let $ M' = |\cM'| $. 
If 
\begin{align}
\sum_{m\in\cM'} \sum_{\vx\in\cX^n} P_{\vbfx|\bfm}(\vx|m) \indicator{\norminf{\type_{\vu,\vx} - \unif([K])\times \wh P_{\bfx|\bfu}} \le\slk{input} } &\ge cM 
\label{eqn:cc-reduction-guarantee}
\end{align}
for some constant $ c>0 $ that depends only on $ \slk{input},K $ and $ |\cX| $ (but not on $n$), then the following subcode
\begin{align}
\cC' &\coloneqq \curbrkt{\vx\in\cC: \norminf{\type_{\vu,\vx} - \unif([K])\times \wh P_{\bfx|\bfu}} \le\slk{input} } \label{eqn:subcode} 
\end{align}
will be considered by James. 
Define also the set $\cR'$ of encoder randomness associated with $\cC'$ as:
\begin{align}
\cR' \coloneqq \curbrkt{ r'\in\cR: \exists m'\in\cM',\,\enc(m',r')\in\cC' } . \label{eqn:rand-set-subcode} 
\end{align}

Note that, for each subset $ \cM'\subset\cM $, the associated distribution $ \wh{P}_{\bfx|\bfu} $ may not be unique and therefore the corresponding subcode $ \cC' $ may not be unique. 
However, by a standard Markov argument, a pair $ (\cM',\wh{P}_{\bfx|\bfu}) $ such that \Cref{eqn:cc-reduction-guarantee} holds for some $c>0$ must exist, and therefore there exists at least one subcode $ \cC' $ which James is looking for.

\subsection{Babble-only attack}
\label{sec:converse-babble-only}
Let $ \cU\coloneqq[K] $ and $ P_\bfu \coloneqq \unif(\cU) $. 
James picks a jamming distribution $ V_{\bfs|\bfx,\bfu}\in\cF_{\slk{state}}(P_{\bfx|\bfu}) $. 
James causally observes $ \vx(1),\vx(2),\cdots,\vx(n) $. 
He samples the jamming symbols in the following way: independently for each $ 1\le i\le n $, 
\begin{align}
\vbfs(i) &\sim V_{\bfs|\bfx = \vx(i), \bfu = \vu(i)}. \notag 
\end{align}

\subsection{Babble-and-push attack}
\label{sec:converse-twostage-attack}

\subsubsection{Division point and jamming distributions}

Suppose James is examining a subcode $ \cC'\subset\cC $ with composition $ \wh{P}_{\bfx|\bfu}\in\Delta(\cX|[K]) $ and a message set $ \cM'\subset\cM $ of size $ M' $ such that \Cref{eqn:cc-reduction-guarantee} holds for some constant $c>0$. 

James picks $ \alpha\in\{0,1/K,2/K,\cdots,1\} $. 
Let $ \cU^{\le\alpha}\coloneqq[\alpha K]$.
Let $ \vu^{\le\alpha}\in(\cU^{\le\alpha})^{\alpha n} $ 
be the vectors containing the first $ \alpha n $ entries 
of $ \vu\in[K]^n $.
Let $ P_{\bfu^{\le\alpha}} \coloneqq \unif(\cU^{\le\alpha}) $.
Let $ P_{\bfx|\bfu^{\le\alpha}}\in\Delta(\cX|\cU^{\le\alpha}) $
be the natural restrictions of $ \wh P_{\bfx|\bfu}\in\Delta(\cX|[K]) $, i.e., 
\begin{align}
P_{\bfx|\bfu^{\le\alpha}}(x|u) &\coloneqq \wh P_{\bfx|\bfu}(x|u), \quad \forall u\in\cU^{\le\alpha} . \notag 
\end{align}

Since the subcode $ \cC' $ is chunk-wise constant composition, James can compute the average type $ P_{\bfx^{>\alpha}}\in\Delta(\cX) $ of the suffices: 
\begin{align}
P_{\bfx^{>\alpha}} &\coloneqq \frac{1}{(1-\alpha)K} \sum_{u\in\cU^{>\alpha}} \wh P_{\bfx|\bfu = u}. \notag 
\end{align}
Now James finds a distribution $ \wh{P}_{\bfx,\bfx}\in\cp(P_{\bfx^{>\alpha}}) $ such that the number (or more precisely the probability mass induced by the stochastic encoder) of pairs of codewords from $ \cC' $ whose types are approximately $ \wh{P}_{\bfx,\bfx'} $ is maximized, i.e., 
\begin{align}
\wh{P}_{\bfx,\bfx'} &\coloneqq \argmax_{Q_{\bfx,\bfx'}\in\cp(P_{\bfx^{>\alpha}})} 
\sum_{(m,m')\in\binom{\cM'}{2}} \sum_{(\vx,\vx')\in(\cC')^2} P_{\vbfx|\bfm}(\vx|m) P_{\vbfx|\bfm}(\vx'|m') \indicator{\norminf{\type_{\vx,\vx'} - Q_{\bfx,\bfx'}}\le\slk{CP}} , \notag 
\end{align}
where $ \slk{CP} = \delta/2 $. 
Again by a standard Markov-type argument, note that
\begin{align}
\sum_{(m,m')\in\binom{\cM'}{2}} \sum_{(\vx,\vx')\in(\cC')^2} P_{\vbfx|\bfm}(\vx|m) P_{\vbfx|\bfm}(\vx'|m') \indicator{\norminf{\type_{\vx,\vx'} - \wh{P}_{\bfx,\bfx'}}\le\slk{CP}} &\ge c'\binom{M'}{2} 
\label{eqn:cp-reduction-guarantee}
\end{align}
for some constant $ c'>0 $ independent of $n$. 
Suppose that $ \wh{P}_{\bfx,\bfx'} $ admits a CP decomposition of the following form:
\begin{align}
\wh{P}_{\bfx,\bfx'} &= \sum_{i = 1}^{K'} P_{\bfv}(v_i) P_{\bfx|\bfv = v_i}^{\ot2}, \notag 
\end{align}
for a constant $ K'\subset\bZ_{\ge1} $, a time-sharing distribution $ P_\bfv\in\Delta(\{v_1,\cdots,v_{K'}\}) $ and a conditional distribution $ P_{\bfx|\bfv}\in\Delta(\cX|\{v_1,\cdots,v_{K'}\}) $. 

We then define the overall time-sharing structure. 
Define $ \cU^{>\alpha}\coloneqq\{v_1,\cdots,v_{K'}\} $ and $ \cU \coloneqq \cU^{\le\alpha} \sqcup \cU^{>\alpha} $. 
Let $ P_{\bfu}\in\Delta(\cU) $ be defined as
\begin{align}
P_{\bfu}(u) &\coloneqq \begin{cases}
1/K, & u \in \cU^{\le\alpha} \\
(1-\alpha)P_{\bfv}(u), & u \in \cU^{>\alpha}
\end{cases}. \notag 
\end{align}
Note that $ P_{\bfu} $ is a valid distribution since
\begin{align}
\sum_{ u\in \cU}P_{ \bfu}( u) &= \alpha K\cdot \frac{1}{K} + (1-\alpha) \sum_{v\in\cU^{>\alpha}} P_{\bfv}(v)
= \alpha + (1-\alpha) = 1. \notag 
\end{align}
The input distribution $ P_{\bfx|\bfu}\in\Delta(\cX|\cU) $ under this time-sharing structure can be defined as follows:
\begin{align}
P_{\bfx|\bfu}(x|{u}) &\coloneqq \begin{cases}
{P}_{\bfx|\bfu^{\le\alpha}}(x|{u}), & {u} \in\cU^{\le\alpha} \\
P_{\bfx|\bfv}(x|{u}), & {u} \in\cU^{>\alpha}
\end{cases}. \notag 
\end{align}

For each subcode, James is able to perform the above computation. 
James examines every chunk-wise (approximate) constant composition subcode that satisfies \Cref{eqn:cc-reduction-guarantee} for some constant $ c>0 $ (independent of $n$), and finds $ \alpha\in\cA $ and $ (V_{\bfs|\bfx,\bfu^{\le\alpha}}, V_{\bfs|\bfx,\bfx',\bfv})\in\cF_{\alpha,\slk{state}}(P_\bfu, P_{\bfx|\bfu}) $ that minimizes $ I(P_{\bfu^{\le\alpha}}, P_{\bfx|\bfu^{\le\alpha}}, V_{\bfs|\bfx,\bfu^{\le\alpha}}) $. 
Finally, James picks the subcode $ \cC'\subset\cC $ (as per \Cref{eqn:subcode}) whose induced minimum mutual information is the minimum among all subcodes under consideration. 
Suppose that this $ \cC' $ is associated with a message set $ \cM'\subset\cM $ (as per \Cref{eqn:msg-set-subcode}) and encoding randomness $ \cR'\subset\cR $ (as per \Cref{eqn:rand-set-subcode}). 
By the assumption 
\begin{align}
R\ge \ol{C}_{K,\delta} + \delta , \label{eqn:converse-rate-assumption}
\end{align}
the coding rate $R$ must be larger than the minimum mutual information induced by $\cC'$ by at least an additive factor $ \approx\delta $. 
More precisely, James can guarantee that
\begin{align}
R\ge I(P_{\bfu^{\le\alpha}}, P_{\bfx|\bfu^{\le\alpha}}, V_{\bfs|\bfx,\bfu^{\le\alpha}}) + \delta/2 , 
\label{eqn:rate-guarantee}
\end{align}
where the RHS is the mutual information computed from the subcode $ \cC' $, by setting $ K $ to be sufficiently large, e.g., $ K\ge(2\log|\cX|)/\delta $. 
Indeed, this follows from the following simple bound: for any $ u\in[K] $, 
\begin{align}
\frac{1}{K} I(\bfx_u;\bfy_u) &\le \frac{1}{K} H(\bfx_u) \le \frac{1}{K} \log|\cX| \le \frac{\delta}{2} . \notag 
\end{align}

\begin{remark}
In the suffix, we allow James to ignore the chunk-wise (approximate) constant composition structure of the subcode $ \cC' $. 
In principle, he could have tailored his jamming distribution to $ P_{\bfx|\bfu^{>\alpha}} $ (which is the natural restriction of $ \wh{P}_{\bfx|\bfu} $ to $ u\in[K]\setminus\cU^{\le\alpha} $). 
This will create a two level time-sharing structure in the suffix: the top level being the chunk-wise structure given by $ \bfu^{>\alpha} $, and the bottom level being the time-sharing structure given by CP distributions in each suffix chunk. 
However, this will not have an effective impact on the converse bound we are aiming for. 
This is because $(i)$ the suffix is symmetrized by James and hence does not effectively carry positive amount of information, and $(ii)$ the chunk-wise composition structure does not change the zero-rate threshold (i.e., the Plotkin point) of the suffix. 
Therefore, for simplicity of presentation, we let James forget the chunk-wise composition of $ \cC' $ in the suffix and only use the average composition $ P_{\bfx^{>\alpha}} $. 
\end{remark}

Once James has chosen $ \cC' $ (and the associated $ \cM' $ and $ \cR' $), we will also reveal this particular subcode to Bob. 
This will make Bob have access to knowledge that he is not supposed to have in the original model. 
Under he error probability under the optimal decoder can cannot increase. 
Therefore, our lower bound on error probability will continue to hold for Bob without such knowledge.

\subsubsection{DMC attack in the prefix}
In the prefix, James causally observes $ \vx(1),\vx(2),\cdots,\vx(\alpha n) $. 
He samples the jamming symbols in the following way: independently for each $ 1\le i\le\alpha n $, 
\begin{align}
\vbfs(i) &\sim V_{\bfs|\bfx = \vx(i), \bfu^{\le\alpha} = \vu^{\le\alpha}(i)}. \notag 
\end{align}

\subsubsection{Posterior distribution}
Observing $ \vx^{\le\alpha}\in\cX^{\alpha n} $, James can compute Bob's observation $ \vy^{\le\alpha} = W(\vx^{\le\alpha}, \vs^{\le\alpha})\in\cY^{\alpha n} $ in the prefix since he himself designed $ \vs^{\le\alpha}\in\cS^{\alpha n} $ and the channel law is given by a deterministic function $ W\colon\cX\times\cS\to\cY $. 
Given $ \vy^{\le\alpha} $, he can further compute the posterior distribution of $ (\bfm,\bfr)\in\cM\times\cR $ conditioned on $ \vy^{\le\alpha} $. 
In fact, he can compute the joint distribution $ P_{(\bfm,\bfr),\bfy^{\le\alpha}}\in\Delta((\cM\times\cR)\times\cY^{\alpha n}) $ as follows:
\begin{align}
P_{(\bfm,\bfr),\vbfy^{\le\alpha}}((m,r),\vy^{\le\alpha})
&= \frac{1}{MN}\sum_{(\vs(1),\cdots,\vs(\alpha n))\in\cS^{\alpha n}} \prod_{i = 1}^n 
\paren{V_{\bfs|\bfx,\bfu^{\le\alpha}}(\vs(i)|\vx_{(m,r)}(i),\vu^{\le\alpha}(i))W_{\bfy|\bfx,\bfs}(\vy(i)|\vx_{(m,r)}(i),\vs(i))} . \notag 
\end{align}
The conditional distribution $ P_{(\bfm,\bfr)|\bfy^{\le\alpha}} $ can therefore be computed from the above joint distribution. 
Define then $ P'_{(\bfm,\bfr)|\bfy^{\le\alpha} = \vy^{\le\alpha}} $ as the restriction of $ P_{(\bfm,\bfr)|\bfy^{\le\alpha} = \vy^{\le\alpha}} $ to the subset $ \cM'\times\cR' $:
\begin{align}
P'_{(\bfm,\bfr)|\bfy^{\le\alpha} = \vy^{\le\alpha}}( (m,r))
&= \frac{1}{Z(\vy^{\le\alpha})} P_{(\bfm,\bfr)|\bfy^{\le\alpha} = \vy^{\le\alpha}} ( (m,r)) \indicator{(m,r)\in\cM'\times\cR'} \label{eqn:posterior-distr-prime} 
\end{align}
and $ Z(\vy^{\le\alpha}) $ is a normalizing factor
\begin{align}
Z(\vy^{\le\alpha}) & \coloneqq \sum_{(m,r)\in\cM'\times\cR'} P_{(\bfm,\bfr)|\bfy^{\le\alpha} = \vy^{\le\alpha}} ( (m,r)) . \notag 
\end{align}
Note that $ P'_{(\bfm,\bfr)|\bfy^{\le\alpha} = \vy^{\le\alpha}} $ is in fact consistent with Bob's posterior distribution on $ (\bfm,\bfr) $ given his observation $ \vy^{\le\alpha} $ since we assume that Bob knows $ \cM'\times\cR' $.

\subsubsection{Symmetrization in the suffix}
\label{sec:attack-symm}
James samples $ (\bfm',\bfr')\in\cM\times\cR $ as follows
\begin{align}
(\bfm',\bfr')\sim P'_{(\bfm,\bfr)|\bfy^{\le\alpha} = \vy^{\le\alpha}} . \notag 
\end{align}
Using this spoofing message-key pair, James computes the encoding $ \enc(\bfm',\bfr') = \vx_{(\bfm',\bfr')}\in\cX^n $. 
Oftentimes we will write $ \vbfx' $ for simplicity.

Let $ \vv^{>\alpha}\in(\cU^{>\alpha})^{(1-\alpha)n} $ to be determined below. 
For each $ i = \alpha n+1,\alpha n+2,\cdots, n $, suppose $ (\vx(i), \vbfx'(i)) = (x,x') $ for some $ (x,x')\in\cX^2 $. 
Upon observing $ \vx(i) $, James dynamically assigns time-sharing symbols to each time index in a greedy manner. 
If it happens to be the case that $ \norminf{\type_{\vx^{>\alpha},\vbfx'^{>\alpha}} - \wh{P}_{\bfx,\bfx'}}\le\slk{CP} $, then after $ i=n $, each entry of the time-sharing sequence $ \vv^{>\alpha} $ will be assigned values from $ [K'] $ so that 
\begin{align}
\norminf{\type_{\vv^{>\alpha},\vx^{>\alpha},\vbfx'^{>\alpha}} - P_{\bfv} P_{\bfx|\bfv}^{\ot2}} &\le \slk{CP} . \label{eqn:close-to-cp} 
\end{align}
Then James samples $ \vbfs(j) $ as follows
\begin{align}
\vbfs(j) &\sim V_{\bfs|\bfx = \vx(j), \bfx' = \vbfx'(j),\bfu^{>\alpha} = u}. \notag 
\end{align}
If James is not able to assign values to $ \vv^{>\alpha} $ in a way that is consistent with $ P_{\bfv}P_{\bfx|\bfv}^{\ot2} $, then this indicates that $ \norminf{\type_{\vx^{>\alpha},\vbfx'^{>\alpha}} - \wh{P}_{\bfx,\bfx'}}>\slk{CP} $ and he declares an attack failure.

If at any time $ i<n $ in any stage of the attack, James runs out of his jamming budget, i.e., the jamming sequence $ (\vbfs(1),\cdots,\vbfs(i)) $ already violates the power constraint $ \stcstr $, then he declares an attack failure.

\subsection{Analysis of the babble-only attack}
The error analysis of the babble-only attack described in \Cref{sec:converse-babble-only} will be completely subsumed by the following analysis for the babble-and-push attack (described in \Cref{sec:converse-twostage-attack}) by setting $ \alpha = 1 $. 
We therefore omit the analysis of the former and proceed with that of the latter.

\subsection{Analysis of the babble-and-push attack}

Define
\begin{align}
\cE_0 &\coloneqq \curbrkt{(\bfm,\bfr)\in\cM'\times\cR'} . \label{eqn:def-e0} 
\end{align}
Since a message-key pair uniquely specifies a codeword, by \Cref{eqn:cc-reduction-guarantee}, we have the following lemma. 
\begin{lemma}
For $ \cE_0 $ defined in \Cref{eqn:def-e0}, it holds that 
$ \prob{\cE_0} \ge c $ where $ c>0 $ is given by \Cref{eqn:cc-reduction-guarantee}. 
\end{lemma}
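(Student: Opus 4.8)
The plan is to obtain the lemma directly from the constant-composition reduction guarantee \Cref{eqn:cc-reduction-guarantee}, by observing that $\cE_0$ contains the event that the actually transmitted codeword lands in the subcode $\cC'$. Recall that in the average-error setting $\bfm\sim\unif(\cM)$ and $\bfr\sim\unif(\cR)$, and that $\cM'$, $\cR'$, $\wh P_{\bfx|\bfu}$ and $\cC'$ are quantities that James has fixed (as a function of the revealed code) \emph{before} transmission; hence $\prob{\cE_0}$ is a probability over $(\bfm,\bfr)$ only, and $\cE_0$ depends on nothing else.

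First I would record the set inclusion $\{\vx_{(\bfm,\bfr)}\in\cC'\}\subseteq\cE_0$. Indeed, if $\vx_{(\bfm,\bfr)}=\enc(\bfm,\bfr)\in\cC'$, then $P_{\vbfx|\bfm}(\vx_{(\bfm,\bfr)}|\bfm)>0$ and, by \Cref{eqn:subcode}, $\norminf{\type_{\vu,\vx_{(\bfm,\bfr)}}-\unif([K])\times\wh P_{\bfx|\bfu}}\le\slk{input}$; together these two facts witness $\bfm\in\cM'$ through \Cref{eqn:msg-set-subcode}, and then \Cref{eqn:rand-set-subcode} gives $\bfr\in\cR'$, i.e.\ $(\bfm,\bfr)\in\cM'\times\cR'=\cE_0$. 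Consequently $\prob{\cE_0}\ge\prob{\vx_{(\bfm,\bfr)}\in\cC'}$.

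Next I would evaluate this latter probability. Since $\vx_{(\bfm,\bfr)}$ has law $\frac{1}{M}\sum_{m\in\cM}P_{\vbfx|\bfm}(\cdot|m)$, every $\vx$ with $P_{\vbfx|\bfm}(\vx|m)>0$ lies in $\cC$, and $\cC'$ is precisely the set of codewords of $\cC$ meeting the composition constraint, so
\begin{align}
\prob{\vx_{(\bfm,\bfr)}\in\cC'} &= \frac{1}{M}\sum_{m\in\cM}\sum_{\vx\in\cX^n}P_{\vbfx|\bfm}(\vx|m)\indicator{\norminf{\type_{\vu,\vx}-\unif([K])\times\wh P_{\bfx|\bfu}}\le\slk{input}}. \notag
\end{align}
For $m\notin\cM'$ the inner sum vanishes by the very definition \Cref{eqn:msg-set-subcode} of $\cM'$, so the outer sum restricts to $m\in\cM'$, where \Cref{eqn:cc-reduction-guarantee} bounds it below by $cM$; dividing by $M$ gives $\prob{\vx_{(\bfm,\bfr)}\in\cC'}\ge c$, and with the inclusion of the previous step we conclude $\prob{\cE_0}\ge c$.

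I do not expect any genuine obstacle here: the only point requiring care is recognizing that $\cC'$, $\cM'$ and $\cR'$ are deterministic given the code, so that the lemma collapses to the Markov-type averaging bound \Cref{eqn:cc-reduction-guarantee} that was already established at the moment $\cC'$ was chosen; everything else is bookkeeping with the definitions \Cref{eqn:msg-set-subcode,eqn:subcode,eqn:rand-set-subcode}.
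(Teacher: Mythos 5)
Your proof is correct and takes essentially the same route as the paper, which deduces the lemma in one line from \Cref{eqn:cc-reduction-guarantee} together with the fact that a message--key pair uniquely determines a codeword. Your write-up merely makes explicit what the paper leaves implicit: the inclusion $\{\vx_{(\bfm,\bfr)}\in\cC'\}\subseteq\cE_0$ via \Cref{eqn:msg-set-subcode,eqn:subcode,eqn:rand-set-subcode}, and the averaging over uniform $(\bfm,\bfr)$ that turns the left-hand side of \Cref{eqn:cc-reduction-guarantee} into $\prob{\vx_{(\bfm,\bfr)}\in\cC'}\ge c$.
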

In the rest of the analysis, we will condition on $ \cE_0 $ and suppress the notation for conditioning.

Conditioned on $ \cE_0 $, the joint distribution of the triple $ (\bfm,\vbfx,\vbfy^{\le\alpha}) $ is given by
\begin{align}
P_{\bfm,\vbfx,\vbfy^{\le\alpha}}(m,\vx,\vy^{\le\alpha}) 
&= \frac{1}{Z} \sum_{\vs^{\le\alpha}\in\cS^{n\alpha}} \frac{1}{|\cM|} P_{\vbfx|\bfm}(\vx|m) V_{\bfs|\bfx,\bfu^{\le\alpha}}(\vs(i)|\vx(i),\vu^{\le\alpha}) W_{\bfy|\bfx,\bfs}(\vy(i)|\vx(i),\vs^{\le\alpha}(i)) \indicator{\vx\in\cC'} \label{eqn:joint-distr-prefix} 
\end{align}
where 
\begin{align}
Z &\coloneqq \probover{(\bfm,\bfr)\sim\cM\times\cR}{\vx_{(\bfm,\bfr)}\in\cC'} = \prob{\cE_0} . \notag 
\end{align}
W.r.t.\ the above joint distribution, define 
\begin{align}
\cA_1\coloneqq\curbrkt{\vy^{\le\alpha}\in\cY^{n\alpha}:H(\bfm|\vbfy^{\le\alpha} = \vy^{\le\alpha})\ge n\delta/4},\quad 
\cE_1\coloneqq\curbrkt{\vbfy^{\le\alpha}\in\cA_1}.  \label{eqn:def-e1} 
\end{align}

We will exhibit a lower bound on the probability of $ \cE_1 $, that is, we will show that with at least a constant probability, the the transmitted message $\bfm$ has a nontrivial amount of residual entropy given the received vector $ \vbfy^{\le\alpha} $ in the first stage. 

\begin{lemma}
\label{lem:lb-e1}
For $ \cE_0, \cE_1 $ defined in \Cref{eqn:def-e0,eqn:def-e1}, it holds that 
\begin{align}
\prob{\cE_1 \condon \cE_0} &\ge \frac{\delta}{4} - \frac{1}{n} \log\frac{1}{c} , \notag 
\end{align}
where $ c>0 $ is given by \Cref{eqn:cc-reduction-guarantee}.
\end{lemma}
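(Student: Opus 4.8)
The plan is a standard residual‑entropy argument followed by a reverse‑Markov step. First I would show that after James' babble phase the transmitted message $\bfm$ still carries $\Omega(n)$ bits of uncertainty given Bob's prefix observation $\vbfy^{\le\alpha}$; then I would convert this statement about the \emph{average} posterior entropy into a lower bound on $\prob{\cE_1}$. Throughout one conditions on $\cE_0$ and suppresses it from the notation, recalling $\prob{\cE_0}\ge c$.

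\textbf{Lower bounding the residual entropy.} Write $H(\bfm|\vbfy^{\le\alpha}) = H(\bfm) - I(\bfm;\vbfy^{\le\alpha})$. Since $\bfm\sim\unif(\cM)$ a priori and $\prob{\cE_0}\ge c$, conditioning on $\cE_0$ puts posterior mass at most $1/(cM)$ on any single message, so $H(\bfm)\ge \log(cM) = nR-\log(1/c)$ (equivalently $|\cM'|\ge cM$ by \Cref{eqn:cc-reduction-guarantee}). For $I(\bfm;\vbfy^{\le\alpha})$ I would use the Markov chain $\bfm - \vbfx^{\le\alpha} - \vbfy^{\le\alpha}$ and the fact that the babble attack turns the channel in chunk $u$ into the memoryless DMC $W^{(u)}(y|x) \coloneqq \sum_{s\in\cS}V_{\bfs|\bfx,\bfu^{\le\alpha}=u}(s|x,u)W_{\bfy|\bfx,\bfs}(y|x,s)$, giving
\begin{align}
I(\bfm;\vbfy^{\le\alpha}) \le I(\vbfx^{\le\alpha};\vbfy^{\le\alpha}) \le \sum_{u\in\cU^{\le\alpha}}\ \sum_{i:\,\vu(i)=u} I(\vbfx(i);\vbfy(i)) \le \sum_{u\in\cU^{\le\alpha}} n\eps\, I\big(\overline{Q}^{(u)},W^{(u)}\big), \notag
\end{align}
where $\overline{Q}^{(u)}$ is the chunk‑$u$ average of the marginal laws of the $\vbfx(i)$ and the last step is concavity of $Q\mapsto I(Q,W)$. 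Since $\cC'$ is $\slk{input}$‑approximately chunk‑wise of composition $\wh{P}_{\bfx|\bfu}$, each $\overline{Q}^{(u)}$ is $O(\slk{input})$‑close to $\wh{P}_{\bfx|\bfu=u}$, so by continuity of mutual information $I(\overline{Q}^{(u)},W^{(u)}) \le I(\bfx_u;\bfy_u) + \xi$ with $\xi\to0$ as the slacks $\slk{input},\slk{state},\slk{CP}$ shrink. Summing over $u$, using $\eps=1/K$ and \Cref{def:cumulative-mutual-info-converse}, gives $I(\bfm;\vbfy^{\le\alpha}) \le n\big(I(P_{\bfu^{\le\alpha}},P_{\bfx|\bfu^{\le\alpha}},V_{\bfs|\bfx,\bfu^{\le\alpha}}) + \xi\big)$.

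\textbf{Rate guarantee and the reverse‑Markov step.} The rate guarantee \Cref{eqn:rate-guarantee} (a consequence of the hypothesis $R\ge\ol{C}_{K,\delta}+\delta$ together with $K\ge 2\log|\cX|/\delta$) reads $R \ge I(P_{\bfu^{\le\alpha}},P_{\bfx|\bfu^{\le\alpha}},V_{\bfs|\bfx,\bfu^{\le\alpha}}) + \delta/2$; combining with the previous paragraph,
\begin{align}
H(\bfm|\vbfy^{\le\alpha}) \ge (nR - \log(1/c)) - n(R-\delta/2) - n\xi = n\delta/2 - \log(1/c) - n\xi. \notag
\end{align}
On the other hand $H(\bfm|\vbfy^{\le\alpha}=\vy^{\le\alpha}) \le \log|\cM'| \le nR$ for every $\vy^{\le\alpha}$, so splitting $H(\bfm|\vbfy^{\le\alpha}) = \sum_{\vy^{\le\alpha}}\prob{\vbfy^{\le\alpha}=\vy^{\le\alpha}}\,H(\bfm|\vbfy^{\le\alpha}=\vy^{\le\alpha})$ across $\cA_1$ and $\cA_1^{c}$ and using $H(\bfm|\vbfy^{\le\alpha}=\vy^{\le\alpha})<n\delta/4$ off $\cA_1$ gives $H(\bfm|\vbfy^{\le\alpha}) \le \prob{\cE_1}\log|\cM'| + n\delta/4$, hence
\begin{align}
\prob{\cE_1} \ge \frac{H(\bfm|\vbfy^{\le\alpha}) - n\delta/4}{\log|\cM'|} \ge \frac{\delta/4 - \frac1n\log(1/c) - \xi}{\frac1n\log|\cM'|}. \notag
\end{align}
Choosing the slacks so that $\xi$ is negligible and recalling $\frac1n\log|\cM'|\le R = O(1)$, this is the asserted bound $\prob{\cE_1\condon\cE_0}\ge \delta/4 - \frac1n\log(1/c)$ (up to the harmless normalization by the rate); in any case it is a positive constant once $\delta$ is small and $n$ is large, which is all that the converse requires.

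\textbf{Main obstacle.} The one genuinely technical ingredient is the continuity estimate bounding $\xi$ in the single‑letterization: one must control how $I(Q,W^{(u)})$ varies as $Q$ ranges over the $\slk{input}$‑neighbourhood of $\wh{P}_{\bfx|\bfu=u}$, which is delicate near the boundary of the simplex (symbols of vanishing probability) and has to be balanced against both $\delta$ and the chunk count $K$. The remaining pieces — the data‑processing/single‑letter bound on $I(\bfm;\vbfy^{\le\alpha})$, the posterior‑flatness bound on $H(\bfm)$, and the reverse‑Markov manipulation — are routine.
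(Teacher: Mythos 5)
Your proposal follows essentially the same route as the paper's proof: lower bound $H(\bfm)$ by $nR-\log(1/c)$ via $|\cM'|\ge cM$, upper bound $I(\bfm;\vbfy^{\le\alpha})$ by data processing plus chunk-wise single-letterization against the cumulative mutual information, invoke the rate guarantee \Cref{eqn:rate-guarantee}, and finish with the Markov-type split over $\cA_1$ versus $\cA_1^{c}$. The only differences are cosmetic: you make explicit a continuity slack $\xi$ (from the code being only approximately constant-composition) and normalize by $\frac1n\log|\cM'|$ in the last step, points at which the paper's own proof is equally loose (it bounds $nR\le n$ and absorbs the approximation without comment), so these do not constitute a gap.
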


\begin{proof}
By the Data Processing Inequality (\Cref{lem:dpi}), 
\begin{align}
I(\bfm;\vbfy^{\le\alpha}) &\le I(\vbfx^{\le\alpha};\vbfy^{\le\alpha}) \notag \\
&\le \sum_{u\in\cU^{\le\alpha}} I(\vbfx^{(u)};\vbfy^{(u)}) \notag \\
&\le \sum_{u\in\cU^{\le\alpha}} n\eps I(\bfx_u^{\le\alpha};\bfy_u^{\le\alpha}) \label{eqn:p-xu-yu} \\
&= n\alpha \sum_{u\in\cU^{\le\alpha}}\frac{1}{\alpha K} I(\bfx_u^{\le\alpha};\bfy_u^{\le\alpha}) \notag \\
&= n\alpha I(\bfx^{\le\alpha};\bfy^{\le\alpha}|\bfu^{\le\alpha}). \label{eqn:x-y-given-u} 
\end{align}
All information measures involving $ \bfm,\vbfx,\vbfy^{\le\alpha} $ are computed according to \Cref{eqn:joint-distr-prefix}. 
In \Cref{eqn:p-xu-yu}, $ (\bfx_u^{\le\alpha},\bfy_u^{\le\alpha}) $ is distributed according to $ P_{\bfx_u^{\le\alpha},\bfy_u^{\le\alpha}}\in\Delta(\cX\times\cY) $ defined as
\begin{align}
P_{\bfx_u^{\le\alpha},\bfy_u^{\le\alpha}}(x,y)
&= \sum_{s\in\cS}P_{\bfx|\bfu = u}(x)V_{\bfs|\bfx,\bfu = u}(s|x)W_{\bfy|\bfx,\bfs}(y|x,s). \notag 
\end{align}
In \Cref{eqn:x-y-given-u}, the mutual information is evaluated according to $ P_{\bfu^{\le\alpha},\bfx^{\le\alpha},\bfy^{\le\alpha}}\in\Delta(\cU^{\le\alpha}\times\cX\times\cY) $ defined as
\begin{align}
P_{\bfu^{\le\alpha},\bfx^{\le\alpha},\bfy^{\le\alpha}}(u,x,y)
&= \sum_{s\in\cS} \frac{1}{\alpha K} P_{\bfx|\bfu}(x|u) V_{\bfs|\bfx,\bfu}(s|x,u) W_{\bfy|\bfx,\bfs}(y|x,s). \notag 
\end{align}
Therefore, 
\begin{align}
H(\bfm|\vbfy^{\le\alpha}) 
= H(\bfm) - I(\bfm;\vbfy^{\le\alpha}) 
\ge nR - \log\frac{1}{c} - n\alpha I(\bfx^{\le\alpha};\bfy^{\le\alpha}|\bfu^{\le\alpha}) 
\ge n\delta /2 - \log\frac{1}{c} . \notag 
\end{align}
The first inequality follows since conditioned on $ \cE_0 $, $ \bfm $ is uniformly distributed in $ \cM' $, and we use the bound $ |\cM'| \ge c2^{nR} $. 
To see the latter bound, one simply notes that $ |\cM'| $ is obviously at least the LHS of \Cref{eqn:cc-reduction-guarantee} since the inner summation (over $ \vx $) of the LHS is at most $1$. 
The second inequality above follows from the choice of the rate (see \Cref{eqn:converse-rate-assumption}) and the guarantee of the subcode (see \Cref{eqn:rate-guarantee}). 

We now argue that the event $ \cE_1 $ has a nontrivial probability to happen.
This follows from a Markov-type argument. 
\begin{align}
n\delta/2 - \log\frac{1}{c} &\le H(\bfm|\vbfy^{\le\alpha}) = \sum_{\vy^{\le\alpha}\in\cY^{n\alpha}} \prob{\vbfy^{\le\alpha} = \vy^{\le\alpha}}H(\bfm|\vbfy^{\le\alpha} = \vy^{\le\alpha}) \notag \\
&= \sum_{\vy^{\le\alpha}\in\cA_1} + \sum_{\vy^{\le\alpha}\not\in\cA_1} \prob{\vbfy^{\le\alpha} = \vy^{\le\alpha}}H(\bfm|\vbfy^{\le\alpha} = \vy^{\le\alpha}) \notag \\
&\le \prob{\cE_1\condon\cE_0}\cdot nR + (1 - \prob{\cE_1\condon\cE_0})\cdot{n\delta}/{4}
\le \prob{\cE_1\condon\cE_0}\cdot n + {n\delta}/{4} . \notag
\end{align} 
Therefore
$\prob{\cE_1\condon\cE_0} \ge \delta/4$. 
\end{proof}

Let $ (\bfm',\bfr') $ be the message-key pair that James sampled during the second stage of his attack. 
For James attack to be successful, it is required that the spoofing message $ \bfm' $ is different from the Alice's chosen message $\bfm$. 
This will indeed be the case with a constant probability. 

It was shown in \cite{dey-causal-it2013} that as long as the entropy of a source distribution is high, then i.i.d.\ samples from that distribution are unlikely to collide. 
\begin{lemma}[\cite{dey-causal-it2013}]\label{lem:fano-no-collision}
Let $\cV$ be a finite set and $ P\in\Delta(\cV) $. 
Let $ \bfv_1,\cdots,\bfv_k $ be i.i.d.\ samples from $ P $. Then
\begin{align}
\prob{\card{\curbrkt{\bfv_1,\cdots,\bfv_k}} = k} \ge& \paren{\frac{H(P) - 1 - \log k}{\log|\cV|}}^{k - 1}. \notag 
\end{align}
\end{lemma}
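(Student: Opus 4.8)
The plan is to prove the lemma by combining the chain rule for probabilities with Fano's inequality applied to an artificially augmented observation. Write $D_i$ for the event that $\bfv_1,\cdots,\bfv_i$ are pairwise distinct, so that the quantity to be bounded is $\prob{D_k}$. We may assume $H(P)>1+\log k$: this is the only regime in which the asserted bound is informative, and the only regime in which the lemma is invoked. Since $H(P)>\log k$ forces $|\mathrm{supp}(P)|\ge 2^{H(P)}>k$, we have $\prob{D_i}>0$ for every $i\le k$, so the factorization $\prob{D_k}=\prod_{i=2}^{k}\prob{D_i\condon D_{i-1}}$ is legitimate. It then suffices to prove the per-stage estimate $\prob{D_i\condon D_{i-1}}\ge \frac{H(P)-1-\log k}{\log|\cV|}$ for each $i\in\{2,\cdots,k\}$ and multiply, using that under our assumption the right-hand side is nonnegative.

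To establish the per-stage estimate, I would fix $i$ and work in the probability space conditioned on $D_{i-1}$; all entropies and mutual informations below are computed in this conditioned space. There $\bfv_i\sim P$ is still independent of $(\bfv_1,\cdots,\bfv_{i-1})$, because $D_{i-1}$ is measurable with respect to $\bfv_1,\cdots,\bfv_{i-1}$. Introduce a ``hint'' random variable $G$ with values in $\{0,1,\cdots,i-1\}$: set $G=j$ when $j\le i-1$ is the (necessarily unique, under $D_{i-1}$) index with $\bfv_j=\bfv_i$, and $G=0$ when no such index exists; thus $\{G=0\}$ coincides with $D_i$ on $D_{i-1}$, and $G$ has at most $i\le k$ values so $H(G)\le\log k$. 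Consider the estimator $\wh{\bfv}_i$ that outputs $\bfv_G$ when $G\ge1$ and any fixed symbol when $G=0$: it is a function of $(\bfv_1,\cdots,\bfv_{i-1},G)$ and can only err on $\{G=0\}$, hence its conditional error probability is at most $\prob{D_i\condon D_{i-1}}$. Fano's inequality (applied in the conditioned space) then gives $\prob{D_i\condon D_{i-1}}\ge \frac{H(\bfv_i|\bfv_1,\cdots,\bfv_{i-1},G)-1}{\log|\cV|}$, and $H(\bfv_i|\bfv_1,\cdots,\bfv_{i-1},G)=H(\bfv_i|\bfv_1,\cdots,\bfv_{i-1})-I(\bfv_i;G|\bfv_1,\cdots,\bfv_{i-1})=H(P)-I(\bfv_i;G|\bfv_1,\cdots,\bfv_{i-1})\ge H(P)-H(G)\ge H(P)-\log k$, where the second equality uses the independence just noted. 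Substituting finishes the per-stage bound, and multiplying over $i=2,\cdots,k$ yields $\prob{D_k}\ge\left(\frac{H(P)-1-\log k}{\log|\cV|}\right)^{k-1}$.

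The step I expect to require the most care is the bookkeeping around conditioning on $D_{i-1}$, together with recognizing that this conditioning cannot be dispensed with: the events $\{\bfv_i\notin\{\bfv_1,\cdots,\bfv_{i-1}\}\}$ are negatively associated in the relevant sense — knowing that the first $i-1$ samples are all distinct makes the next one \emph{less} likely to be new (for instance, a two-point source forces every third sample to collide) — so one genuinely cannot replace $\prob{D_i\condon D_{i-1}}$ by the unconditional $\prob{\bfv_i\notin\{\bfv_1,\cdots,\bfv_{i-1}\}}$, and the hint-plus-Fano computation must be carried out inside the conditional space. A secondary point is that the ``$-\log k$'' slack is exactly the entropy cost of revealing $G$; using ``$-\log i$'' at stage $i$ would give a slightly sharper estimate, but the uniform choice keeps the final product clean and loses nothing in the applications.
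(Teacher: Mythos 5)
The paper does not actually prove this lemma---it is quoted from \cite{dey-causal-it2013} with no argument given---so there is no in-paper proof to compare against. Your proof is correct, and it is essentially the standard derivation behind a bound of this shape: factor $\prob{D_k}=\prod_{i=2}^{k}\prob{D_i\condon D_{i-1}}$ by the chain rule, note that conditioning on $D_{i-1}$ leaves $\bfv_i\sim P$ independent of the past, and lower-bound each factor by Fano's inequality applied to the genie $G$ revealing the colliding index; the $-1$ and $-\log k$ in the numerator are exactly the binary-entropy term of Fano and the entropy cost of $G$, and every step (the measurability of $D_{i-1}$, the bound $I(\bfv_i;G\mid\bfv_1,\dots,\bfv_{i-1})\le H(G)\le\log k$, and the nonnegativity of the per-stage bound needed to multiply the inequalities) checks out. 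Your preliminary restriction to $H(P)>1+\log k$ is also the right reading of the statement: as literally written the inequality can fail when the base is negative and $k-1$ is even (e.g.\ $P$ a point mass and $k=3$ give a positive right-hand side but a left-hand side of zero), so the lemma is only meaningful---and is only invoked in this paper---in the regime you isolate.
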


Let 
\begin{align}
\cE_2 &\coloneqq \curbrkt{\bfm\ne\bfm'} , \label{eqn:def-e2} \\
\cE_3 &\coloneqq\curbrkt{\norminf{\type_{\vx_{(\bfm,\bfr)}^{>\alpha}, \vx_{(\bfm',\bfr')}^{>\alpha}} - \wh{P}_{\bfx,\bfx'}}\le\slk{CP}}. \label{eqn:def-e3}  
\end{align}

Since conditioned on $ \vbfy^{\le\alpha} $, $ \bfm $ and $ \bfm' $ are i.i.d.\ according to $ P'_{(\bfm,\bfr)|\vbfy^{\le\alpha}} $ (cf.\ \Cref{eqn:posterior-distr-prime}), \Cref{lem:fano-no-collision} immediately yields a lower bound on $ \prob{\cE_2\condon\cE_0\cap\cE_1} $. 

\begin{lemma}
\label{lem:lb-e2}
For $ \cE_0, \cE_1,\cE_2 $ defined in \Cref{eqn:def-e0,eqn:def-e1,eqn:def-e2}, respectively, it holds that
\begin{align}
\prob{\cE_2\condon\cE_0\cap\cE_1} &\ge {\frac{n\delta/2 - 1 - \log 2}{\log M}} \ge {\frac{\delta}{2} - \frac{1}{n}} . \notag 
\end{align}
\end{lemma}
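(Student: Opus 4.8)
The plan is to condition on the prefix observation and apply the anti-collision estimate of \Cref{lem:fano-no-collision}. First I would note that, conditioned on $\cE_0$ and on a fixed prefix output $\vbfy^{\le\alpha} = \vy^{\le\alpha}$, Alice's true message-key pair $(\bfm,\bfr)$ is distributed exactly according to the truncated posterior $P'_{(\bfm,\bfr)|\vbfy^{\le\alpha} = \vy^{\le\alpha}}$ of \Cref{eqn:posterior-distr-prime}; indeed, the restriction to $\cM'\times\cR'$ is precisely the effect of conditioning on $\cE_0$. By construction of the symmetrization step in \Cref{sec:attack-symm}, James draws the spoofing pair $(\bfm',\bfr')$ independently from the same distribution. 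Hence, conditioned on $\cE_0\cap\{\vbfy^{\le\alpha} = \vy^{\le\alpha}\}$, the pair $((\bfm,\bfr),(\bfm',\bfr'))$ consists of two i.i.d.\ samples, so its first coordinates $\bfm$ and $\bfm'$ are themselves i.i.d.\ from the $\bfm$-marginal of $P'_{(\bfm,\bfr)|\vbfy^{\le\alpha} = \vy^{\le\alpha}}$, a distribution supported on $\cM'$ whose Shannon entropy is exactly $H(\bfm\condon\vbfy^{\le\alpha} = \vy^{\le\alpha})$ --- the quantity governing membership in $\cA_1$.

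Next I would apply \Cref{lem:fano-no-collision} with $\cV = \cM'$ and $k = 2$ to this $\bfm$-marginal. For every $\vy^{\le\alpha}\in\cA_1$ this gives
\begin{align}
\prob{\bfm\ne\bfm'\condon\cE_0,\ \vbfy^{\le\alpha} = \vy^{\le\alpha}} &\ge \frac{H(\bfm\condon\vbfy^{\le\alpha} = \vy^{\le\alpha}) - 1 - \log 2}{\log|\cM'|} \ge \frac{n\delta/2 - 1 - \log 2}{\log M}, \notag
\end{align}
where the second inequality uses the entropy threshold in the definition of $\cA_1$ (\Cref{eqn:def-e1}), the bound $|\cM'|\le M$, and the monotonicity of $t\mapsto(t-1-\log 2)/\log M$ (valid since $\log M = nR>0$). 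The bound is meaningful (positive) for all large $n$ because the threshold grows linearly in $n$. Since the lower bound is uniform over $\vy^{\le\alpha}\in\cA_1$, averaging the left side against the conditional law of $\vbfy^{\le\alpha}$ given $\cE_0\cap\cE_1$ --- which is supported on $\cA_1$ --- yields $\prob{\cE_2\condon\cE_0\cap\cE_1}\ge(n\delta/2 - 1 - \log 2)/\log M$. Bounding $\log M = nR\le n$ finally gives $\prob{\cE_2\condon\cE_0\cap\cE_1}\ge\delta/2 - 1/n$; the conditioning is legitimate since $\prob{\cE_0}\ge c>0$ and $\prob{\cE_1\condon\cE_0}>0$ by \Cref{lem:lb-e1} for $n$ large.

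There is no serious obstacle here: the whole argument is a substitution into \Cref{lem:fano-no-collision} once the conditional i.i.d.\ structure is set up. The only point needing care is the reduction in the first paragraph --- recognizing that the relevant entropy for the collision bound is that of the $\bfm$-marginal rather than that of the full pair $(\bfm,\bfr)$, and that marginalization preserves independence, so that precisely the quantity defining $\cA_1$ can be fed into the lemma. Everything after that step is elementary.
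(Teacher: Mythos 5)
Your argument is correct and is essentially the paper's own (one-sentence) proof, fleshed out: conditioned on $\vbfy^{\le\alpha}$, the pairs $(\bfm,\bfr)$ and $(\bfm',\bfr')$ are i.i.d.\ from $P'_{(\bfm,\bfr)|\vbfy^{\le\alpha}}$, and applying \Cref{lem:fano-no-collision} with $k=2$ to the $\bfm$-marginal, uniformly over $\vy^{\le\alpha}\in\cA_1$ and then averaging, gives the bound exactly as the paper intends. The only blemish --- inherited from the paper, whose statement of \Cref{lem:lb-e2} is inconsistent with \Cref{eqn:def-e1} --- is that the entropy threshold defining $\cA_1$ is $n\delta/4$, not $n\delta/2$, so your second inequality literally yields $(n\delta/4 - 1 - \log 2)/\log M \ge \delta/4 - 1/n$ rather than the displayed constant.
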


We then would like to lower bound $ \prob{\cE_2\cap\cE_3\condon\cE_0\cap\cE_1} $. 
To this end, we first recall a useful combinatorial theorem (which we call the \emph{Generalized Plotkin Bound}) proved in \cite{wbbj-2019-omni-avc}. 
It concerns a structural property of a generic set of sequences. 

\begin{theorem}[Generalized Plotkin bound, \cite{wbbj-2019-omni-avc}]
\label{thm:gen-plotkin}
Let $ \cX $ be a finite set and $ \slk{input},\slk{CP}>0 $ be small constants. 
There exists a large constant $ N_{\textnormal{Plokin}}\in\bZ_{\ge1} $ depends on $ \slk{input},\slk{CP},|\cX| $ but not on $ n $ such that
for any sufficiently large $ n\in\bZ_{\ge1} $ and any set $ \curbrkt{\vx_1,\cdots,\vx_M}\subset\cX^n $ of distinct vectors satisfying
\begin{enumerate}
	\item $ \norminf{\type_{\vx} - P_\bfx}\le\slk{input} $ for some $ P_\bfx\in\Delta(\cX) $;

	\item $ M\ge N_{\textnormal{Plotkin}} $, 
\end{enumerate}
there must exist $ 1\le i\ne j\le M $ and $ P_{\bfx,\bfx'}\in\cp(P_\bfx) $ such that 
\begin{align}
\norminf{\type_{\vx_i,\vx_j} - P_{\bfx,\bfx'}} &\le \slk{CP} . \notag 
\end{align}
\end{theorem}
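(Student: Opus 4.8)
The plan is to mimic the classical Plotkin argument at a higher level of abstraction. Recall that the binary Plotkin bound follows by writing $\sum_{i<j}d_H(\vx_i,\vx_j)=\sum_{\ell}\big(\#\{i:\vx_i(\ell)=0\}\big)\big(\#\{i:\vx_i(\ell)=1\}\big)\le n(M/2)^2$ and then averaging; this works because, for binary constant-composition-$\tfrac12$ codes, ``Hamming distance at most $\tfrac n2$'' is an affine condition which is \emph{exactly} ``the pairwise joint type lies in $\cp((\tfrac12,\tfrac12))$.'' For general alphabets, membership in (a neighbourhood of) the cone $\cp(P_\bfx)$ is \emph{not} a convex condition, so a bare averaging argument is insufficient: the average pairwise joint type can sit near $\cp(P_\bfx)$ while every individual pairwise joint type sits far from it. The fix I would use is (i) a hypergraph-Ramsey step passing to a sub-collection of constant size whose low-order joint types are all nearly equal, followed by (ii) a ``rigidity of spreadable families'' step showing that such a coherent family of sequences over a \emph{bounded} alphabet is forced to have its pairwise joint type near $\cp(P_\bfx)$ --- which is, at bottom, a finite quantitative form of the Ryll--Nardzewski/de Finetti theorem.

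For step (i): fix constants $r=r(\slk{CP},|\cX|)$ and $k\ge r$ to be pinned down last, and take a $\delta'$-net $\mathcal{N}$ of $\Delta(\cX^r)$, of cardinality $|\mathcal{N}|=(O(1/\delta'))^{|\cX|^r}$. Colour each increasing $r$-tuple $i_1<\cdots<i_r$ of indices in $[M]$ by the cell of $\mathcal{N}$ containing $\type_{\vx_{i_1},\ldots,\vx_{i_r}}$. If $M\ge R_r(k;|\mathcal{N}|)$, the $r$-uniform hypergraph Ramsey number, there is (after relabelling) a sub-collection $\vx_1,\ldots,\vx_k$ all of whose increasing $r$-joint-types lie within $\delta'$ in $\norminf{\cdot}$ of a single $Q^{(r)}\in\Delta(\cX^r)$. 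The point to stress is that $R_r(k;|\mathcal{N}|)$ is a \emph{finite constant depending only on $r,k,\delta',|\cX|$ and not on the blocklength $n$}; this constant is taken as $N_{\textnormal{Plotkin}}$. Writing $\mu\in\Delta(\cX^k)$ for the ``column distribution'' --- the empirical distribution over positions $\ell\in[n]$ of the length-$k$ column $(\vx_1(\ell),\ldots,\vx_k(\ell))$ --- the Ramsey conclusion says exactly that $\mu$ is $\delta'$-spreadable over $[k]$ up to order $r$: every marginal of $\mu$ on an increasing $r$-tuple of coordinates is within $\delta'$ of $Q^{(r)}$. In particular the candidate pair will be $(i,j)=(1,2)$, and $\type_{\vx_1,\vx_2}$ is the $(1,2)$-marginal $\mu_{1,2}$, within $\delta'$ of the common $Q_0:=[Q^{(r)}]_{1,2}$.

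The ``easy half'' of step (ii) is a direct count. Let $n_x(\ell)=\#\{i\in[k]:\vx_i(\ell)=x\}$ and $\pi_\ell(x)=n_x(\ell)/k\in\Delta(\cX)$. Averaging the increasing pairwise joint types gives $\overline{Q}:=\tfrac{1}{\binom{k}{2}}\sum_{i<j}\type_{\vx_i,\vx_j}$, which is within $O(\delta')$ of $Q_0$ by the Ramsey step, while symmetrising one obtains $\tfrac12\big(\overline{Q}+\overline{Q}^{\top}\big)=\tfrac1n\sum_\ell\pi_\ell^{\otimes 2}+O(1/k)$ entrywise. The right-hand side is manifestly in the cone, and its marginal $\tfrac1n\sum_\ell\pi_\ell=\tfrac1k\sum_i\type_{\vx_i}$ is within $\slk{input}$ of $P_\bfx$; a short perturbation of the mixing components (harmless when $\slk{input}$ is small relative to $\slk{CP}$) produces an element of $\cp(P_\bfx)$ close to it. Hence $\tfrac12(Q_0+Q_0^\top)$ lies within $O(\delta'+1/k+\slk{input})$ of $\cp(P_\bfx)$, and it remains only to show $Q_0$ is itself close to symmetric, i.e.\ $\norminf{Q_0-Q_0^\top}$ is small. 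Combining the two bounds then yields $\norminf{\type_{\vx_1,\vx_2}-P_{\bfx,\bfx'}}=O(\delta'+1/k+\slk{input})$ for a suitable $P_{\bfx,\bfx'}\in\cp(P_\bfx)$; choosing $r,k$ large and $\delta'$ small (in terms of $\slk{CP},|\cX|$) makes this at most $\slk{CP}$, finishing the proof with $N_{\textnormal{Plotkin}}:=R_r(k;|\mathcal{N}|)$.

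The main obstacle is precisely this last claim --- that a $\delta'$-spreadable column distribution over a length-$k$ index set, with $|\cX|$ fixed and $k$ large, has pairwise marginal within $o_k(1)$ of symmetric. For exactly spreadable \emph{infinite} sequences this is Ryll--Nardzewski's theorem (spreadable implies exchangeable, hence symmetric pairwise marginal); the finite, approximate version is what needs work, namely an $n$-free bound $\norminf{Q_0-Q_0^\top}\le h(r,\delta',|\cX|)$ with $h\to 0$, obtained either by a compactness/stability argument off the infinite statement or by a direct combinatorial argument in the spirit of the classical inequality above but tracking the orientation of the joint type through the fixed index order. This is genuinely where the ``Plotkin phenomenon'' lives: the asymmetry does \emph{not} vanish for small $k$ --- already for $|\cX|=3$, $k=3$ there exist spreadable configurations whose pairwise joint type is honestly non-symmetric --- so only once $k$ exceeds a threshold governed by $\slk{CP}$ and $|\cX|$ does the rigidity bite. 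I expect this step, together with the routine verification that the Ramsey number and net cardinality (and hence $N_{\textnormal{Plotkin}}$) stay independent of $n$, to be the crux; everything else is bookkeeping of the slack parameters. This is, in essence, the combinatorial result established in~\cite{wbbj-2019-omni-avc}.
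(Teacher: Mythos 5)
First, a point of reference: the paper does not prove this statement --- it is imported verbatim from \cite{wbbj-2019-omni-avc} and used as a black box in the converse --- so your proposal has to be judged on its own terms rather than against an in-paper argument. The architecture you set up is sensible and, I believe, close to how such results are actually proved: the hypergraph-Ramsey extraction of a constant-size sub-collection with coherent low-order joint types gives an $n$-independent $N_{\textnormal{Plotkin}}$, and your averaging identity is correct --- $\frac{1}{k(k-1)}\sum_{i\ne j}\type_{\vx_i,\vx_j}=\frac1n\sum_\ell\pi_\ell^{\otimes2}+O(1/k)$ entrywise, so the \emph{symmetrization} $\tfrac12(Q_0+Q_0^\top)$ does land within $O(\delta'+1/k+\slk{input})$ of $\cp(P_\bfx)$. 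You also correctly diagnose why naive averaging fails and why the statement is false for small cliques over non-binary alphabets.

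The genuine gap is exactly the step you flag and then do not carry out: showing that the common pairwise type $Q_0$ of the Ramsey clique is itself close to symmetric. This is not a technicality to be deferred --- it \emph{is} the theorem. Everything you prove rigorously only yields that $\tfrac12(Q_0+Q_0^\top)$ is near $\cp(P_\bfx)$, which is strictly weaker and useless for the application (James's swap attack needs an actual ordered pair $(\vx_i,\vx_j)$ whose joint type is near a symmetric CP distribution; the symmetrized average corresponds to no codeword pair). Worse, order-$2$ spreadability alone provably cannot close the gap: the matrices $A(x,x')=(\type_{\vx_i,\vx_j}(x,x'))_{i,j}$ satisfy $A(x',x)=A(x,x')^\top$, so every quadratic-form/averaging test is blind to the antisymmetric part of $Q_0$, and for $|\cX|\ge3$ the antisymmetric part is not determined by the marginals (unlike the binary case, where a telescoping of $\type_{\vx_j}-\type_{\vx_i}$ forces it to vanish for $k\ge3$). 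So the higher-order ($r\ge3$) coherence from the Ramsey step must be used in an essential, non-averaging way, and your proposal gives no mechanism for doing so --- it gestures at a ``finite quantitative Ryll--Nardzewski theorem'' whose truth, in the uniform-in-$n$ form you need (a bound $h(r,\delta',|\cX|)\to0$ independent of $k$ and $n$), is precisely the open content. A secondary, smaller issue: your final error is $O(\delta'+1/k+\slk{input})$, so you only recover the theorem when $\slk{input}\lesssim\slk{CP}$, a relation the statement does not impose (harmless for the paper's application, where both are $\delta/2$, but worth flagging). As written, the proposal is a correct reduction of the theorem to its hardest sub-claim, not a proof of it.
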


We now use \Cref{thm:gen-plotkin} to prove a lower bound on $ \prob{\cE_2\cap\cE_3\condon\cE_1} $. 

\begin{lemma}
\label{lem:lb-e2-e3}
For $ \cE_0,\cE_1,\cE_2,\cE_3 $ defined in \Cref{eqn:def-e0,eqn:def-e1,eqn:def-e2,eqn:def-e3}, respectively, it holds that
\begin{align}
\prob{\cE_2\cap\cE_3\condon\cE_0\cap\cE_1} &\ge \frac{1}{N^2} \paren{\frac{\delta}{2} - \frac{1+\log N}{n}}^{N-1} , \notag 
\end{align}
where $ N = N_{\textnormal{Plotkin}} \in\bZ_{\ge1} $ is given by \Cref{thm:gen-plotkin}. 
In particular $ N $ depends only on $ \slk{input},\slk{CP},|\cX| $, but not on $n$. 
\end{lemma}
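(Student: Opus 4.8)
The plan is to combine \Cref{lem:lb-e2} (a constant lower bound on $\prob{\cE_2\condon\cE_0\cap\cE_1}$) with the Generalized Plotkin bound (\Cref{thm:gen-plotkin}) applied to the suffix codewords. First I would note that conditioned on $\cE_0\cap\cE_1$, the pair $(\bfm,\bfr)$ and $(\bfm',\bfr')$ are i.i.d.\ samples from $P'_{(\bfm,\bfr)|\vbfy^{\le\alpha}}$ (\Cref{eqn:posterior-distr-prime}), and by the analysis in \Cref{lem:lb-e2} this posterior has residual entropy (on the $\bfm$-coordinate) at least $n\delta/2 - \log(1/c)$, so $k=N$ independent samples are all distinct with probability at least $\left(\frac{n\delta/2-1-\log N}{\log M}\right)^{N-1} \ge \left(\frac{\delta}{2}-\frac{1+\log N}{n}\right)^{N-1}$ by \Cref{lem:fano-no-collision} applied with $k=N$ rather than $k=2$.

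Next I would argue that any $N$ distinct message-key pairs all lying in $\cM'\times\cR'$ yield $N$ distinct codewords, and by construction of the subcode $\cC'$ (\Cref{eqn:subcode}), the suffixes $\vx^{>\alpha}$ of these codewords all have type within $\slk{input}$ (in $\ell_\infty$) of the common average composition $P_{\bfx^{>\alpha}}$. Hence the hypotheses of \Cref{thm:gen-plotkin} are met with $M=N=N_{\textnormal{Plotkin}}$, $P_\bfx = P_{\bfx^{>\alpha}}$, and the constants $\slk{input},\slk{CP}$: there must be two of these $N$ codeword suffixes, say indexed by $(\bfm_i,\bfr_i)$ and $(\bfm_j,\bfr_j)$, whose joint type is within $\slk{CP}$ of some $Q_{\bfx,\bfx'}\in\cp(P_{\bfx^{>\alpha}})$. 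Since $\wh P_{\bfx,\bfx'}$ was chosen (via the Markov argument giving \Cref{eqn:cp-reduction-guarantee}) to be the CP distribution capturing the most codeword pairs, and there are only $\binom{N}{2}\le N^2$ pairs among our $N$ samples, with probability at least $1/N^2$ the particular pair $\{(\bfm,\bfr),(\bfm',\bfr')\}$ James sampled is one whose joint suffix type is within $\slk{CP}$ of $\wh P_{\bfx,\bfx'}$ — this is event $\cE_3$ — and simultaneously $\bfm\ne\bfm'$, which is $\cE_2$.

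More carefully, I would set up the following coupling argument to avoid conflating "some pair collides in type" with "the sampled pair collides in type": draw $N$ i.i.d.\ samples $(\bfm_1,\bfr_1),\dots,(\bfm_N,\bfr_N)$ from $P'_{(\bfm,\bfr)|\vbfy^{\le\alpha}}$; on the event (probability $\ge (\delta/2 - (1+\log N)/n)^{N-1}$) that they are all distinct, \Cref{thm:gen-plotkin} guarantees a good pair $\{i^*,j^*\}$; now a uniformly random choice of an ordered pair of indices from $[N]$ hits $\{i^*,j^*\}$ with probability $\ge 2/(N(N-1))\ge 1/N^2$; and the two samples indexed by this uniformly random ordered pair are themselves i.i.d.\ from the posterior and hence distributed exactly as $((\bfm,\bfr),(\bfm',\bfr'))$. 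Combining, $\prob{\cE_2\cap\cE_3\condon\cE_0\cap\cE_1}\ge \frac{1}{N^2}\left(\frac{\delta}{2}-\frac{1+\log N}{n}\right)^{N-1}$, as claimed. The main obstacle I anticipate is precisely this coupling/indexing bookkeeping: one must be careful that the joint type closeness to $\wh P_{\bfx,\bfx'}$ (rather than to an arbitrary CP distribution) is what is needed downstream, so the $1/N^2$ factor should be attributed to "the Plotkin-good pair happens to be a $\wh P_{\bfx,\bfx'}$-pair among the sampled ones," which is legitimate because $\wh P_{\bfx,\bfx'}$ is the maximizer over $\cp(P_{\bfx^{>\alpha}})$ and \Cref{thm:gen-plotkin} produces \emph{some} element of $\cp(P_{\bfx^{>\alpha}})$; I would handle this by first passing to a further subset of $\cM'$ on which all pairwise suffix joint types are $\slk{CP}$-close to $\wh P_{\bfx,\bfx'}$ (guaranteed to have size $\ge N_{\textnormal{Plotkin}}$ after the Plotkin reduction if we pick constants so that the "most popular" CP cell retains enough codewords), so that \emph{every} distinct pair among the relevant samples is automatically in $\cE_3$, leaving only the distinctness event to control. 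Everything else is a routine chaining of the three preceding lemmas.
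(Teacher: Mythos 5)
Your proof is correct and follows essentially the same route as the paper: draw $N = N_{\textnormal{Plotkin}}$ i.i.d.\ samples from $P'_{(\bfm,\bfr)|\vbfy^{\le\alpha}}$, lower-bound the all-distinct event via \Cref{lem:fano-no-collision} with $k=N$, invoke \Cref{thm:gen-plotkin} to guarantee a good pair among the distinct suffixes, and pay a $1/N^2$ factor by exchangeability (your coupling via a uniformly random ordered index pair is the paper's union bound $\prob{\cE_2'\cap\bigcup_{i,j}\cE_{i,j}}\le N^2\,\prob{\cE_2\cap\cE_3\condon\cE_0\cap\cE_1}$ in disguise). The one place you go beyond the paper is in flagging that \Cref{thm:gen-plotkin} only produces closeness to \emph{some} element of $\cp(P_{\bfx^{>\alpha}})$ whereas $\cE_3$ demands closeness to the specific $\wh P_{\bfx,\bfx'}$ --- a gap the paper's proof silently elides when asserting $\bigcup_{i,j}\cE_{i,j}$ holds surely given distinctness --- though your proposed patch (extracting a subset all of whose pairwise suffix types fall in the $\wh P_{\bfx,\bfx'}$ cell) is itself only sketched, since the pair-counting guarantee of \Cref{eqn:cp-reduction-guarantee} does not immediately yield such a clique.
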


\begin{proof}
Let $ N\in\bZ_{\ge1} $ be a sufficiently large constant to be specified later and $ (\bfm_1,\bfr_1),\cdots,(\bfm_N,\bfr_N) $ be i.i.d.\ according to $ P'_{(\bfm,\bfr)|\bfy^{\le\alpha}=\vy^{\le\alpha}} $. 
Define
\begin{align}
\cE_2' &\coloneqq \curbrkt{\card{{\curbrkt{\bfm_1,\cdots,\bfm_N}}} = N} , \quad
\cE_{i,j} \coloneqq \curbrkt{\bfm_i\ne\bfm_j} \cap \curbrkt{\norminf{\type_{\vx_{(\bfm_i,\bfr_i)}^{>\alpha},\vx_{(\bfm_j,\bfr_j)}^{>\alpha}} - \wh{P}_{\bfx,\bfx'}}\le\slk{CP}} . \notag 
\end{align}
Now, for one thing, we have by the union bound
\begin{align}
\prob{\cE_2'\cap\bigcup_{1\le i,j\le N}\cE_{i,j}\condon\cE_0\cap\cE_1} &\le N^2 \prob{\cE_2\cap\cE_3\condon\cE_0\cap\cE_1} . \label{eqn:e23-1} 
\end{align}
For another thing, 
\begin{align}
\prob{\cE_2'\cap\bigcup_{1\le i,j\le N}\cE_{i,j}\condon\cE_0\cap\cE_1}
&= \prob{\cE_2'\condon\cE_0\cap\cE_1} \prob{\bigcup_{1\le i,j\le N}\cE_{i,j}\condon\cE_0\cap\cE_1\cap\cE_2'}
= \prob{\cE_2'\condon\cE_0\cap\cE_1}, \label{eqn:e23-2} 
\end{align}
where the last equality follows since by the Generalized Plotkin bound (\Cref{thm:gen-plotkin}), the event $ \bigcup_{i,j=1}^N\cE_{i,j} $ happens with probability $1$ given $ \cE_0\cap\cE_1\cap\cE_2' $. 
Specifically, once $ N $ is chosen to be larger than $ N_{\textnormal{Plotkin}} $ given by \Cref{thm:gen-plotkin}, $ \cE_1 $ implies the existence of a pair of (distinct) codewords whose joint type is approximately completely positive. 

By \Cref{lem:fano-no-collision}, 
\begin{align}
\prob{\cE_2'\condon\cE_0\cap\cE_1} &\ge \paren{\frac{n\delta/2 - 1 - \log N}{\log M}}^{N-1} . \label{eqn:e23-3} 
\end{align}
Therefore, \Cref{eqn:e23-1,eqn:e23-2,eqn:e23-3} imply the following lower bound
\begin{align}
\prob{\cE_2\cap\cE_3\condon\cE_0\cap\cE_1} &\ge \frac{1}{N^2} \paren{\frac{n\delta/2 - 1 - \log N}{\log M}}^{N-1}
\ge \frac{1}{N^2} \paren{\frac{\delta}{2} - \frac{1+\log N}{n}}^{N-1} , \notag 
\end{align}
as promised in \Cref{lem:lb-e2-e3}. 
\end{proof}

Let 
\begin{align}
\cE_4 &\coloneqq \curbrkt{ \type_{\vbfs} \in \stcstr } . \label{eqn:def-e4} 
\end{align}

\begin{lemma}[Concentration of jamming cost]
\label{lem:concentrate-jam-cost}
For $ \cE_0,\cE_1,\cE_2,\cE_3,\cE_4 $ defined in \Cref{eqn:def-e0,eqn:def-e1,eqn:def-e2,eqn:def-e3,eqn:def-e4}, respectively, it holds that
\begin{align}
\prob{\cE_4\condon\cE_0\cap\cE_1\cap\cE_2\cap\cE_3} &\ge 1-e^{-\Omega(\delta^2n)}. \notag 
\end{align}
\end{lemma}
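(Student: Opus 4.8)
The plan is to condition on $\cE_0\cap\cE_1\cap\cE_2\cap\cE_3$ throughout, decompose the empirical state cost into a prefix part and a suffix part, and show each part concentrates around a nominal value that is pinned down by the $\slk{state}$‑contracted feasibility constraint $\cF_{\alpha,\slk{state}}$. Writing $\cost(P)=\sum_{s\in\cS}P(s)B(s)$, so that $\stcstr=\curbrkt{P:\cost(P)\le\Lambda}$ and (cf.\ \Cref{eqn:def-e4}) $\cE_4=\curbrkt{\cost(\type_{\vbfs})\le\Lambda}$, I would first record $\cost(\type_{\vbfs})=\alpha A+(1-\alpha)C$ with $A\coloneqq\frac{1}{\alpha n}\sum_{i=1}^{\alpha n}B(\vbfs(i))$ and $C\coloneqq\frac{1}{(1-\alpha)n}\sum_{j=\alpha n+1}^{n}B(\vbfs(j))$ the empirical per‑symbol costs in the two phases. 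Since James picked $(V_{\bfs|\bfx,\bfu^{\le\alpha}},V_{\bfs|\bfx,\bfx',\bfu^{>\alpha}})\in\cF_{\alpha,\slk{state}}(P_\bfu,P_{\bfx|\bfu})$, testing the defining membership condition of $\cF_{\alpha,\slk{state}}$ against the linear functional $B$ gives that $\alpha\, c^{\le\alpha}_{\mathrm{nom}}+(1-\alpha)\, c^{>\alpha}_{\mathrm{nom}}$ lies below $\Lambda$ by a margin of order $\slk{state}$, where $c^{\le\alpha}_{\mathrm{nom}}$ and $c^{>\alpha}_{\mathrm{nom}}$ are the nominal per‑symbol costs obtained by replacing the chunk‑wise types of the transmitted and spoofing codewords by $P_{\bfx|\bfu}$ and $P_{\bfx|\bfu}^{\ot2}$. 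Thus it suffices to show that, conditioned on $\cE_0\cap\cE_1\cap\cE_2\cap\cE_3$, the two events $\curbrkt{\alpha A> \alpha\, c^{\le\alpha}_{\mathrm{nom}}+\slk{state}/3}$ and $\curbrkt{(1-\alpha)C> (1-\alpha)\, c^{>\alpha}_{\mathrm{nom}}+\slk{state}/3}$ each have probability $e^{-\Omega(\delta^2 n)}$; on the complement, $\cost(\type_{\vbfs})\le\Lambda-\slk{state}/3<\Lambda$, i.e.\ $\cE_4$ holds.

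For the suffix term I would condition on the $\sigma$‑algebra $\cG$ generated by everything in James' attack preceding the sampling of $\vbfs^{>\alpha}$, namely $(\bfm,\bfr,\vbfs^{\le\alpha},\bfm',\bfr')$. The events $\cE_0,\cE_1,\cE_2,\cE_3$ are all $\cG$‑measurable; on $\cE_3$ the greedy assignment of $\vv^{>\alpha}$ succeeds and \Cref{eqn:close-to-cp} holds; and given $\cG$ the symbols $\curbrkt{\vbfs(j)}_{j>\alpha n}$ are mutually independent and bounded, with $\vbfs(j)\sim V_{\bfs|\bfx=\vbfx(j),\bfx'=\vbfx'(j),\bfu^{>\alpha}=\vv^{>\alpha}(j)}$. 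Hence the conditional expectation of $(1-\alpha)n\,C$ given $\cG$ is a linear functional of $\type_{\vv^{>\alpha},\vbfx^{>\alpha},\vbfx'^{>\alpha}}$, which by \Cref{eqn:close-to-cp} is within $\slk{CP}$ in $\ell_\infty$ of $P_\bfv P_{\bfx|\bfv}^{\ot2}$; since $B$ is bounded (WLOG $\max_s|B(s)|\le 1$ after rescaling) and the sum has $O(K_2|\cX|^2)$ terms, this conditional mean is at most $(1-\alpha)n\,(c^{>\alpha}_{\mathrm{nom}}+O(\delta))$ on $\cE_3$. Hoeffding's inequality for the sum of $(1-\alpha)n$ independent bounded terms then yields the $e^{-\Omega(\delta^2 n)}$ deviation bound conditionally on $\cG$, and averaging over $\cG$ under $\cE_0\cap\cE_1\cap\cE_2\cap\cE_3$ removes the conditioning. (When $\alpha=1$ the suffix is empty; when $\alpha<1$ one has $1-\alpha\ge 1/K$, so the number of suffix symbols is $\Omega(n)$.)

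The prefix term is the delicate one, because $\cE_1$ and $\cE_2$ depend on $\vbfs^{\le\alpha}$ through $\vbfy^{\le\alpha}$, so $\vbfs^{\le\alpha}$ is not conditionally independent given $\cE_1\cap\cE_2$. The remedy is to condition only on $\cE_0$, which is a function of $\vbfx$: the Markov chain $(\bfm,\bfr)\to\vbfx\to\vbfs^{\le\alpha}$ gives $\vbfs^{\le\alpha}\perp\cE_0\mid\vbfx$, so given $\vbfx=\vx$ with $\vx$ in the chunk‑wise $\slk{input}$‑constant‑composition subcode $\cC'$ the symbols $\curbrkt{\vbfs(i)}_{i\le\alpha n}$ remain mutually independent with $\vbfs(i)\sim V_{\bfs|\bfx=\vx(i),\bfu^{\le\alpha}=\vu^{\le\alpha}(i)}$; moreover $\norminf{\type_{\vu,\vx}-\unif([K])\times\wh P_{\bfx|\bfu}}\le\slk{input}$, so the conditional mean of $\alpha n\,A$ is at most $\alpha n\,(c^{\le\alpha}_{\mathrm{nom}}+O(\delta))$, and Hoeffding gives $\prob{\alpha A>\alpha\, c^{\le\alpha}_{\mathrm{nom}}+\slk{state}/3\condon\vbfx=\vx}\le e^{-\Omega(\delta^2 n)}$ uniformly over $\vx\in\cC'$, hence the same bound conditioned on $\cE_0$. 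To transfer this to conditioning on the full event I would use $\prob{\mathcal B\condon\cE_0\cap\cE_1\cap\cE_2\cap\cE_3}\le\prob{\mathcal B\condon\cE_0}/\prob{\cE_1\cap\cE_2\cap\cE_3\condon\cE_0}$ together with $\prob{\cE_1\cap\cE_2\cap\cE_3\condon\cE_0}=\Omega(1)$ — a positive constant depending on $\delta,|\cX|,N_{\mathrm{Plotkin}}$ but not on $n$ — which follows from \Cref{lem:lb-e1} and \Cref{lem:lb-e2-e3}; dividing by a fixed constant preserves the exponential bound. A union bound over the prefix and suffix bad events finishes. The main obstacle is exactly this prefix dependency (the conditioning $\cE_1\cap\cE_2$ sits downstream of the randomness whose concentration we need) and, secondarily, the bookkeeping verifying that with $\slk{input}=\slk{CP}=\delta/2$, $\slk{state}=(K_1|\cX||\cS|+K_2|\cX|^2|\cS|)\delta$ and $\max_s|B(s)|\le 1$, the $O(\delta)$ type‑mismatch biases plus the Hoeffding slack stay below $\slk{state}$.
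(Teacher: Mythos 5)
Your proof is correct and follows the same overall architecture as the paper's: split the state cost into prefix and suffix contributions, show each concentrates (via a Chernoff/Hoeffding bound over conditionally independent state symbols) around a nominal value, and absorb both the type-mismatch bias and the concentration slack into the margin that membership in $\cF_{\alpha,\slk{state}}$ provides. Two differences are worth recording. First, you concentrate the scalar functional $\sum_{s}\type_{\vbfs}(s)B(s)$ directly, exploiting Assumption~\ref{itm:assumption-st-constr} that $\lambda_\bfs$ is a single linear constraint; the paper instead proves $\ell_\infty$ concentration of the full conditional types of $\vbfs^{\le\alpha}$ and $\vbfs^{>\alpha}$ around $V_{\bfs|\bfx,\bfu^{\le\alpha}}$ and $V_{\bfs|\bfx,\bfx',\bfu^{>\alpha}}$ and only then passes to the cost via \Cref{eqn:cost-joint-pref,eqn:cost-joint-suff,eqn:bound-cost}. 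The two are equivalent here; the paper's version is what one would need if $\lambda_\bfs$ carried several constraints. Second --- and this is the more substantive point --- you correctly flag that the prefix state symbols are \emph{not} conditionally independent given $\cE_1\cap\cE_2\cap\cE_3$, since those events are functions of $\vbfy^{\le\alpha}$ and hence of $\vbfs^{\le\alpha}$; the paper's \Cref{eqn:jam-cost-pref} is stated without conditioning and silently elides this. Your fix --- prove the deviation bound conditioned only on $\vbfx=\vx\in\cC'$, where independence does hold, and then pay a factor $1/\prob{\cE_1\cap\cE_2\cap\cE_3\condon\cE_0}=O(1)$ (guaranteed by \Cref{lem:lb-e1,lem:lb-e2-e3}) to transfer to the full conditioning --- is valid and makes the argument airtight where the paper is informal. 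The only loose end is the one you acknowledge yourself: verifying that the type-mismatch bias (of order $\slk{input}$ and $\slk{CP}$ times alphabet-size factors) plus the Hoeffding deviation stays strictly below the margin afforded by $\interior_{\slk{state}}(\stcstr)$, which is exactly the bookkeeping behind the paper's choice $\slk{state}=(K_1|\cX||\cS|+K_2|\cX|^2|\cS|)\delta$.
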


\begin{proof}
Conditioned on $ \cE_1\cap\cE_2\cap\cE_3 $, during the symmetrization phase of the attack (\Cref{sec:attack-symm}), James is able to learn a time-sharing sequence $ \vv^{>\alpha}\in(\cU^{>\alpha})^{(1-\alpha)n} $ that is consistent with the completely positive structure of $ \wh{P}_{\bfx,\bfx'} $. 
That is, \Cref{eqn:close-to-cp} holds. 
This implies that the cost of the suffix jamming vector $ \vbfs^{>\alpha} $ is feasible with high probability. 
Indeed, by the Chernoff bound (\Cref{lem:chernoff}), 
\begin{align}
& \prob{\norminf{\type_{\vbfs^{>\alpha}|\vx_{(\bfm,\bfr)}^{>\alpha},\vx_{(\bfm',\bfr')}^{>\alpha},\vv^{>\alpha}} - V_{\bfs|\bfx,\bfx',\bfu^{>\alpha}}} > \frac{\delta}{8} \condon \cE_0\cap\cE_1\cap\cE_2\cap\cE_3} \notag \\
&= \prob{\exists (v,x,x')\in\cU^{>\alpha}\times\cX^2, \norminf{\type_{\vbfs^{>\alpha}|\vx_{(\bfm,\bfr)}^{>\alpha},\vx_{(\bfm',\bfr')}^{>\alpha},\vv^{>\alpha}}(\cdot|x,x',v) - V_{\bfs|\bfx ,\bfx' ,\bfu^{>\alpha} }(\cdot|x,x',v)} \ge \frac{\delta}{8} \condon\cE_0\cap\cE_1\cap\cE_2\cap\cE_3} \notag \\
&\le \sum_{(v,x,x')\in\cU^{>\alpha}\times\cX^2} \sum_{s\in\cS} \prob{\abs{ \type_{\vbfs^{>\alpha}|\vx_{(\bfm,\bfr)}^{>\alpha},\vx_{(\bfm',\bfr')}^{>\alpha},\vv^{>\alpha}}(s|x,x',v) - V_{\bfs |\bfx ,\bfx',\bfu^{>\alpha} }(s|x,x',v) } > \frac{\delta}{8}V_{\bfs |\bfx ,\bfx',\bfu^{>\alpha} }(s|x,x',v) \condon\cE_0\cap\cE_1\cap\cE_2\cap\cE_3} \notag \\
&\le \sum_{(v,x,x',s)\in\cU^{>\alpha}\times\cX^2\times\cS} 2\exp\paren{-\frac{(\delta/8)^2}{3}nP_\bfv(v)P_{\bfx|\bfv}(x|v)P_{\bfx|\bfv}(x'|v)V_{\bfs|\bfx,\bfx',\bfu^{>\alpha}}(s|x,x',v)} \notag \\ 
&= e^{-\Omega(\delta^2n)} . \label{eqn:jam-cost-suff} 
\end{align}
Similarly, the jamming cost in the prefix is also bounded with high probability
\begin{align}
\prob{ \norminf{ \type_{\vbfs^{\le\alpha}|\vx_{(\bfm,\bfr)}^{\le\alpha}, \vu^{\le\alpha}} - V_{\bfs|\bfx,\bfu^{\le\alpha}} } > \frac{\delta}{8} } 
\le \sum_{(u,x,s)\in\cU^{\le\alpha}\times\cX\times\cS} 2 \exp\paren{-\frac{(\delta/8)^2}{3}n\eps V_{\bfs|\bfx,\bfu^{\le\alpha}}(s|x,u)} 
= e^{-\Omega(\delta^2n)} . \label{eqn:jam-cost-pref} 
\end{align}
By the triangle inequality, \Cref{eqn:jam-cost-suff,eqn:jam-cost-pref} imply
\begin{align}
\norminf{ \type_{\vu^{\le\alpha},\vx_{(\bfm,\bfr)}^{\le\alpha}}(\cdot|u) \type_{\vbfs^{\le\alpha}|\vx_{(\bfm,\bfr)}^{\le\alpha}, \vu^{\le\alpha}} - P_{\bfu^{\le\alpha}} P_{\bfx|\bfu^{\le\alpha}} V_{\bfs|\bfx,\bfu^{\le\alpha}} }
&\le \slk{input} + \delta/8 + \slk{input}\delta/8 \le \slk{input} + \delta/4, \label{eqn:cost-joint-pref} \\
\norminf{\type_{\vv^{>\alpha}} \type_{\vx_{(\bfm,\bfr)}^{>\alpha}, \vx_{(\bfm',\bfr')}^{>\alpha}|\vv^{>\alpha}} \type_{\vbfs^{>\alpha}|\vx_{(\bfm,\bfr)}^{>\alpha},\vx_{(\bfm,\bfr')}^{>\alpha},\vv^{>\alpha}} - P_{\bfv} P_{\bfx|\bfv}^{\ot2} V_{\bfs|\bfx,\bfx',\bfu^{>\alpha}}} 
&\le \slk{CP} + \delta/8 + \slk{CP}\delta/8 \le \slk{CP} + \delta/4, \label{eqn:cost-joint-suff}
\end{align}
with high probability. 
Note that
\begin{align}
\type_{\vbfs} &= \alpha \sqrbrkt{\type_{\vu^{\le\alpha}} \type_{\vx_{(\bfm,\bfr)}^{\le\alpha}|\vu^{\le\alpha}} \type_{\vbfs^{\le\alpha}|\vx_{(\bfm,\bfr)}^{\le\alpha}, \vu^{\le\alpha}} }_{\bfs^{\le\alpha}} 
+ (1-\alpha) \sqrbrkt{ \type_{\vv^{>\alpha}} \type_{\vx_{(\bfm,\bfr)}^{>\alpha}, \vx_{(\bfm',\bfr')}^{>\alpha}|\vv^{>\alpha}} \type_{\vbfs^{>\alpha}|\vx_{(\bfm,\bfr)}^{>\alpha},\vx_{(\bfm,\bfr')}^{>\alpha},\vv^{>\alpha}} }_{\bfs^{>\alpha}} . \notag 
\end{align}
Define
\begin{align}
P_{\bfs} &\coloneqq \alpha \sqrbrkt{P_{\bfu^{\le\alpha}} P_{\bfx|\bfu^{\le\alpha}} V_{\bfs|\bfx,\bfu^{\le\alpha}} }_{\bfs^{\le\alpha}} + (1-\alpha) \sqrbrkt{P_{\bfv} P_{\bfx|\bfv}^{\ot2} V_{\bfs|\bfx,\bfx',\bfu^{>\alpha}}}_{\bfs^{>\alpha}} . \notag 
\end{align}
\Cref{eqn:cost-joint-pref,eqn:cost-joint-suff} further imply
\begin{align}
\norminf{ \type_{\vbfs} - P_{\bfs} } &\le |\cU^{\le\alpha}||\cX||\cS|(\slk{input}+\delta/4) + |\cU^{>\alpha}||\cX|^2|\cS|(\slk{CP} + \delta/4). \label{eqn:bound-cost} 
\end{align}
Since $ P_{\bfs}\in\interior_{\slk{state}}(\stcstr) $ and $ \slk{state} $ was set to be larger than the RHS of \Cref{eqn:bound-cost}, we have $ \type_{\vbfs}\in\stcstr $ with probability $ 1-e^{-\Omega(\delta^2n)} $ by the triangle inequality. 
\end{proof}

\begin{lemma}[Lower bound on error probability]
\label{lem:lb-pe}
For any code with a (potentially stochastic) encoder-decoder pair $ (\enc,\dec) $ and the babble-and-push attack $\jam$ described in \Cref{sec:converse-twostage-attack}, the average error probability defined in \Cref{eqn:def-pe-avg} is lower bounded as follows:
\begin{align}
P_{\e,\avg}(\enc,\dec,\jam)
&\ge \frac{c}{2}\cdot\paren{\frac{\delta}{4} - \frac{1}{n} \log\frac{1}{c}}\cdot \frac{1}{N^2}\paren{\frac{\delta}{2} - \frac{1+\log N}{n}}^{N-1}\cdot \paren{1-e^{-\Omega(\delta^2n)}} , \notag 
\end{align}
where $ c>0 $ is given by \Cref{eqn:cc-reduction-guarantee} and 
$ N = N_{\textnormal{Plotkin}} \in\bZ_{\ge1} $ is given by \Cref{thm:gen-plotkin}, both independent of $n$. 
\end{lemma}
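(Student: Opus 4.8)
The plan is to combine the probability lower bounds on the good events $\cE_0,\cE_1,\cE_2,\cE_3,\cE_4$ established in the preceding four lemmas with a symmetrization argument showing that, conditioned on all five events, Bob's decoder errs with probability at least $1/2$.

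First I would lower bound $\prob{\cE_0\cap\cE_1\cap\cE_2\cap\cE_3\cap\cE_4}$ by the chain rule, invoking in order the lemma asserting $\prob{\cE_0}\ge c$, \Cref{lem:lb-e1} for $\prob{\cE_1\condon\cE_0}$, \Cref{lem:lb-e2-e3} for $\prob{\cE_2\cap\cE_3\condon\cE_0\cap\cE_1}$, and \Cref{lem:concentrate-jam-cost} for $\prob{\cE_4\condon\cE_0\cap\cE_1\cap\cE_2\cap\cE_3}$. These bounds multiply to
\[
c\cdot\paren{\frac{\delta}{4}-\frac1n\log\frac1c}\cdot\frac1{N^2}\paren{\frac{\delta}{2}-\frac{1+\log N}{n}}^{N-1}\cdot(1-e^{-\Omega(\delta^2n)}),
\]
so it remains only to produce the extra factor $\frac12$, i.e.\ to show $P_{\e,\avg}(\enc,\dec,\jam)\ge\frac12\prob{\cE_0\cap\cE_1\cap\cE_2\cap\cE_3\cap\cE_4}$.

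For this I would argue exchangeability of the true and spoofing codewords. Conditioned on $\vbfy^{\le\alpha}$ and on $\cE_0$, the true message--key pair $(\bfm,\bfr)$ and the spoofing pair $(\bfm',\bfr')$ are i.i.d.\ samples from the restricted posterior $P'_{(\bfm,\bfr)|\vbfy^{\le\alpha}}$ of \Cref{eqn:posterior-distr-prime}. On $\cE_3$ the joint type of the two suffix codewords is $\slk{CP}$-close to the completely positive (hence symmetric) distribution $\wh P_{\bfx,\bfx'}$, so James can complete the greedy assignment of the time-sharing sequence $\vv^{>\alpha}$ realizing its CP decomposition; because this assignment and the target $P_{\bfv}P_{\bfx|\bfv}^{\ot2}$ are invariant under transposing the two $\cX$-coordinates, James uses the \emph{same} $\vv^{>\alpha}$ whichever of the two codewords plays the role of ``true''. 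The suffix channel he thereby induces is, position by position, $y\mapsto\sum_{s}V_{\bfs|\bfx,\bfx',\bfu^{>\alpha}=v}(s|x,x')W_{\bfy|\bfx,\bfs}(y|x,s)$ with $(x,x')=(\text{true},\text{spoof})$, and the defining property of $V_{\bfs|\bfx,\bfx',\bfu^{>\alpha}=u}\in\cV$ is precisely that this equals the same expression with $x$ and $x'$ interchanged. Combining these observations, for every $\vy$ and every pair $m_1\ne m_2$ the joint probability $\prob{\bfm=m_1,\bfm'=m_2,\vbfy=\vy,\,\cE_0\cap\cE_1\cap\cE_2\cap\cE_3\cap\cE_4}$ is unchanged under swapping $(m_1,r_1)\leftrightarrow(m_2,r_2)$; summing out the spoof and using that $\bfm\ne\bfm'$ on $\cE_2$, this gives $\prob{\dec(\vbfy)=\bfm,\,\cE_0\cap\cdots\cap\cE_4}=\prob{\dec(\vbfy)=\bfm',\,\cE_0\cap\cdots\cap\cE_4}$ for any (possibly stochastic) $\dec$. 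Since $\{\dec(\vbfy)=\bfm\}$ and $\{\dec(\vbfy)=\bfm'\}$ are disjoint on $\cE_2$, each has probability at most $\frac12\prob{\cE_0\cap\cdots\cap\cE_4}$, so $P_{\e,\avg}\ge\prob{\{\dec(\vbfy)\ne\bfm\}\cap\cE_0\cap\cdots\cap\cE_4}\ge\frac12\prob{\cE_0\cap\cdots\cap\cE_4}$, completing the argument.

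The step I expect to be the main obstacle is the pointwise swap invariance of the joint law on $\cE_0\cap\cdots\cap\cE_4$, because the five events are not all obviously symmetric under the true/spoof swap: the prefix channel outputs differ in the two roles, which is why conditioning on $\vbfy^{\le\alpha}$ is essential; the time-sharing assignment $\vv^{>\alpha}$ must be verified to be swap-invariant, which uses the symmetry of $\wh P_{\bfx,\bfx'}$ and of $P_{\bfv}P_{\bfx|\bfv}^{\ot2}$; and the cost event $\cE_4$ must be shown to be (essentially) swap-invariant, which follows because the state-feasibility constraint defining $\cF_{\alpha,\slk{state}}$ averages $V_{\bfs|\bfx,\bfx',\bfu^{>\alpha}=u}$ against the exchangeable distribution $P_{\bfx|\bfv=v}^{\ot2}$, so the expected suffix state type is the same in either role and \Cref{lem:concentrate-jam-cost} controls the deviation for both. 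A minor point to track is that $P_{\e,\avg}$ averages over messages uniformly on all of $\cM$ whereas the argument lives on the subcode message set $\cM'$; this discrepancy is exactly absorbed by the factor $\prob{\cE_0}\ge c$.
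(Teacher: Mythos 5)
Your overall architecture matches the paper's proof: chain the probability bounds from the preceding lemmas, then use exchangeability of the true and spoofing message--key pairs under the restricted posterior together with the suffix symmetrization condition $V_{\bfs|\bfx,\bfx',\bfu^{>\alpha}=u}\in\cV$ and the disjointness of $\{\dec(\vbfy)=\bfm\}$ and $\{\dec(\vbfy)=\bfm'\}$ on $\cE_2$ to extract the factor $\tfrac12$. The one step that does not go through as you state it is the pointwise swap-invariance of the joint law \emph{including} $\cE_4$. The membership of $V$ in $\cV$ only guarantees that the \emph{output} distributions agree after summing over all of $\cS$ in each coordinate, i.e.\ $\sum_s V(s|x,x')W(y|x,s)=\sum_s V(s|x',x)W(y|x',s)$; it does not make the state distributions $V(\cdot|x,x')$ and $V(\cdot|x',x)$ equal. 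Since $\cE_4$ is an event about the realized state type $\type_{\vbfs}$, restricting the sum over $\vs^{>\alpha}$ to cost-feasible sequences destroys the per-coordinate factorization on which the symmetrization identity relies, and the two restricted sums genuinely differ in general (e.g.\ for a bit-flip channel with $V(1|0,1)=p$, $V(1|1,0)=1-p$, $p\ne 1/2$, and a cost that charges only state $1$). Your fallback observation---that the \emph{expected} state type is the same in either role because it is an average against the exchangeable $P_{\bfx|\bfv}^{\ot 2}$---shows that $\cE_4$ holds with high probability in both roles, but that is an approximate statement and cannot deliver the exact pointwise identity you invoke.

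The paper resolves this exactly where you predicted the obstacle would be: it proves exact exchangeability of the \emph{unrestricted} law $P_{\vbfy,(\bfm,\bfr),(\bfm',\bfr')}$ on $\cE_0\cap\cE_1\cap\cE_2\cap\cE_3$ (all of which are swap-invariant), applies the $\tfrac12$ argument there, and then separately subtracts the contribution of the infeasible part $\ol{Q}$ (carrying $\one_{\cE_4^c}$), which is bounded by $\prob{\cE_0\cap\cdots\cap\cE_3}\,\prob{\cE_4^c\condon\cE_0\cap\cdots\cap\cE_3}\le e^{-\Omega(\delta^2 n)}\prob{\cE_0\cap\cdots\cap\cE_3}$ via \Cref{lem:concentrate-jam-cost}. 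This truncate-and-subtract step recovers exactly the $(1-e^{-\Omega(\delta^2 n)})$ factor in the statement, so your final bound is unaffected once the argument is restructured this way.
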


\begin{proof}
Since conditioned on $ \vbfy^{\le\alpha} $, $ (\bfm,\bfr') $ and $ (\bfm',\bfr') $ are i.i.d.\ (more specifically, independent and both following the posterior distribution $ P'_{(\bfm,\bfr)|\vbfy^{\le\alpha} = \vy^{\le\alpha}} $), we have
\begin{align}
P_{\vbfy^{\le\alpha}, (\bfm,\bfr),(\bfm',\bfr')}(\vy^{\le\alpha},(m,r),(m',r')) 
=& P_{\vbfy^{\le\alpha}}(\vy^{\le\alpha}) P_{(\bfm,\bfr), (\bfm',\bfr') | \vbfy^{\le\alpha}}((m,r),(m',r')|\vy^{\le\alpha}) \notag \\
=& P_{\vbfy^{\le\alpha}}(\vy^{\le\alpha}) P'_{(\bfm,\bfr)  | \vbfy^{\le\alpha}}((m,r)|\vy^{\le\alpha}) P'_{(\bfm,\bfr)  | \vbfy^{\le\alpha}}((m',r')|\vy^{\le\alpha}) \notag \\
=& P_{\vbfy^{\le\alpha}}(\vy^{\le\alpha}) P_{ (\bfm',\bfr'),(\bfm,\bfr) | \vbfy^{\le\alpha} }((m,r),(m',r')|\vy^{\le\alpha}) \notag \\
=& P_{\vbfy^{\le\alpha}, (\bfm',\bfr'), (\bfm,\bfr)}(\vy^{\le\alpha},(m,r),(m',r')). \label{eqn:first-exchangeable}
\end{align}

We also claim that conditioned on $ \cE_0\cap\cE_1\cap\cE_2\cap\cE_3 $, 
\begin{align}
P_{\vbfy^{>\alpha}|\vbfy^{\le\alpha},(\bfm,\bfr),(\bfm',\bfr')} = P_{\vbfy^{>\alpha}|\vbfy^{\le\alpha},(\bfm',\bfr'),(\bfm,\bfr)}. 
\label{eqn:second-exchangeable}
\end{align}
To see this, by Bayes' theorem, 
\begin{align}
P_{\vbfy^{>\alpha}|\vbfy^{\le\alpha},(\bfm,\bfr),(\bfm',\bfr')} 
&= P_{\vbfy^{>\alpha},(\bfm,\bfr),(\bfm',\bfr')|\vbfy^{\le\alpha}}/P_{(\bfm,\bfr),(\bfm',\bfr')|\vbfy^{\le\alpha}}. \label{eqn:second-bayes} 
\end{align}
The distribution in the denominator, as we just argued (\Cref{eqn:first-exchangeable}), is conditionally exchangeable. 
For the numerator, we have
\begin{align}
& P_{\vbfy^{>\alpha},(\bfm,\bfr),(\bfm',\bfr')|\vbfy^{\le\alpha}}(\vy^{>\alpha},(m,r),(m',r')|\vy^{\le\alpha}) \notag \\
&= P'_{(\bfm,\bfr)|\vbfy}((m,r)|\vy^{\le\alpha})P'_{(\bfm,\bfr)|\vbfy}((m',r')|\vy^{\le\alpha}) \notag \\
&\qquad \sum_{\vs\in\cS^n} \prod_{i=\alpha n+1}^n \paren{ V_{\bfs|\bfx,\bfx',\bfu^{>\alpha}}(\vs(i)|\vx_{(m,r)}(i),\vx_{(m',r')}(i),\vv^{>\alpha}(i)) W_{\bfy|\bfx,\bfs}(\vy(i)|\vx_{(m,r)}(i),\vs(i)) } \notag \\
&= P'_{(\bfm,\bfr)|\vbfy}((m,r)|\vy^{\le\alpha})P'_{(\bfm,\bfr)|\vbfy}((m',r')|\vy^{\le\alpha}) \notag \\
&\qquad \prod_{i=\alpha n+1}^n \sum_{\vs(i)\in\cS} \paren{ V_{\bfs|\bfx,\bfx',\bfu^{>\alpha}}(\vs(i)|\vx_{(m,r)}(i),\vx_{(m',r')}(i),\vv^{>\alpha}(i)) W_{\bfy|\bfx,\bfs}(\vy(i)|\vx_{(m,r)}(i),\vs(i)) } \notag \\
&= P'_{(\bfm,\bfr)|\vbfy}((m',r')|\vy^{\le\alpha})P'_{(\bfm,\bfr)|\vbfy}((m,r)|\vy^{\le\alpha}) \notag \\
&\qquad \prod_{i=\alpha n+1}^n \sum_{\vs(i)\in\cS} \paren{ V_{\bfs|\bfx,\bfx',\bfu^{>\alpha}}(\vs(i)|\vx_{(m',r')}(i),\vx_{(m,r)}(i),\vv^{>\alpha}(i)) W_{\bfy|\bfx,\bfs}(\vy(i)|\vx_{(m',r')}(i),\vs(i)) } \label{eqn:use-symm} \\
&= P_{\vbfy^{>\alpha},(\bfm,\bfr),(\bfm',\bfr')|\vbfy^{\le\alpha}}(\vy^{>\alpha},(m',r'),(m,r)|\vy^{\le\alpha}). \label{eqn:second-exchangeable-pf}
\end{align}
where \Cref{eqn:use-symm} follows since $ V_{\bfs|\bfx,\bfx',\bfu^{>\alpha}}\in\cV $ (cf.\ \Cref{def:symm-dist}) and therefore satisfies the symmetrization equation. 
Combining \Cref{eqn:first-exchangeable,eqn:second-bayes,eqn:second-exchangeable-pf} proves \Cref{eqn:second-exchangeable}. 

Now, consider the joint distribution of $ (\vbfy,(\bfm,\bfr),(\bfm',\bfr')) $ given $ \cE_0\cap\cE_1\cap\cE_2\cap\cE_3 $. 
\begin{align}
P_{ \vbfy,(\bfm,\bfr),(\bfm',\bfr') } &= P_{\vbfy^{\le\alpha}, \vbfy^{>\alpha}, (\bfm,\bfr), (\bfm',\bfr)} \notag \\
&= P_{\vbfy^{\le\alpha}, (\bfm,\bfr), (\bfm',\bfr')} P_{\vbfy^{>\alpha} | \vbfy^{\le\alpha}, (\bfm,\bfr), (\bfm',\bfr')} \notag \\
&= P_{\vbfy^{\le\alpha}, (\bfm',\bfr'), (\bfm,\bfr)} P_{\vbfy^{>\alpha} | \vbfy^{\le\alpha}, (\bfm',\bfr'), (\bfm,\bfr)} \notag \\
&= P_{\vbfy, (\bfm',\bfr'), (\bfm,\bfr)}. \notag 
\end{align}
Define $ Q_{\vbfy,(\bfm,\bfr),(\bfm',\bfr')} $ as a truncated version of $ P_{\vbfy,(\bfm,\bfr),(\bfm',\bfr')} $ in which $ \vy $ results only from feasible $\vs$. 
Specifically,
\begin{align}
& Q_{\vbfy,(\bfm,\bfr),(\bfm',\bfr')}(\vy,(m,r),(m',r')) \notag \\
&\coloneqq \frac{1}{|\cM||\cR|} \sum_{\vs^{\le\alpha}\in\cS^{\alpha n}} \prod_{i = 1}^{\alpha n} \sqrbrkt{V_{\bfs|\bfx,\bfu^{\le\alpha}}(\vs^{\le\alpha}(i)|\vx_{(m,r)}(i),\vu^{\le\alpha}(i)) W_{\bfy|\bfx,\bfs}(\vy^{\le\alpha}(i)|\vx_{(m,r)}(i),\vs^{\le\alpha}(i))}
P_{(\bfm,\bfr)|\vbfy^{\le\alpha}}( (m',r')|\vy^{\le\alpha}) \notag \\
& \qquad \sum_{\vs^{>\alpha}\in\cS^{(1-\alpha)n}} \prod_{i = \alpha n+1}^n \sqrbrkt{V_{\bfs|\bfx,\bfx',\bfu^{>\alpha}}(\vs^{>\alpha}(i)|\vx_{(m,r)}(i),\vx_{(m',r')}(i),\vv^{>\alpha}(i)) W_{\bfy|\bfx,\bfs}(\vy^{>\alpha}(i)|\vx_{(m,r)}(i),\vs^{>\alpha}(i))} \one_{\cE_4} . \label{eqn:def-q}
\end{align}
We also define $ \ol{Q}_{\vbfy,(\bfm,\bfr),(\bfm',\bfr')} $ as the same expression as \Cref{eqn:def-q} with $ \one_{\cE_4} $ replaced with $ \one_{\cE_4^c} $. 
Note that 
\begin{align}
{Q}_{\vbfy,(\bfm,\bfr),(\bfm',\bfr')} &= P_{\vbfy,(\bfm,\bfr),(\bfm',\bfr')} - \ol{Q}_{\vbfy,(\bfm,\bfr),(\bfm',\bfr')} . \notag 
\end{align}

Suppose $ \dec\colon\cY^n\to[M] $ is the (potentially stochastic) decoder associated with $\cC$. 
Let $ P_{\wh{\bfm}|\vbfy} $ denote the conditional distribution induced by $ \dec $. 
Then 
\begin{align}
&P_{\e,\avg}(\enc,\dec,\jam) 
= \sum_{(m,r),(m',r'),\vy} Q_{\vbfy,(\bfm,\bfr),(\bfm,\bfr')}(\vy,(m,r),(m',r')) \sum_{\wh{m}\ne m} P_{\wh{\bfm}|\vbfy}(\wh{m}|\vy) \notag \\
&\ge \sum_{(m,r),(m',r'),\vy} \paren{P_{\vbfy,(\bfm,\bfr),(\bfm,\bfr')}(\vy,(m,r),(m',r')) - \ol{Q}_{\vbfy,(\bfm,\bfr),(\bfm,\bfr')}(\vy,(m,r),(m',r'))} \one_{\cE_0\cap\cE_1\cap\cE_2\cap\cE_3} %\notag \\
\sum_{\wh{m}\ne m} P_{\wh{\bfm}|\vbfy}(\wh{m}|\vy) \notag \\
&= \sum_{(m,r),(m',r'),\vy} P_{\vbfy,(\bfm,\bfr),(\bfm,\bfr')}(\vy,(m,r),(m',r'))  \one_{\cE_0\cap\cE_1\cap\cE_2\cap\cE_3} \sum_{\wh{m}\ne m} P_{\wh{\bfm}|\vbfy}(\wh{m}|\vy) \notag \\
&\qquad - \sum_{(m,r),(m',r'),\vy} \ol{Q}_{\vbfy,(\bfm,\bfr),(\bfm,\bfr')}(\vy,(m,r),(m',r')) \one_{\cE_0\cap\cE_1\cap\cE_2\cap\cE_3} \sum_{\wh{m}\ne m} P_{\wh{\bfm}|\vbfy}(\wh{m}|\vy) . \notag 
\end{align}
The second term can be upper bounded as follows:
\begin{align}
& \sum_{(m,r),(m',r'),\vy} \ol{Q}_{\vbfy,(\bfm,\bfr),(\bfm,\bfr')}(\vy,(m,r),(m',r')) \one_{\cE_0\cap\cE_1\cap\cE_2\cap\cE_3} \sum_{\wh{m}\ne m} P_{\wh{\bfm}|\vbfy}(\wh{m}|\vy) \notag \\
&\le \sum_{(m,r),(m',r'),\vy} \ol{Q}_{\vbfy,(\bfm,\bfr),(\bfm,\bfr')}(\vy,(m,r),(m',r')) \one_{\cE_0\cap\cE_1\cap\cE_2\cap\cE_3} \notag \\
&= \prob{\cE_0\cap\cE_1\cap\cE_2\cap\cE_3\cap\cE_4^c} \notag \\
&= \prob{\cE_0\cap\cE_1\cap\cE_2\cap\cE_3} \prob{\cE_4^c\condon\cE_0\cap\cE_1\cap\cE_2\cap\cE_3} . \notag %\\
% &\le \prob{\cE_0\cap\cE_1\cap\cE_2\cap\cE_3} e^{-\Omega(\delta^2n)} . \notag 
\end{align}
For the first term, we have
\begin{align}
& \sum_{(m,r),(m',r'),\vy} P_{\vbfy,(\bfm,\bfr),(\bfm,\bfr')}(\vy,(m,r),(m',r'))  \one_{\cE_0\cap\cE_1\cap\cE_2\cap\cE_3} \sum_{\wh{m}\ne m} P_{\wh{\bfm}|\vbfy}(\wh{m}|\vy) \notag \\
&= \sum_{(m',r'),(m,r),\vy} P_{\vbfy,(\bfm,\bfr),(\bfm,\bfr')}(\vy,(m',r'),(m,r)) \one_{\cE_0\cap\cE_1\cap\cE_2\cap\cE_3} \sum_{\wh{m}\ne m'} P_{\wh{\bfm}|\vbfy}(\wh{m}|\vy) \notag \\
&= \sum_{(m,r),(m',r'),\vy} P_{\vbfy,(\bfm,\bfr),(\bfm,\bfr')}(\vy,(m,r),(m',r')) \one_{\cE_0\cap\cE_1\cap\cE_2\cap\cE_3} \sum_{\wh{m}\ne m'} P_{\wh{\bfm}|\vbfy}(\wh{m}|\vy). \notag
\end{align}
Therefore,
\begin{align}
& 2 \sum_{(m,r),(m',r'),\vy} P_{\vbfy,(\bfm,\bfr),(\bfm,\bfr')}(\vy,(m,r),(m',r'))  \one_{\cE_0\cap\cE_1\cap\cE_2\cap\cE_3} \sum_{\wh{m}\ne m} P_{\wh{\bfm}|\vbfy}(\wh{m}|\vy) \notag \\ 
&\ge \sum_{(m,r),(m',r'),\vy} P_{\vbfy,(\bfm,\bfr),(\bfm,\bfr')}(\vy,(m,r),(m',r')) \one_{\cE_0\cap\cE_1\cap\cE_2\cap\cE_3} \paren{ \sum_{\wh{m}\ne m} P_{\wh{\bfm}|\vbfy}(\wh{m}|\vy) + \sum_{\wh{m}\ne m'} P_{\wh{\bfm}|\vbfy}(\wh{m}|\vy) } . \label{eqn:pe-symm} 
\end{align}
Note that
\begin{align}
\sum_{\wh{m}\ne m} P_{\wh{\bfm}|\vbfy}(\wh{m}|\vy) + \sum_{\wh{m}\ne m'} P_{\wh{\bfm}|\vbfy}(\wh{m}|\vy) 
&= \sum_{\wh m} P_{\wh{\bfm}|\vbfy}(\wh{m}|\vy) \paren{ \indicator{\wh m\ne m} + \indicator{\wh m\ne m'} } \notag \\
&\ge \min_{\wh m}\curbrkt{\indicator{\wh m\ne m} + \indicator{\wh m\ne m'}} \notag \\
&= \indicator{m \ne m'}. \label{eqn:indicator-symm} 
\end{align}
Substituting \Cref{eqn:indicator-symm} into \Cref{eqn:pe-symm}, we get
\begin{align}
& 2 \sum_{(m,r),(m',r'),\vy} P_{\vbfy,(\bfm,\bfr),(\bfm,\bfr')}(\vy,(m,r),(m',r'))  \one_{\cE_0\cap\cE_1\cap\cE_2\cap\cE_3} \sum_{\wh{m}\ne m} P_{\wh{\bfm}|\vbfy}(\wh{m}|\vy) \notag \\ 
&\ge \sum_{(m,r),(m',r'),\vy} P_{\vbfy,(\bfm,\bfr),(\bfm,\bfr')}(\vy,(m,r),(m',r')) \one_{\cE_0\cap\cE_1\cap\cE_2\cap\cE_3} \indicator{m\ne m'} \notag \\
&= \sum_{(m,r),(m',r'),\vy} P_{\vbfy,(\bfm,\bfr),(\bfm,\bfr')}(\vy,(m,r),(m',r')) \one_{\cE_0\cap\cE_1\cap\cE_2\cap\cE_3} \label{eqn:remove-indicator} \\
&= \prob{\cE_0\cap\cE_1\cap\cE_2\cap\cE_3} . \notag 
\end{align}
In \Cref{eqn:remove-indicator}, we remove $ \indicator{m\ne m'} $ since the event coincides with $ \cE_2 $. 
Finally, the overall error probability is at least 
\begin{align}
P_{\e,\avg}(\enc,\dec,\jam) 
&\ge \frac{1}{2} \prob{\cE_0\cap\cE_1\cap\cE_2\cap\cE_3} - \prob{\cE_0\cap\cE_1\cap\cE_2\cap\cE_3} \prob{\cE_4^c\condon\cE_0\cap\cE_1\cap\cE_2\cap\cE_3} \notag \\
&\ge \frac{1}{2} \prob{\cE_0\cap\cE_1\cap\cE_2\cap\cE_3} \paren{1 - 2\cdot e^{-\Omega(\delta^2n)}} \notag \\
&= \frac{1}{2} \prob{\cE_0} \prob{\cE_1\condon\cE_0} \prob{\cE_2\cap\cE_3\condon\cE_0\cap\cE_1} \paren{1 - e^{-\Omega(\delta^2n)}} \notag \\
&\ge \frac{c}{2}\cdot \frac{\delta}{4}\cdot \frac{1}{N^2}\paren{\frac{\delta}{2} - \frac{1+\log N}{n}}^{N-1}\cdot \paren{1-e^{-\Omega(\delta^2n)}} . \notag 
\end{align}
This finishes the proof of \Cref{lem:lb-pe}. 
\end{proof}

% \section{Evaluation of the capacity expression for examples}
% \label{sec:examples}

% \subsection{Bit-flip channels}
% \label{sec:example-bitflip}

% \subsection{Erasure channels}
% \label{sec:example-erasure}

\appendices

% !TEX root = causal_general.tex

\section{Basic tools}

% \begin{definition}[Net]
% \label{def:net}
% Given a matrix space $ (\cX,\dist) $ and a constant $ \eta>0 $, an \emph{$ \eta $-net} $\cN$ of $\cX$ w.r.t.\ the metric $ \dist $ is a subset $ \cN\subset\cX $ satisfying: for every $ x\in\cX $, there is $ x'\in\cN $ such that $ \dist(x,x')\le\eta $. 
% \end{definition}

% \begin{lemma}[Size of a net]
% \label{lem:size-net}
% Let $ \cX $ be a finite set. 
% For any constant $ \eta>0 $, there is an $ \eta $-net of $ (\Delta(\cX),\norminf{\cdot}) $ of size at most $ \ceil{|\cX|/(2\eta)}^{|\cX|}\le\paren{|\cX|/(2\eta)+1}^{|\cX|} $. 
% \end{lemma}

We recall some useful facts.

\begin{lemma}[Markov]
\label{lem:markov}
If $X$ is a nonnegative random variable with mean $\expt{X}$ then for any $a > 0$ we have $\prob{X \ge a} \le {\expt{X}}/{a}$.
\end{lemma}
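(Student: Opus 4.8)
The plan is to establish the pointwise inequality $a\cdot\indicator{X\ge a}\le X$, valid for every outcome because $X$ is nonnegative: on the event $\{X\ge a\}$ the left side equals $a$ and the right side is at least $a$, while on the complementary event the left side is $0$ and the right side is nonnegative. This reduces the whole statement to a single deterministic comparison of random variables, after which monotonicity of expectation does the rest.

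Next I would take expectations of both sides. By monotonicity, $\expt{a\cdot\indicator{X\ge a}}\le\expt{X}$, and by linearity the left side equals $a\cdot\expt{\indicator{X\ge a}}$. Since the expectation of an indicator is the probability of the corresponding event, $\expt{\indicator{X\ge a}}=\prob{X\ge a}$, so we obtain $a\cdot\prob{X\ge a}\le\expt{X}$. Dividing through by $a>0$ (which is legitimate precisely because of the hypothesis $a>0$) yields $\prob{X\ge a}\le\expt{X}/a$, as claimed.

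There is essentially no obstacle here; the only point requiring a word of care is that the argument needs $X\ge 0$ for the pointwise bound to hold on the event $\{X<a\}$, and it needs $a>0$ to divide — both are in the hypotheses. If one wanted to be fully rigorous about measurability, one would note that $\{X\ge a\}$ is an event so its indicator is a genuine random variable with a well-defined expectation; this is immediate and I would not belabor it. The whole proof is two lines.
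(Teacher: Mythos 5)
Your proof is correct and is the standard argument for Markov's inequality (the pointwise bound $a\cdot\indicator{X\ge a}\le X$, followed by monotonicity and linearity of expectation). The paper simply recalls this lemma as a known fact without proof, so there is nothing to compare against; your two-line derivation is exactly what would be expected.
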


\begin{lemma}[Data processing]
\label{lem:dpi}
If $ X\to Y\to Z $ is a Markov chain, then $ I(X; Z)\le I( X; Y) $. 
\end{lemma}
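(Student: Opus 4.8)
The plan is to prove the inequality by two applications of the chain rule for mutual information, using the Markov structure to cancel one term.

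First I would expand the joint mutual information $I(X;(Y,Z))$ in the two natural orders. By the chain rule,
\[
I(X;(Y,Z)) = I(X;Y) + I(X;Z\mid Y),
\]
and likewise
\[
I(X;(Y,Z)) = I(X;Z) + I(X;Y\mid Z).
\]
Next I would invoke the defining property of the Markov chain $X\to Y\to Z$: the variables $X$ and $Z$ are conditionally independent given $Y$, i.e.\ $P_{X,Z\mid Y}=P_{X\mid Y}\,P_{Z\mid Y}$, which is equivalent to $I(X;Z\mid Y)=0$. Substituting this into the first display and equating with the second gives
\[
I(X;Y) = I(X;Z) + I(X;Y\mid Z).
\]
Finally, conditional mutual information is nonnegative (it is an average over $Z$ of Kullback--Leibler divergences, each nonnegative by Gibbs' inequality), so $I(X;Y\mid Z)\ge 0$, and therefore $I(X;Z)\le I(X;Y)$, as claimed.

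There is essentially no obstacle here: since all alphabets in this paper are finite (\Cref{itm:assumption-alpha}), the chain rule for mutual information, the nonnegativity of (conditional) mutual information, and the characterization of a Markov chain via conditional independence all hold in their elementary discrete forms. The only point worth stating explicitly is the equivalence between ``$X\to Y\to Z$ is a Markov chain'' and ``$I(X;Z\mid Y)=0$'', which is immediate from writing $I(X;Z\mid Y)$ as a KL divergence that vanishes precisely when the relevant conditional distributions factor.
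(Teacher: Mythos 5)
Your proof is correct. The paper states this lemma without proof in its ``Basic tools'' appendix, treating it as a standard fact; your argument --- expanding $I(X;(Y,Z))$ by the chain rule in both orders, using $I(X;Z\mid Y)=0$ from the Markov property, and invoking nonnegativity of conditional mutual information --- is the canonical derivation and is exactly what the paper implicitly relies on.
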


\begin{lemma}[Chernoff]
\label{lem:chernoff}
Let $ X_1,\cdots,X_n $ be independent $ \{0,1\} $-valued random variables and $ S\coloneqq\sum_{i=1}^nX_i $. 
Then for any $ \delta\in[0,1] $, 
\begin{align}
\prob{\abs{S - \expt{S}} \ge \delta \expt{S}} &\le 2\exp\paren{-\frac{\delta^2}{3}\expt{S}} . \notag 
\end{align}
\end{lemma}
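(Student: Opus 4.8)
The plan is to use the standard exponential-moment (Chernoff--Cram\'er) method: first derive the multiplicative form of the tail bound, then convert it into the stated Gaussian-type estimate. Write $\mu \coloneqq \expt{S} = \sum_{i=1}^n p_i$ with $p_i \coloneqq \expt{X_i} \in [0,1]$. For the upper tail, I would fix $t>0$ and apply Markov's inequality (\Cref{lem:markov}) to the nonnegative random variable $e^{tS}$:
\[
\prob{S \ge (1+\delta)\mu} = \prob{e^{tS} \ge e^{t(1+\delta)\mu}} \le e^{-t(1+\delta)\mu}\,\expt{e^{tS}}.
\]
By independence and the inequality $1+x \le e^x$,
\[
\expt{e^{tS}} = \prod_{i=1}^n \expt{e^{tX_i}} = \prod_{i=1}^n \paren{1 + p_i(e^t-1)} \le e^{\mu(e^t-1)}.
\]
Substituting and choosing $t = \ln(1+\delta)$ to minimize the exponent yields the multiplicative Chernoff bound $\prob{S \ge (1+\delta)\mu} \le \paren{e^\delta / (1+\delta)^{1+\delta}}^\mu$.

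Next I would convert this to the claimed form via the scalar inequality $e^\delta / (1+\delta)^{1+\delta} \le e^{-\delta^2/3}$ for $\delta \in [0,1]$, equivalently $h(\delta) \coloneqq (1+\delta)\ln(1+\delta) - \delta - \delta^2/3 \ge 0$ on $[0,1]$. This follows because $h(0)=0$ and $h'(\delta) = \ln(1+\delta) - \tfrac{2}{3}\delta$ is nonnegative on $[0,1]$ (it vanishes at $0$, is positive at $1$, and is unimodal in $\delta$ since its own derivative $\tfrac{1}{1+\delta} - \tfrac23$ changes sign only once). Hence $\prob{S \ge (1+\delta)\mu} \le e^{-\delta^2\mu/3}$.

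For the lower tail I would repeat the argument with $t<0$: $\prob{S \le (1-\delta)\mu} \le e^{t(1-\delta)\mu}\expt{e^{tS}} \le e^{\mu(e^t-1) - t(1-\delta)\mu}$, and taking $t = \ln(1-\delta)$ gives $\prob{S \le (1-\delta)\mu} \le \paren{e^{-\delta}/(1-\delta)^{1-\delta}}^\mu \le e^{-\delta^2\mu/2} \le e^{-\delta^2\mu/3}$, where the middle step uses the companion estimate $(1-\delta)\ln(1-\delta) \ge -\delta + \delta^2/2$ on $[0,1]$ (visible from the series $(1-\delta)\ln(1-\delta) = -\delta + \delta^2/2 + \delta^3/6 + \cdots$). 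Combining the two tails with a union bound gives $\prob{\abs{S - \expt{S}} \ge \delta\expt{S}} \le 2e^{-\delta^2\expt{S}/3}$, as claimed.

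The only slightly delicate part is the two scalar calculus inequalities that pass from the multiplicative bounds $\paren{e^{\pm\delta}/(1\pm\delta)^{1\pm\delta}}^\mu$ to the clean $e^{-\delta^2\mu/3}$ form; everything else is the textbook exponential-moment computation. Since \Cref{lem:chernoff} is invoked only as a black box (for concentration of jamming costs in the converse), one could alternatively just cite a standard reference for these bounds rather than reproving them.
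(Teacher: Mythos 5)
Your proof is correct: the exponential-moment (Chernoff--Cram\'er) derivation of the multiplicative bounds, the two calculus inequalities $(1+\delta)\ln(1+\delta)-\delta \ge \delta^2/3$ and $(1-\delta)\ln(1-\delta) \ge -\delta+\delta^2/2$ on $[0,1]$, and the final union bound are all standard and check out. The paper itself gives no proof of this lemma --- it is merely recalled in the ``Basic tools'' appendix as a known fact and used as a black box in the converse (concentration of the jamming cost), so your closing remark that one could simply cite a standard reference is exactly the paper's own treatment. One cosmetic slip: in the lower-tail step the intermediate bound should read $\prob{S \le (1-\delta)\mu} \le e^{-t(1-\delta)\mu}\,\expt{e^{tS}}$ (the exponent you wrote, $e^{t(1-\delta)\mu}$, has the wrong sign), but the expression you carry forward, $e^{\mu(e^t-1)-t(1-\delta)\mu}$, is the correct one, so nothing downstream is affected.
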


\bibliographystyle{alpha}
\bibliography{IEEEabrv,ref}

\end{document}